\documentclass[aps,pra,onecolumn,nobibnotes,superscriptaddress,nofootinbib]{revtex4-2}
\usepackage{graphicx} 
\usepackage{caption,graphics,physics}
\usepackage{subcaption}
\usepackage{amsmath,xcolor,bm}
\usepackage{qcircuit}
\usepackage{mathtools,amsmath,amsthm,amsfonts,amssymb,amscd}
\usepackage{hyperref}
\usepackage[ruled,vlined]{algorithm2e}
\usepackage{algpseudocode}
\usepackage{multirow}
\usepackage{array}
\usepackage[capitalize]{cleveref}

\hypersetup{
    colorlinks=true,
    linkcolor=red,
    filecolor=magenta,      
    urlcolor=cyan,
}
\usepackage{geometry}

\newtheorem{theorem}{Theorem}

\newtheorem{observation}{Observation}
\newtheorem{lemma}{Lemma}

\newtheorem{example}{Example}
\newtheorem{remark}{Remark}
\newtheorem{proposition}{Proposition}

\newtheorem{definition}{Definition}

\newtheorem{corollary}{Corollary}
\newtheorem{problem}{Problem}

\newlength{\ketketwidth}
\newlength{\ketwidth}

\Crefname{appsec}{Appendix}{Appendices}

\begin{document}

\title{Particle-Number-Preserving Quantum Signal Processing for Thermal Properties: Canonical-Ensemble Estimation and Grand-Canonical Reconstruction}

\author{Taehee Ko}
\email{kthmomo@kias.re.kr}
\affiliation{School of Computational Sciences, Korea Institute for Advanced Study, Seoul, Republic of Korea}

\author{Sungbin Lim}
\email{sungbin@korea.ac.kr}
\affiliation{Department of Statistics, Korea University, Seoul, Republic of Korea}
\affiliation{National Research Council of Science \& Technology, MSIT, Republic of Korea}

\author{Sangkook Choi}
\email{sangkookchoi@kias.re.kr}
\affiliation{School of Computational Sciences, Korea Institute for Advanced Study, Seoul, Republic of Korea}

\begin{abstract}
We develop an end-to-end framework for estimating canonical thermal properties of particle-number-conserving second-quantized Hamiltonians using quantum simulations. The framework integrates initial state preparation with mixed Dicke states, particle-number-preserving Trotterized simulation, and generalized quantum signal processing to estimate canonical partition functions and observables. Building on this canonical scheme, we introduce a grand-canonical reconstruction scheme in which the same quantum measurement data can be reused for arbitrary chemical potentials through classical reweighting, thereby avoiding repeated quantum circuit executions as the chemical potential is varied. We prove the existence of system-size-independent temperature regimes in which both the canonical and reconstructed grand-canonical schemes have lower asymptotic time complexity than classical Lanczos methods, up to polynomial factors, for physically relevant Hamiltonians. We numerically validate our theoretical results for the two-dimensional Hubbard model in both weak- and intermediate-coupling regimes. 

\end{abstract}

\maketitle

\section{Introduction}

Particle-number conservation is a fundamental symmetry of various quantum many-body Hamiltonians, including those describing molecular systems~\cite{petras2020using,harsha2020wave,liu2018ab,li2025dissipative,shen2020finite,sokolov2021microcanonical}, ultracold atoms~\cite{kasper2017implementing,cao2017unified}, and lattice electronic systems such as twisted bilayer graphene~\cite{xie2021twisted}. When the particle number is fixed, the canonical ensemble provides the standard framework for calculating thermal properties: the partition function and observables are evaluated within the selected fixed-particle-number subspace. Classical canonical methods have been developed~\cite{shen2023stable,nocera2016symmetry,prelovvsek2013ground,barthel2016matrix}, among which Lanczos-based approaches~\cite{prelovvsek2013ground} provide rigorous complexity guarantees under general parameter regimes. However, their computational cost scales linearly with the dimension of the selected sector in general, which grows exponentially with system size.

In contrast to classical algorithms, a rigorous quantum framework for estimating canonical thermal properties within a fixed-particle-number sector has not yet been established. For example, several algorithms have been introduced within the framework of Lindbladians~\cite{rouze2026optimal,vsmid2025polynomial,chen2023quantum,tong2025fast,li2025dissipative}, but the few designed for fermionic systems are not directly applicable to estimating canonical thermal properties. In particular, the Lindbladian constructions in \cite{vsmid2025polynomial,tong2025fast} do not preserve particle-number symmetry. Although a number-preserving Lindbladian was proposed in \cite{li2025dissipative}, their theoretical guarantees are restricted to estimating ground-state properties of Hartree-Fock models. In addition to these Lindbladian algorithms, algorithms based on quantum singular value transformation (QSVT) or linear combinations of unitaries (LCU) enable the estimation of grand-canonical thermal properties~\cite{gilyen2019quantum,tosta2024randomized,yoneta2025optimal,holmes2022quantum,bravyi2022quantum,coopmans2023predicting}. These approaches require a block encoding restricted to a fixed-particle-number sector, which has been established only recently~\cite{liu2025block}. As an LCU-based approach with a Hamiltonian-simulation input model, \cite{goh2026direct} utilizes fixed-number initialization, but without particle-number-preserving Trotter circuits. Moreover, existing grand-canonical approaches generally retain complexity dependence on parameters such as the chemical potential and interaction strengths.

These observations raise two central questions of this work:
\begin{itemize}
\item Can we construct an end-to-end quantum framework for estimating canonical-ensemble properties within a selected particle-number sector?

\item Can such a framework be extended to estimating grand-canonical thermal properties and provide an asymptotic speedup over classical Lanczos-based methods for general model parameters at constant temperatures?

\end{itemize}
In this work, we answer both questions in the affirmative by proposing an end-to-end quantum algorithm for estimating canonical thermal properties of the second-quantized model \eqref{ex: models}. Building on this algorithm, we further develop a grand-canonical reconstruction scheme that reuses the same sector-resolved quantum data across arbitrary chemical potentials. As a result, for physically relevant models, we identify system-size-independent constant-temperature regimes in which the proposed quantum algorithms asymptotically outperform classical methods based on the Lanczos algorithm. For the grand-canonical ensemble, the recent work \cite{qprk-k7wn} establishes the efficiency of a classical algorithm based on sampling trees in the high-temperature regime. However, for the 2D Hubbard model considered in \cref{sec: numerical result}, we numerically find that the corresponding rigorous temperature thresholds lie far above physically relevant regimes, as summarized in \cref{tab:mus_combined}.

Our contributions are summarized as follows:
\begin{enumerate}
    \item We propose a quantum algorithm for the canonical ensemble (\cref{alg: cano}) that unifies several independent quantum computing techniques (\cref{def: Dicke,Initial state} the GQSP circuit in \cref{fig: GQSP}). The required gate complexity per circuit sample scales polynomial. Our approach guarantees an exponentially improved scaling of the number of Trotter steps compared with that obtained from a textbook analysis \cite{nielsen2010quantum,dong2022ground}, as discussed in \cref{sec: general idea}. Moreover, our analysis  finds that the Trotterized Hamiltonian simulation in our algorithm is defined with a normalization factor corresponding to the fixed-particle subspace not the full Hilbert space. This is consistent with the case of block encodings \cite{liu2025block}. In addition, canonical-ensemble gate and circuit-sampling complexity bounds are independent of the chemical potential $\mu$ (\cref{thm: main 1}).

    \item We introduce a grand-canonical reconstruction scheme (\cref{alg: grand cano new}) that converts canonical-sector estimates into grand-canonical partition functions and observables through classical reweighting. The advantage of our grand-canonical scheme is that it does not require re-execution of the quantum circuit for each trial value of $\mu$. Therefore, it is particularly appealing for practitioners who need to determine an appropriate chemical potential through an extensive scanning procedure, as discussed in \cite{shen2020finite}. Moreover, quantum circuit and circuit-sampling complexities for the grand-canonical scheme are independent of $\mu$, indeed being comparable to the canonical scheme corresponding to a fixed-particle subspace (\cref{cor: new grand scheme}).

    \item For physically relevant Hamiltonians whose spectral norm scales linearly with system size, we prove the existence of system-size-independent temperature thresholds above which our canonical and grand-canonical algorithms have lower asymptotic time complexity than classical methods based on the Lanczos algorithm \cite{prelovvsek2013ground,chen2023krylov,aichhorn2003low} (\cref{obs 2}). We numerically validate this scaling comparison for a two-dimensional Hubbard model in \cref{sec: numerical result}. Although a classical algorithm based on sampling trees has recently been established \cite{chen2025convergence}, the parameter regimes in which its theoretical guarantee becomes practically relevant remain unclear at the numerical level. We therefore use the Lanczos method as the main classical baseline for comparison with our schemes.
\end{enumerate}

\section{Related work}

While a plenty of frameworks for many-body systems in classical algorithms are there (e.g. see \cite{leblanc2015solutions,gull2011continuous,sandvik2010computational,shen2023stable,nocera2016symmetry,barthel2016matrix} and references therein), generally-applicable rigorous guarantees of time complexity for interacting fermionic systems are observed very scarcely. To our knowledge, the Lanczos method with stochastic trace estimation or its variants \cite{chen2023krylov,epperly2024xtrace} and  randomized sampling algorithms by Tong and his colleagues \cite{chen2025convergence,qprk-k7wn} are only the cases. The latter results are applicable for the grand canonical ensemble and shows a polynomial-time guarantee for some parameter regimes, while the former needs an exponential time scaling linearly with the Hilbert space dimension or a fixed particle subspace without restrictions on coupling strength or temperature applicable for both canonical and grand canonical ensembles. Similar to the Lanczos approach, our framework provides both canonical and grand-canonical schemes. However, the complexity of our quantum algorithms varies with the inverse temperature, whereas the computational cost of the Lanczos algorithm scales with the dimension of the full Hilbert space or the selected particle-number sector, regardless of temperature. Accordingly, there exist constant-temperature regimes in which our algorithms are more efficient than the Lanczos method in terms of time complexity up to polynomial factors, as shown in \cref{obs 2}.

Previous quantum algorithms based on LCU or QSP techniques usually rely on access to a block-encoding of the Hamiltonian \cite{gilyen2019quantum,tosta2024randomized,yoneta2025optimal,holmes2022quantum,bravyi2022quantum,coopmans2023predicting,de2024partition}. However, these results do not directly produce an algorithm for the estimation of canonical thermal properties, as they did not develop an explicit circuit construction of a block encoding for a fixed-particle subspace. It is similar for other previous results that use different input models, including a walk operator \cite{chowdhury2021computing}, Pauli sampling \cite{wang2024qubit}, and a Trotter formula \cite{chakraborty2025quantum}.  Although not demonstrated in \cite{liu2025block}, combining its fixed-sector block encoding with QSVT techniques \cite{gilyen2019quantum} and initializing it with the proposed state (\cref{def: Dicke}) should yield a canonical scheme.  Similar to our approach (\cref{alg: cano}), fixed-sector initialization is also employed in the LCU-based method of \cite{goh2026direct}. However, their construction does not incorporate number-preserving Trotter circuits and, moreover, can incur at least quadratically higher circuit-sampling complexity than ours, reflecting a common drawback of incoherent LCU methods \cite{tosta2024randomized,wang2024qubit,chowdhury2021computing} compared with coherent QSP implementations \cite{de2024partition,gilyen2019quantum}.

Some of the aforementioned algorithms are able to produce thermal properties for the grand canonical ensemble with rigorous guarantees. Especially for the general case where the strength of coupling interaction can be arbitrary, Lanczos approach \cite{chen2023krylov,epperly2024xtrace}, a randomized classical algorithm \cite{qprk-k7wn}, and some of QSP and LCU approaches mentioned above can be compared to our approaches (\cref{alg: grand cano new,alg: grand cano prev}).  For \cref{alg: grand cano prev}, its complexity is comparable to QSP approaches without amplitude amplifications, which is consistent with previous results \cite{gilyen2019quantum,de2024partition}.  Meanwhile, \cref{alg: grand cano new} achieves chemical-potential-independent circuit complexity and circuit sampling complexity corresponding to a fixed particle number. That is, the Trotter circuits in our construction (\cref{alg: grand cano new}) do not involve the number operator like the methods mentioned above, thereby yielding a  reduction of circuit complexity for Hamiltonian simulation. While the quantum Lindbladian algorithms in \cite{vsmid2025polynomial,tong2025fast} enable the estimation of grand-canonical properties, their results focus on extreme coupling-strength regimes.

\section{Main result}

We consider the second-quantized Hamiltonians for a system of $L$ spin orbitals (or spatial modes),
\begin{equation}\label{ex: models}
    \begin{split}
    H
    =
    \sum_{i,j}
    h_{i,j}
    c_{i}^\dagger c_{j}
    +
    \frac{1}{2}
    \sum_{i,j,u,v}
    g_{i,j,u,v}
    c_{i}^\dagger c_{j}^\dagger
    c_{u} c_{v},    
    \end{split}
\end{equation}
where $h_{ji}=h_{ij}^{*}$ and $g_{vuji}=g_{ijuv}^{*}$. For a fixed particle-number sector $s$, the canonical density operator is
\begin{equation}
    \rho_{L,s}(\beta)
    =
    \frac{e^{-\beta H(L,s)}}{\tr\!\left(e^{-\beta H(L,s)}\right)}.
    \label{eq:canonical-density-operator}
\end{equation}
The two target quantities are the thermal expectation value and the canonical partition function,
\begin{align}
    \langle O\rangle_{\beta,s}
    &:= \tr\!\left(\rho_{L,s}(\beta)O\right),
    \label{eq:canonical-thermal-expectation}\\
    Z_{L,s}(\beta)
    &:= \tr\!\left(e^{-\beta H(L,s)}\right).
    \label{eq:canonical-partition-function}
\end{align} 
Here, $H(L,s)$ denotes the restriction of the full many-body Hamiltonian to the $s$-particle sector. The sector label $s$ takes one of two forms depending on the symmetries of the target system: $s=k$ denotes the total particle-number sector, while $s=(k,k')$ denotes the spin-resolved sector with $k$ spin-up and $k'$ spin-down particles when the corresponding number operators independently commute with the Hamiltonian. As is often the case in practice, we focus on observables $O$ that can be expressed as linear combinations of a polynomial number of Pauli operators, that is,
\begin{align}\label{eq: observable}
    O=\sum_{a=1}^{\mathcal{O}(\mathrm{poly}(L))}c_aP_a, \quad c_a=\mathcal{O}(1),\quad P_a:\text{Pauli operator}.
\end{align}We define the problem as the following. 
\begin{problem}\label{def: problem}
Given a Hamiltonian $H$ in \eqref{ex: models} and an inverse temperature
$\beta \geq 0$, let $\epsilon\in(0,1)$ be a target precision, and let
$\eta\in(0,1)$ be a failure probability. For a fixed particle number sector $s$, let $H(L,s)$ denote the restriction of $H$ to the subspace corresponding to the $s$-particle sector. Let $O$ be 
an observable. Define
    $Z_{L,s}(\beta)
    :=
    \tr\!\left(
        e^{-\beta H(L,s)}
    \right)$. The goal is to obtain $\widehat{Z}$ and $\widehat{o}$ such that
\begin{align}
    \abs{\widehat{Z} - Z_{L,s}(\beta)}
    &\leq
    \epsilon Z_{L,s}(\beta), \\
    \abs{\widehat{o} -\langle O\rangle_{\beta,s}}
    &\leq
    \epsilon,
\end{align}
with probability at least $1-\eta$.
\end{problem}This choice of error metrics follows the standard conventions for estimating thermal properties \cite{bravyi2022quantum,wang2024qubit}. Our main result is summarized as follows,
\begin{theorem}\label{thm: main 1}
Assume a normalization factor $\lambda_{L,s}>0$ such that $\norm{\frac{H(L,s)}{\lambda_{L,s}}}_2 \le 1-\delta$ for some $\delta \in (0,1)$, where $\norm{\cdot}_2$ denotes the spectral norm of a matrix. Using the first-order Trotter formula  with a GQSP circuit in \cref{fig: GQSP}, \cref{def: problem} can be solved with the following:
\begin{itemize}
\item $2L$ (for the $s=k$ sector) or $4L$ (for the $s=(k,k')$ sector) qubits for the initial state preparation 
\item $L$ (for the $s=k$ sector) or $2L$ (for the $s=(k,k')$ sector) system qubits, and a single ancilla qubit for the GQSP
\item $r = \Omega\left(\Lambda_{H(L,s)}^2\max\left[\frac{1}{\lambda_{L,s}\delta},\; \frac{\beta }{\log(1+\epsilon)}\right]\right)$ Trotter steps per Hamiltonian simulation
\item $\mathcal{O}\left(r L^4\left(\ln\left(\frac{1}{\epsilon}\right)+\beta\lambda_{L,s} \right)\frac{1}{\delta}\right)$ gate complexity
\item $\mathcal{O}\left(C\frac{d_s e^{\beta\lambda_{L,s}}}{Z_{L,s}(\beta)\epsilon^2}\log\frac{1}{\eta}\right)$ circuit sampling complexity, where
\begin{align*}
    C = \begin{cases}
        1 & \text{for estimating \(Z_{L,s}(\beta)\)} \\
        \left( \sum_{a=1}^{\mathcal{O}(\mathrm{poly}(L))} \abs{c_a} \right)^2 & \text{for estimating \(\langle O\rangle_{\beta,s}\)}
    \end{cases}.
\end{align*}
\end{itemize}Here, the factor $\Lambda_{H(L,s)}$ is defined in \eqref{eq: lambda_H_s}
\end{theorem}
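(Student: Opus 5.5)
The proof splits into four components: state preparation, Trotter error, GQSP polynomial approximation, and sampling. Each yields one bullet of \cref{thm: main 1}.

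\emph{State preparation.} I would invoke \cref{def: Dicke,Initial state}: a purified mixed Dicke state on $2L$ (resp.\ $4L$) qubits whose reduced state on the $L$ (resp.\ $2L$) system qubits is $\Pi_s/d_s$, where $\Pi_s$ is the projector onto the $s$-particle sector. Since the Trotter factors of \eqref{ex: models} are built from number-preserving Givens/fermionic-swap type gates, every circuit acts block-diagonally and stays in the sector. The quantities to be measured are then
\[
\tr\!\bigl(\Pi_s f(H)\bigr)/d_s \quad\text{and}\quad \tr\!\bigl(\Pi_s f(H) O f(H)^\dagger\bigr)/d_s ,
\]
with $f(x)\approx e^{-\beta\lambda_{L,s}(x+1)/2}$ applied to $H(L,s)/\lambda_{L,s}$. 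Then
\[
Z_{L,s}(\beta)=d_s e^{\beta\lambda_{L,s}}\,\tr\!\bigl(\Pi_s |f|^2\bigr)/d_s ,
\]
and $\langle O\rangle_{\beta,s}$ is the ratio of the two traces.

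\emph{Trotter error.} Let $U=e^{-iH(L,s)/\lambda_{L,s}}$ and let $U_r$ be its first-order Trotterization with $r$ steps. By the commutator bound restricted to the sector, $\|U-U_r\|\le \Lambda_{H(L,s)}^2/(\lambda_{L,s}^2 r)$, with $\Lambda$ as in \eqref{eq: lambda_H_s}. I would view $U_r=e^{-i\tilde H}$ for an effective Hamiltonian $\tilde H$ obtained by taking the logarithm, which is valid when the error is small. This gives $\|\tilde H-H\|\lesssim \Lambda^2/(\lambda_{L,s} r)$. Two conditions then follow.
\begin{itemize}
\item Keeping $\|\tilde H/\lambda_{L,s}\|\le 1-\delta/2$ requires $r\gtrsim \Lambda^2/(\lambda_{L,s}\delta)$, so the GQSP spectral gap is preserved.
\item Requiring $e^{\beta\|\tilde H-H\|}\le 1+\epsilon$ requires $r\gtrsim \beta\Lambda^2/\log(1+\epsilon)$. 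This gives the multiplicative error $\epsilon Z$ through $|\tr e^{-\beta\tilde H}-\tr e^{-\beta H}|\le (e^{\beta\|\Delta\|}-1)Z$, and similarly for the observable ratio by Weyl/Duhamel.
\end{itemize}
Together these give the $\max$ in the bound on $r$. This is exactly where the sector normalization $\lambda_{L,s}$ enters rather than a full-space norm.

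\emph{GQSP polynomial.} I would treat $U_r$ as the signal unitary in \cref{fig: GQSP}. The target is a Laurent polynomial in $e^{i\theta}$ approximating $e^{-\beta\lambda_{L,s}(1+\sin\theta)/2}$, or an equivalent function of $\theta\in[-(1-\delta),1-\delta]$ where $\arcsin$ is analytic. Jacobi--Anger/Chebyshev truncation with the $\delta$-margin gives degree $\mathcal O\bigl((\beta\lambda_{L,s}+\ln(1/\epsilon))/\delta\bigr)$ and sup-norm at most $1$ on the circle. The GQSP existence theorem then yields phases with one ancilla. Each Trotter step costs $\mathcal O(L^4)$ gates, from the $\mathcal O(L^4)$ terms $g_{ijuv}$ compiled with Jordan--Wigner. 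Multiplying degree $\times\, r\times L^4$ gives the gate bound.

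\emph{Sampling.} A Hadamard test, or measuring the ancilla, produces a bounded random variable with mean $p=Z_{L,s}e^{-\beta\lambda_{L,s}}/d_s$. To estimate $Z$ to relative error $\epsilon$ we need a relative estimate of $p$. Chernoff/Hoeffding with median-of-means gives $\mathcal O\bigl(\log(1/\eta)/(p\epsilon^2)\bigr)$ samples, which is the stated bound with $C=1$. For $O$ I would sample Pauli terms with probability $|c_a|/\sum|c_a|$, an importance-sampling estimator. Combined with the ratio estimator, this gives variance scaling $(\sum|c_a|)^2/p$, hence $C$.

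\emph{Main obstacle.} I expect the hardest part to be the Trotter step: making the effective-Hamiltonian argument rigorous uniformly in $\beta$, i.e.\ converting unitary error into a \emph{multiplicative} error in $e^{-\beta\tilde H}$ without an $e^{\beta\lambda}$ blowup. The plan is to bound $\|\tilde H-H\|$ directly through the BCH/log series, which needs $\|U-U_r\|<1$ and a branch-cut argument using the $\delta$-margin. Only after that would I apply the exact GQSP function to $\tilde H$.
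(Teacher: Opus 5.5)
Your overall architecture is the paper's, but the GQSP-polynomial step, which is exactly what fixes the factor $e^{\beta\lambda_{L,s}}$ in the sampling bound, does not go through as written.

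With your signal unitary $U=e^{-iH(L,s)/\lambda_{L,s}}$ the eigenphases are $\theta_j=-E_j/\lambda_{L,s}\in[-(1-\delta),1-\delta]$. A Jacobi--Anger expansion of $e^{-\beta\lambda_{L,s}(1+\sin\theta)/2}$ then returns a function of $\sin(E_j/\lambda_{L,s})$, not $e^{-\beta(\lambda_{L,s}+E_j)/2}$. That expansion suits a qubitization-type input whose eigenphases satisfy $\sin\theta_j=E_j/\lambda_{L,s}$, not the Hamiltonian-simulation input used here.

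Your fallback (the correct function of $\theta$, expanded through $\arcsin$) fixes the function but does not give the claimed sup-norm $1$ on the circle at your time scale. The phases occupy only $[-1,1]$ of the window $[-\pi/2,\pi/2]$ on which the Taylor-to-Fourier conversion works. Writing $\theta=\pi x/2$, the target $e^{-\beta\lambda_{L,s}/2}e^{\pi\beta\lambda_{L,s}x/4}$ has Taylor $\ell_1$-norm $e^{(\pi/4-1/2)\beta\lambda_{L,s}}$. The resulting Fourier series tracks the continuation of the target beyond the spectral arc and is exponentially larger than $1$ there. Normalizing by this bound gives success probability $Z_{L,s}(\beta)e^{-\pi\beta\lambda_{L,s}/2}/d_s$, hence $\mathcal{O}\bigl(d_s e^{\pi\beta\lambda_{L,s}/2}\log(1/\eta)/(Z_{L,s}(\beta)\epsilon^2)\bigr)$ samples rather than the stated bound.

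The paper avoids this in two steps:
\begin{itemize}
\item It takes $U=e^{i\pi H/(2\lambda_{L,s})}$, so the phases fill $[-\tfrac{\pi}{2}(1-\delta),\tfrac{\pi}{2}(1-\delta)]$.
\item It uses \cref{lem: fourier approx exp}, which gives $\norm{p_{\frac{\beta}{2}}}_1\le e^{\beta\lambda_{L,s}/2}$, and then normalizes by $\norm{p_{\frac{\beta}{2}}}_1$ in \cref{lem: state GQSP}.
\end{itemize}
In your shifted language, $e^{-\beta\lambda_{L,s}(1+x)/2}$ with $x=E/\lambda_{L,s}$ has Taylor $\ell_1$-norm exactly $1$, so its Fourier series is bounded by $1$ on the whole circle. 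Multiplying by a window function of degree $\mathcal{O}((\beta\lambda_{L,s}+\ln\frac{1}{\epsilon})/\delta)$ would also rescue your time scale.

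You should also state the required approximation accuracy. It must be of order $\epsilon Z_{L,s}(\beta)e^{-\beta\lambda_{L,s}}/d_s$. Its logarithm is $\mathcal{O}(\ln\frac{1}{\epsilon}+\beta\lambda_{L,s})$ only because $Z_{L,s}(\beta)/d_s\ge e^{-\beta\lambda_{L,s}}$, as in \cref{cor: main}.

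The other differences are routes, not gaps. Bounding $\norm{U-U_r}$ by the sector-restricted commutator estimate and then taking a logarithm is a valid alternative to the paper's direct per-step effective-Hamiltonian bound (\cref{lem: p_order_generator}, from Watson--Watkins). You need to justify the logarithm by the resolvent (Cauchy-integral) formula. Once $\norm{U-U_r}<1$, both unitaries have spectra in an arc far from $-1$, so $\norm{\log U_r-\log U}=\mathcal{O}(\norm{U_r-U})$ with an absolute constant; the $\delta$-margin plays no role at the branch cut.

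Two tools you mention do not work as stated:
\begin{itemize}
\item A BCH series should be applied per Trotter step, as the paper does, not to the full-time product, where $\norm{H(L,s)/\lambda_{L,s}}$ is of order one.
\item A $\log(I+X)$ expansion of $U^{-1}U_r$ does not give $\log U_r-\log U$, since $U$ and $U_r$ do not commute.
\end{itemize}
For sampling, plain Hoeffding would give $1/(p^2\epsilon^2)$. The $1/(p\epsilon^2)$ rate needs the multiplicative Chernoff bound, or Bernstein with $\mathrm{Var}\le p$ as in the paper.
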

The proof of the theorem is shown in \cref{Appendix: thm proofs}.  

In \cref{thm: main 1}, we assume a normalization factor $\lambda_{L,s}$ and we can find a conservative factor by using the Gershigorin's theorem. For a tighter factor, one may utilize  semi-definite programming (SDP) relaxations 
\cite{baumgratz2012lower,mazziotti2011large}.

\subsection{Algorithm overview}\label{sec: general idea}

We provide an overview of our numerical schemes for canonical and grand-canonical thermal property estimations below. We fix a particle-number sector $s$ and let $\Pi_s$ be its projector. We define $d_s=\tr(\Pi_s)$ be the dimension of fixed-particle subspace.  For $s=k$, the known sector dimension is $d_s=\binom{L}{k}$; for a spin-resolved sector $s=(k,k')$, it is $d_s=\binom{L}{k}\binom{L}{k'}$.

\subsubsection{Canonical scheme}

\paragraph{Prepare the mixed Dicke state as the initial state}
We initialize the system in the mixed Dicke state $\omega_{L,s}=\Pi_s/d_s$ defined in \cref{def: Dicke}. This state is the maximally mixed state corresponding to the sector $s$. To encode this state into a quantum circuit, we prepare a purification of $\omega_{L,s}$ as detailed in \cref{Initial state}.

\paragraph{Construct a block encoding of the exponential matrix function using a Trotterized GQSP}
With the initial state $\omega_{L,s}$, we construct a circuit using the GQSP \cite[Theorem 6]{motlagh2024generalized}, which approximately block-encode $e^{-\frac{\beta H}{2}}$ up to a subnormalization factor. The circuit involves a single ancilla qubit with Hamiltonian-simulation input model, as depicted in \cref{fig: GQSP}. We choose a normalization factor, $\lambda_{L,s}$, such that $\norm{\frac{H(L,s)}{\lambda_{L,s}}}_2<1$. This normalization factor is introduced when compiling the Hamiltonian simulation using a Trotter circuit, which is reasonable due to the choice of our initial state (\cref{lem: state GQSP}). To block-encode $e^{-\frac{\beta H}{2}}$, we find a subnormalization factor by finding a corresponding Fourier-series approximation (\cref{lem: fourier approx exp}) and evaluating the absolute sum of Fourier coefficients as in \cref{lem: state GQSP}.  Then, the canonical thermal properties in \cref{def: problem} are formulated as the probability of observing state $\ket{0}$ in the ancilla qubit and the usual observable measurement conditioned on that event as follows,
\begin{align}
    P(0) &= \tr\big(f(e^{iH})^\dag \omega_{L,s} f(e^{iH})\big) \approx \frac{ e^{-\beta\lambda_{L,s}} \tr\big(e^{-\beta H(L,s)}\big)}{d_s}, \label{eq: true p0} \\
    \langle O \rangle &= \frac{\tr\big(f(e^{iH})^\dag \omega_{L,s} f(e^{iH}) O\big)}{P(0)} \approx \frac{\tr\big(e^{-\beta H(L,s)}O\big)}{\tr\big(e^{-\beta H(L,s)}\big)} \label{eq: true O}.
\end{align}Meanwhile, standard Trotter circuits generally do not preserve the number symmetry and the above estimates may involve \emph{biases} from outside the target sector $s$. To address this, we construct number-symmetry-preserving Trotter circuits by extending the unitary coupled cluster ansatz used in variational quantum algorithms \cite{romero2019strategies} to both symmetric and antisymmetric forms. We specify the corresponding Trotter-circuit blocks (\cref{Trotter circuits}). Notably, any finite product of these blocks also commutes with the number operator $N$. Equivalently, the finite Trotter product can be viewed as an exact evolution under an effective Hamiltonian $\widetilde{H}$. As proved in \cref{lem: effective Ham}, $[\widetilde{H},N]=0$, so the Trotter error is controlled within the target sector $s$.

\paragraph{Estimate the partition function and observables}To perform the task in \cref{def: problem}, we follow standard approaches based on the Monte Carlo sampling as outlined in \cref{alg: cano}. Here we briefly describe the procedure. 

We execute the circuit $N_{\mathrm{shots}}$ times. Let $A_j=1$ if shot $j$ returns the GQSP ancilla in $\ket{0}$ and let $A_j=0$ otherwise, and let $N_0=\sum_{j=1}^{N_{\mathrm{shots}}}A_j$ be the number of successful shots. We consider the sample mean estimator, $\widehat{P}_s(0)=N_0/N_{\mathrm{shots}}$. This unbiased estimate approximates the partition function,
\begin{equation}
Z_{L,s}(\beta)\approx d_se^{\beta\lambda_{L,s}}\widehat{P}_s(0)=d_se^{\beta\lambda_{L,s}}\frac{N_0}{N_{\mathrm{shots}}}.
\label{eq:partition-estimator-overview}
\end{equation}
For simplicity, we consider the case that $O$ is a Pauli operator. To estimate $\langle O\rangle_{\beta,s}$ in \cref{def: problem}, we measure both the ancilla qubit and the system register. Then, we perform post-processing on measurements such that for the measurements where $\ket{0}$ is observed in the ancilla qubit, we collect the estimates of observable   $O$. Let $M_0$ be the number of such occurrences and $o_j$ denote the estimated observable in shot $j$. Then, we obtain a good approximation of the observable from the sample mean,
\begin{equation}
    \langle O\rangle_{\beta,s}\approx \frac{1}{M_0}\sum_{j=1}^{M_0}o_j.
\end{equation}
If $O$ is a sum of Pauli operators, we repeat this procedure for the Pauli operators respectively.

\paragraph{Scaling of the number of Trotter steps}
The overall estimation error is governed by the approximation relation $f(e^{i\widetilde{H}}) \approx e^{-\frac{\beta}{2}\widetilde{H}} \approx e^{-\frac{\beta}{2}H}$. The first approximation is determined by the degree of the Fourier-series approximation. The second approximation error is controlled by the number of Trotter steps; for the first-order formula with $r$ steps and simulation time $\tau$, we have $\norm{\widetilde{H}-H}_2 \le \Lambda_H^2\frac{\tau}{r}$ (\cref{lem: p_order_generator}). By using Weyl's inequality, achieving precision $\epsilon$ (\cref{def: problem}) requires a polynomial number of Trotter steps:
\begin{align}\label{eq: our trotter step}
    r = \Omega\left(
    \Lambda_H^2\max\left[\frac{1}{\lambda_{L,s}\delta}, \frac{\beta}{\log(1+\epsilon)}\right]\right).
\end{align}
However, if one applies a standard approach \cite[Appendix C]{dong2022ground} to the problem considered, an exponential scaling of the number of Trotter steps is required. As this approach is based on the repetitive use of triangle inequalities \cite[eq (4.63)]{nielsen2010quantum}, it requires a precision-dependent number of Trotter steps. Thus, $\mathcal{O}((\epsilon P(0))^{-\frac{1}{p}})$ Trotter steps are needed due to the relative-error convention in \cref{def: problem}, and the probability $P(0)$ scales exponentially with $\beta$ and the subspace spectral norm estimate $\lambda_{L,s}$. In contrast, the number of Trotter steps obtained from our approach in \eqref{eq: our trotter step} is independent of $P(0)$, as detailed in \eqref{eq: Z analysis}.

\subsubsection{Grand-canonical reconstruction}

Building on the canonical scheme (\cref{alg: cano}), we develop a grand-canonical scheme (\cref{alg: grand cano new}). Let $\mathcal{S}$ be the set of particle sectors and let $n_s$ be the particle number in sector $s$, namely $n_s=k$ for $s=k$ and $n_s=k+k'$ for $s=(k,k')$. We denote $M_{O,s}(\beta)=\tr(e^{-\beta H(L,s)}O)$. We first estimate the canonical thermal properties $Z_{L,s}(\beta)$ and $M_{O,s}(\beta)$ in every relevant sector, which are obtained via \cref{alg: cano} from measurement data. Then, we exploit the following classical-reweighting representations of the grand partition function and grand-canonical observable:
\begin{equation}
\mathcal{Z}_L(\beta,\mu)=\sum_{s\in\mathcal{S}}e^{\beta\mu n_s}Z_{L,s}(\beta),
\end{equation}
and
\begin{equation}
\begin{split}
\langle O\rangle_{\beta,\mu}^{\mathrm{GC}}
&:=\frac{\sum_{s\in\mathcal{S}}e^{\beta\mu n_s}M_{O,s}(\beta)}{\mathcal{Z}_L(\beta,\mu)}\\
&=\frac{\sum_{s\in\mathcal{S}}e^{\beta\mu n_s}Z_{L,s}(\beta)\langle O\rangle_{\beta,s}}{\sum_{s\in\mathcal{S}}e^{\beta\mu n_s}Z_{L,s}(\beta)}.
\end{split}
\end{equation}
Therefore, for any trial value of $\mu$, we are able to estimate these grand-canonical properties by reusing the canonical thermal properties $Z_{L,s}(\beta)$ and $M_{O,s}(\beta)$ obtained from the canonical scheme in \cref{alg: cano}. This approach contrasts with previous quantum algorithms, including Lindbladian algorithms \cite{vsmid2025polynomial,tong2025fast}, QSVT, and LCU approaches \cite{gilyen2019quantum,tosta2024randomized,yoneta2025optimal,holmes2022quantum,bravyi2022quantum,coopmans2023predicting}, which require changing the input $H-\mu N$ depending on $\mu$ and re-executing a corresponding quantum circuit to obtain new measurement data.

\begin{figure}[htbp]
    \centering
\makebox[\textwidth][c]{
\Qcircuit @C=0.4em @R=1.4em {
    & \lstick{\ket{0}} & \gate{R_0} & \ctrl{1} & \gate{R_1} & \ctrl{1} & \qw & \push{\;\cdot\cdot\;} \qw & \gate{R_{D-1}} & \ctrl{1} & \gate{R_D} & \ctrlo{1} & \gate{R_1'} & \ctrlo{1} & \qw & \push{\;\cdot\cdot\;} \qw & \ctrlo{1} & \gate{R_{D}'} & \meter \\
    & \lstick{\omega} & \qw & \gate{e^{iH}} & \qw & \gate{e^{iH}} & \qw & \push{\;\cdot\cdot\;} \qw & \qw & \gate{e^{iH}} & \qw & \gate{e^{-iH}} & \qw & \gate{e^{-iH}} & \qw & \push{\;\cdot\cdot\;} \qw & \gate{e^{-iH}} & \qw
    }
    }
    \caption{ 
    GQSP circuit for canonical thermal estimation \cite{motlagh2024generalized}, formed by single-qubit rotations $R_k$ and controlled Hamiltonian simulations, denoted as $e^{\pm iH}$. $D$ denotes the degree of the Laurent polynomial. The rotation angles correspond to the circuit that proportionally block encodes $e^{-\frac{\beta}{2}H}$. The system register is initialized in the mixed Dicke state $\omega_{L,s}$ (\cref{def: Dicke}). The controlled time-evolution blocks are implemented  using the number-symmetry-preserving Trotter circuits (\cref{Trotter circuits}). The probability of observing the ancilla qubit in the $\ket{0}$ state is proportional to the partition function. Conditioned on this event, the target-register measurement yields a sample for the thermal observable.}\label{fig: GQSP}
\end{figure}

\subsection{Circuit components for the canonical scheme}\label{sec: circuit construction}

In the present work, we consider the JW transformation of the model \eqref{ex: models}, where we let
\begin{align}
    c_j
    =
    \frac{1}{2}
    \left(
    \prod_{\ell=1}^{j-1}Z_\ell
    \right)
    (X_j+iY_j),\quad
    c_j^\dagger
    =
    \frac{1}{2}
    \left(
    \prod_{\ell=1}^{j-1}Z_\ell
    \right)
    (X_j-iY_j).
    \label{eq:JW-transformation}
\end{align}Under the JW transformation, the computational basis of
the qubit register is identified with the fermionic occupation-number
basis. In particular,
\begin{align}
    \ket{x_1x_2\cdots x_L}
    \longleftrightarrow
    \left(c_1^\dagger\right)^{x_1}
    \left(c_2^\dagger\right)^{x_2}
    \cdots
    \left(c_L^\dagger\right)^{x_L}
    \ket{\mathrm{vac}}, \quad \ket{\mathrm{vac}}=\ket{0}^{\otimes L},
\end{align}
up to a possible sign depending on the convention for the ordering of the creation operators. Here
\(x_i\in\{0,1\}\) denotes the occupation of the \(i\)-th fermionic
mode. Consequently, the Hamming weight
\begin{align}
    \mathrm{wt}(x)
    :=
    \sum_{i=1}^{L}x_i
\end{align}
is equal to the total particle number of the corresponding Fock-basis
state. We denote  the particle number operator by $n_j:=c_j^\dagger c_j$. For $i<j$, define the JW string by
    $Z_{i:j}:=\prod_{\ell=i+1}^{j-1}Z_\ell$,
where an empty product is understood to be the identity.

We define mixed Dicke states as initial states. These states are normalized projectors onto
fixed-particle subspaces.
\begin{definition}\label{def: Dicke}
For a system modeled with \(L\) total fermionic modes (e.g., spin-orbitals), we define the mixed Dicke state for a total particle number \(k\) by
\begin{align}
    \omega_{L,k}
    &:=
    \frac{1}{\binom{L}{k}}
    \sum_{\substack{x\in\{0,1\}^{L}\\
                    \mathrm{wt}(x)=k}}
    \ket{x}\bra{x}.
\end{align}
For systems of spin-$\frac{1}{2}$ fermions mapped over \(L\) spatial modes (e.g., spatial orbitals or lattice sites), where the numbers of spin-up and spin-down particles are independently conserved, we define the mixed Dicke state containing \(k\) spin-up particles and \(k'\) spin-down particles by
\begin{align}
    \omega_{L,(k,k')}
    &:=
    \frac{1}{\binom{L}{k}\binom{L}{k'}}
    \sum_{\substack{x,y\in\{0,1\}^{L}\\
                    \mathrm{wt}(x)=k,\,
                    \mathrm{wt}(y)=k'}}
    \ket{x,y}\bra{x,y}.
\end{align}
\end{definition}
In other words, \(\omega_{L,k}\) is the maximally mixed state in the
\(k\)-particle sector, whereas \(\omega_{L,(k,k')}\) is the maximally
mixed state in the sector containing \(k\) spin-up particles and
\(k'\) spin-down particles.  These states can be prepared efficiently by first
constructing a pure Dicke state with ancilla qubits and then tracing out the
ancilla register. Preparing the pure Dicke
state requires only a linear circuit
depth \cite{bartschi2019deterministic}.
\begin{proposition}[Initial state preparation]\label{Initial state}
    Preparing $\omega_{L,k}$ requires $2L$ qubits, $\mathcal{O}(kL)$ gates, and $\mathcal{O}(L)$ circuit depth. For $\omega_{L,(k,k')}$, $4L$ qubits and the same scaling of gate count and circuit depth are required.
\end{proposition}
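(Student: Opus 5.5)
The plan is to build the mixed state by preparing a pure Dicke state on a doubled register, entangling it with a copy, and then discarding the copy. For $\omega_{L,k}$ I will use the deterministic Dicke-state preparation of \cite{bartschi2019deterministic}. On $L$ qubits initialized to $\ket{0^{L-k}1^{k}}$, it produces
\begin{align}
\ket{D^L_k}=\binom{L}{k}^{-1/2}\sum_{\mathrm{wt}(x)=k}\ket{x}
\end{align}
using a cascade of split-and-cyclic-shift blocks. This costs $\mathcal{O}(kL)$ two- and three-qubit gates and $\mathcal{O}(L)$ depth. I then append a second register of $L$ qubits in $\ket{0}^{\otimes L}$ and apply $L$ CNOTs, qubit $i$ of the first register controlling qubit $i$ of the second. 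This is $\mathcal{O}(L)$ gates in depth $1$, since the CNOTs act on disjoint pairs. The result is the purification
\begin{align}
\ket{\Psi}=\binom{L}{k}^{-1/2}\sum_{\mathrm{wt}(x)=k}\ket{x}\ket{x}.
\end{align}
Tracing out the second register gives $\binom{L}{k}^{-1}\sum_{x}\ket{x}\bra{x}=\omega_{L,k}$, because the copies $\ket{x}$ for distinct $x$ are orthonormal and kill the off-diagonal terms. This uses $2L$ qubits in total and the stated gate count and depth.

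For the spin-resolved sector I will do the same thing twice in parallel. I prepare $\ket{D^L_k}$ on the spin-up register and $\ket{D^L_{k'}}$ on the spin-down register, costing $\mathcal{O}(kL+k'L)$ gates and $\mathcal{O}(L)$ depth since the two preparations act on disjoint qubits. I then copy each register into its own $L$-qubit ancilla with transversal CNOTs. Tracing out both ancillas gives the product $\omega_{L,k}\otimes\omega_{L,k'}$, which is exactly $\omega_{L,(k,k')}$ in \cref{def: Dicke}. The total is $4L$ qubits, with gate count $\mathcal{O}(\max(k,k')L)$ and depth $\mathcal{O}(L)$, i.e.\ the same scaling.

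The main point needing care is the resource accounting of the Dicke-state subroutine. Its gates are controlled-$R_y$ and doubly-controlled-$R_y$ rotations, so I will argue that each decomposes into a constant number of elementary gates and that this does not change the asymptotic counts. Beyond that, the only check is the partial-trace computation, which uses nothing more than the orthonormality of computational basis states.
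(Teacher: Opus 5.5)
Your proposal is correct and follows the paper's proof. You prepare the pure Dicke state with the B\"artschi--Eidenbenz construction, copy it into a second $L$-qubit register with transversal CNOTs, and trace that register out; for the spin-resolved sector you repeat this on disjoint registers. Your resource accounting (a depth-one CNOT layer, and $\mathcal{O}((k+k')L)$ gates for $\omega_{L,(k,k')}$) is more explicit than the paper's, but the argument is the same.
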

\begin{proof}
    We prepare the purified state:
    \begin{align}
        \binom{L}{k}^{-\frac{1}{2}} \sum_{\substack{x \in \{0,1\}^L \\ \mathrm{wt}(x)=k}} \ket{x}\ket{x},
    \end{align}
    which can be achieved by preparing a pure Dicke state \cite{bartschi2019deterministic} in the first $L$-qubit register and applying CNOT gates controlled on the first register and targeted to a second $L$-qubit register. Tracing out the second register yields the desired mixed Dicke state. The overall circuit complexity is similar to the complexity established in \cite{bartschi2019deterministic}. A similar scaling of circuit complexity is required for preparing the state $\omega_{L,(k,k')}$. 
\end{proof}

We turn to Trotter circuit constructions to replace the Hamiltonian simulation block in the circuit in \cref{fig: GQSP}. As mentioned earlier, we consider circuits that preserve
the particle-number symmetry of the second-quantized model
in \eqref{ex: models}. We summarize the relevant cases below. 
\begin{example}[Number-symmetry-preserving Trotter circuit construction]\label{Trotter circuits}
    \item[\textnormal{(i)}] \textbf{$1$-body terms.}
        
        For $i\neq j$, the JW transformed $1$-body term is $h_{ij}c_i^\dagger c_j + h_{ij}^{*}c_j^\dagger c_i = \frac{1}{2}Z_{i:j} \big[ \Re(h_{ij})P_{ij}^{(\mathrm{R})} - \Im(h_{ij})P_{ij}^{(\mathrm{I})} \big]$, where $P_{ij}^{(\mathrm{R})} := X_iX_j+Y_iY_j$ and $P_{ij}^{(\mathrm{I})} := X_iY_j-Y_iX_j$. Because the Pauli strings within each $P_{ij}$ mutually commute and the grouped generator preserves particle number, they are exponentiated together to form the Trotter blocks,
        \begin{align}
            U_{\mathrm{R}}^{(ij)}(\tau) := \exp\left[ -i\tau\frac{\Re(h_{ij})}{2} Z_{i:j}P_{ij}^{(\mathrm{R})} \right], \quad 
            U_{\mathrm{I}}^{(ij)}(\tau) := \exp\left[ i\tau\frac{\Im(h_{ij})}{2} Z_{i:j}P_{ij}^{(\mathrm{I})} \right].
        \end{align}

        For the diagonal case ($i=j$), $h_{ii}\in\mathbb{R}$ and the term reduces to $h_{ii}n_i = \frac{h_{ii}}{2}(I-Z_i)$. Up to a global phase, its Trotter block is
        \begin{align}
            U^{(ii)}(\tau) = \exp\left( i\tau\frac{h_{ii}}{2}Z_i \right).
        \end{align}

    \item[\textnormal{(ii)}] \textbf{$2$-body terms.}

    The JW representation of $H_{pquv}^{(2)} := g_{pquv}c_p^\dagger c_q^\dagger c_u c_v + \text{h.c.}$ depends on the index equality pattern. We consider the three non-zero cases as follows.
    \begin{itemize}
        \item[\textnormal{(a)}] \textit{Four pairwise distinct indices ($v<u<q<p$).}

        Let $Z_{vuqp} := \big(\prod_{a=v+1}^{u-1}Z_a\big)\big(\prod_{b=q+1}^{p-1}Z_b\big)$. The term becomes $H_{pquv}^{(2)} = \frac{1}{8}Z_{vuqp} \big[ \Re(g_{pquv})P_{vuqp}^{(\mathrm{R})} + \Im(g_{pquv})P_{vuqp}^{(\mathrm{I})} \big]$, where the 8-term Pauli sums are grouped as:
        \begin{align}
            P_{vuqp}^{(\mathrm{R})}
            :={}&
            -X_vX_uX_qX_p
            +
            X_vX_uY_qY_p
            -
            X_vY_uX_qY_p
            -
            X_vY_uY_qX_p
            \nonumber\\
            &-
            Y_vX_uX_qY_p
            -
            Y_vX_uY_qX_p
            +
            Y_vY_uX_qX_p
            -
            Y_vY_uY_qY_p,\\ 
            P_{vuqp}^{(\mathrm{I})}
            :={}&
            -
            X_vX_uY_qX_p
            -
            Y_vX_uY_qY_p
            -
            X_vY_uY_qY_p
            -
            X_vX_uX_qY_p
            \nonumber\\
            &+
            Y_vX_uX_qX_p
            +
            X_vY_uX_qX_p
            +
            Y_vY_uY_qX_p
            +
            Y_vY_uX_qY_p.
        \end{align}
        As shown in \cref{lem: effective Ham}, the strings within each operator mutually commute. The corresponding grouped number-preserving blocks are
        \begin{align}
            U_{\mathrm{R}}^{(pquv)}(\tau) &:= \exp\left[ -i\tau\frac{\Re(g_{pquv})}{8} Z_{vuqp}P_{vuqp}^{(\mathrm{R})} \right], \;
            U_{\mathrm{I}}^{(pquv)}(\tau) := \exp\left[ -i\tau\frac{\Im(g_{pquv})}{8} Z_{vuqp}P_{vuqp}^{(\mathrm{I})} \right].
        \end{align}

        \item[\textnormal{(b)}] \textit{One coincidence ($p=u$, with $q<v$).} 
        
        The term simplifies via $c_p^\dagger c_q^\dagger c_p c_v = -n_p c_q^\dagger c_v$ to $H_{pqpv}^{(2)} = -\frac{1}{4}(I-Z_p)Z_{q:v} \big[ \Re(g_{pqpv})P_{qv}^{(\mathrm{R})} + \Im(g_{pqpv})P_{qv}^{(\mathrm{I})} \big]$. The Trotter blocks are
        \begin{align}
            U_{\mathrm{R}}^{(pqpv)}(\tau) &:= \exp\left[ i\tau\frac{\Re(g_{pqpv})}{4} (I-Z_p)Z_{q:v}P_{qv}^{(\mathrm{R})} \right], \;
            U_{\mathrm{I}}^{(pqpv)}(\tau) := \exp\left[ i\tau\frac{\Im(g_{pqpv})}{4} (I-Z_p)Z_{q:v}P_{qv}^{(\mathrm{I})} \right].
        \end{align}

        \item[\textnormal{(c)}] \textit{Two coincidences ($p=u, q=v, p\neq q$).}

        The term reduces to $c_p^\dagger c_q^\dagger c_p c_q = -n_pn_q$, giving $H_{pqpq}^{(2)} = -2\Re(g_{pqpq})n_p n_q$. Since $n_pn_q = \frac{1}{4}(I-Z_p)(I-Z_q)$, the Trotter block is
        \begin{align}
            U^{(pq)}(\tau) &:= \exp\left[ i\tau\frac{\Re(g_{pqpq})}{2} (I-Z_p)(I-Z_q) \right].
        \end{align}

    \end{itemize}
\end{example}
We remark that a Trotter circuit for $e^{iH\tau}$ constructed using any ordering of these blocks is applicable in \cref{fig: GQSP}, as each individual block preserves the total particle number.

\subsection{Numerical schemes for canonical and grand-canonical ensemble properties}

The circuit constructions described above, combined with the GQSP in \cref{fig: GQSP}, enable the estimation of canonical ensemble properties as outlined in \cref{alg: cano}. 
To estimate the canonical partition function, we use the extension of the relation \eqref{eq: true p0} with an associated effective Hamiltonian $\widetilde{H}$,
\begin{align}   P(0)&=\tr(f(e^{i\widetilde{H}})^\dag \omega_{L,s} f(e^{i\widetilde{H}}))\approx \frac{1}{d_s}\frac{\tr(e^{-\beta \widetilde{H}(L,s)})}{e^{\beta\lambda_{L,s}}}.
\end{align}We note that the Hamiltonian simulation is projected onto the target particle subspace, since $[\widetilde{H},N]=0$ due to the construction in \cref{Trotter circuits} and $\omega_{L,s}$ is the initial state chosen as a mixed Dicke state (\cref{def: Dicke}). We notice that the normalization factor chosen for the implementation of GQSP is an upper bound of the spectral norm of the sub-Hamiltonian corresponding to the particle subspace, $\lambda_{L,s}$, which is probably smaller than in the case of the full  Hilbert space. 

When $\ket{0}$ is measured in the ancilla qubit, we estimate an observable corresponding to the target sector by noticing the following relation similar to \eqref{eq: true O},
\begin{align}
\tr(\frac{f(e^{i\widetilde{H}})^\dag \omega_{L,(k,k')} f(e^{i\widetilde{H}})}{\tr(f(e^{i\widetilde{H}})^\dag \omega_{L,(k,k')} f(e^{i\widetilde{H}}))}O)\approx \tr(\frac{e^{-\beta \widetilde{H}(L,(k,k'))}}{\tr(e^{-\beta \widetilde{H}(L,(k,k'))})}O).
\end{align}

\begin{algorithm}[H]
\SetAlgoLined
    \KwData{System size $L$, sector $s \in \{k, (k,k')\}$, inverse temperature $\beta$, precision $\epsilon>0$, normalization factor $\lambda_{L,s}$ (where $\norm{H(L,s)/\lambda_{L,S}}\leq 1-\delta$), the number of shots $N_{\text{shots}}$ (circuit sampling complexity in \cref{thm: main 1}), and observable $O$}
    \KwResult{Approximations of the partition function $\widehat{Z}_{L,s}(\beta)$ and the thermal expectation value $\widehat{o}_{L,s}$ of the observable $O$}

     1. Construct the GQSP circuit (\cref{fig: GQSP,Initial state,Trotter circuits}).

     2. \textbf{Partition Function Estimate:} Execute the circuit  $N_{\textrm{shots}}$ times. Let $N_0$ be the number of ancilla $\ket{0}$ outcomes. Estimate $\widehat{Z}_{L,s}(\beta)= d_s e^{\beta\lambda_{L,s}} \frac{N_0}{N_{\textrm{shots}}}$, where $d_s$ is the sector dimension ($\binom{L}{k}$ or $\binom{L}{k}\binom{L}{k'}$).

     3. \textbf{Observable Estimate:}  Conditioned on the $M_0$ occurrences among $N_{\textrm{shots}}$, where the ancilla is $\ket{0}$, measure $O$ on the system register to obtain outcomes $\{o_j\}_{j=1}^{M_0}$. Estimate $\widehat{o}_{L,s} = \frac{1}{M_0} \sum_{j=1}^{M_0}o_j$.
     
    \caption{Quantum estimation of canonical ensemble properties}
    \label{alg: cano}
\end{algorithm}
We extend \cref{alg: cano} to the grand canonical ensemble as in \cref{alg: grand cano new}. The key idea is to apply \cref{alg: cano} as a subroutine across all sectors with a fixed number of particles, aggregate the estimates of canonical ensemble properties, and use  those estimates to evaluate grand canonical ensemble properties via the following identities,
\begin{align}\label{eq: identities}
&\tr(e^{-\beta(H-\mu N)}O)
=\sum_{k,k'}e^{\beta\mu(k+k')}\tr(e^{-\beta H(L,(k,k'))})\frac{\tr(e^{-\beta H(L,(k,k'))}O)}{\tr(e^{-\beta H(L,(k,k'))})},\nonumber\\
&\tr(e^{-\beta(H-\mu N)})
=\sum_{k,k'}e^{\beta\mu(k+k')}\tr(e^{-\beta H(L,(k,k'))}).
\end{align}
Notice that, except for the scalar weight $e^{\beta\mu(k+k')}$, all quantities in the identities are defined by canonical ensemble properties, which correspond to the outputs of \cref{alg: cano}. Thus, once the outputs are obtained from quantum computations, classical postprocessing based on \eqref{eq: identities} is performed to estimate grand canonical properties.

\cref{alg: grand cano new} is guaranteed to achieve a similar goal as in \cref{def: problem} for the grand canonical ensemble, provided that its subroutine, \cref{alg: cano}, does so as well (e.g. relative error for the partition function estimation and additive error for the thermal average estimation). Formally, we state this observation as follows.  
\begin{corollary}\label{cor: new grand scheme}
Suppose that the canonical partition functions are estimated to relative error $\epsilon$ and the corresponding canonical thermal averages are estimated to additive error $\epsilon\norm{O}_2$, uniformly over all particle-number sectors in \cref{alg: grand cano new}. Then, for any $\epsilon\in(0,1)$, the resulting grand-canonical partition function is estimated to relative error $\epsilon$, and the grand-canonical thermal average is estimated to additive error
$\frac{\epsilon(3-\epsilon)}{1-\epsilon}\norm{O}_2
=
\mathcal{O}\left(\epsilon\norm{O}_2\right)$. Subsequently, the gate complexity and circuit sampling complexity of \cref{alg: grand cano new} are determined by the  corresponding maximums of them in \cref{thm: main 1} over all particle sectors $s$.

\end{corollary}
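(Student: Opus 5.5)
The proof is a direct error-propagation argument on the reweighting identities \eqref{eq: identities}, followed by a bookkeeping remark on complexity. Write $w_s:=e^{\beta\mu n_s}Z_{L,s}(\beta)>0$ and let $\widehat{w}_s:=e^{\beta\mu n_s}\widehat{Z}_{L,s}$. The assumption $|\widehat{Z}_{L,s}-Z_{L,s}|\le\epsilon Z_{L,s}$ holds uniformly in $s$, and the prefactor $e^{\beta\mu n_s}$ is an exact positive scalar. Hence $|\widehat{w}_s-w_s|\le\epsilon w_s$ for every $s$. Summing and using the triangle inequality gives
\begin{align*}
\abs{\widehat{\mathcal{Z}}_L-\mathcal{Z}_L}\le\sum_s\abs{\widehat{w}_s-w_s}\le\epsilon\sum_s w_s=\epsilon\,\mathcal{Z}_L(\beta,\mu).
\end{align*}
This proves the partition-function claim for every $\mu$ simultaneously.

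For the observable, write $o_s=\langle O\rangle_{\beta,s}$ and $\widehat{o}_s$ for its estimate, with $|\widehat{o}_s-o_s|\le\epsilon\norm{O}_2$ and $|o_s|\le\norm{O}_2$. I will compare the two weighted averages
\begin{align*}
\widehat{\langle O\rangle}=\frac{\sum_s\widehat{w}_s\widehat{o}_s}{\sum_s\widehat{w}_s},\qquad \langle O\rangle^{\mathrm{GC}}_{\beta,\mu}=\frac{\sum_s w_s o_s}{\sum_s w_s}.
\end{align*}
The difference splits into two pieces:
\begin{align*}
\widehat{\langle O\rangle}-\langle O\rangle^{\mathrm{GC}}=\frac{\sum_s\widehat{w}_s(\widehat{o}_s-o_s)}{\sum_s\widehat{w}_s}+\left(\frac{\sum_s\widehat{w}_s o_s}{\sum_s\widehat{w}_s}-\frac{\sum_s w_so_s}{\sum_s w_s}\right).
\end{align*}
The first piece is a convex combination, so it is at most $\epsilon\norm{O}_2$. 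For the second piece, put it over a common denominator:
\begin{align*}
\frac{\sum_s(\widehat{w}_s-w_s)o_s}{\widehat{\mathcal{Z}}}+\Big(\sum_s w_s o_s\Big)\frac{\mathcal{Z}-\widehat{\mathcal{Z}}}{\widehat{\mathcal{Z}}\mathcal{Z}}.
\end{align*}
Each numerator is bounded by $\epsilon\mathcal{Z}\norm{O}_2$, and $\widehat{\mathcal{Z}}\ge(1-\epsilon)\mathcal{Z}$. So each of the two terms is at most $\frac{\epsilon}{1-\epsilon}\norm{O}_2$.

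Adding the three contributions gives $\epsilon+\frac{2\epsilon}{1-\epsilon}=\frac{\epsilon(3-\epsilon)}{1-\epsilon}$, which is exactly the stated constant. This step is the only delicate one. The key is to choose the decomposition so that the denominator is $\widehat{\mathcal{Z}}$; that is what makes the first term a genuine convex combination and produces the factor $(1-\epsilon)^{-1}$, which needs $\epsilon<1$. Since $\epsilon\in(0,1)$ is fixed, the bound is $\mathcal{O}(\epsilon\norm{O}_2)$.

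Complexity is pure bookkeeping. \cref{alg: grand cano new} runs \cref{alg: cano} once per sector $s\in\mathcal{S}$, and by \cref{thm: main 1} each run has gate and sampling costs that do not depend on $\mu$. Every quantity used afterwards (the weights $e^{\beta\mu n_s}$ and the sums) is classical postprocessing, so no circuit is rerun when $\mu$ changes. The per-circuit gate complexity is therefore the maximum over $s$ of the bound in \cref{thm: main 1}. The sampling cost is governed by the worst sector, $\max_s d_se^{\beta\lambda_{L,s}}/Z_{L,s}(\beta)$. To make the uniform success guarantee hold with total failure probability $\eta$, I will apply a union bound, running each sector with failure probability $\eta/|\mathcal{S}|$. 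Since $|\mathcal{S}|=\mathcal{O}(L^2)$, this costs only a $\log|\mathcal{S}|$ factor inside $\log(1/\eta)$, together with the polynomial factor $|\mathcal{S}|$ from repeating over sectors; both are absorbed in the ``up to polynomial factors'' convention.
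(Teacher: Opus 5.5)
Your proposal is correct and follows the paper's proof essentially line by line. It uses the same weights $\widehat w_s=e^{\beta\mu n_s}\widehat Z_{L,s}$ and the same split over the denominator $\widehat{\mathcal Z}_L$ into a convex-combination term and a two-part reweighting term, each bounded by $\frac{\epsilon}{1-\epsilon}\norm{O}_2$ via $\widehat{\mathcal Z}_L\ge(1-\epsilon)\mathcal Z_L$; your union-bound bookkeeping for the complexity sentence (failure probability $\eta/|\mathcal S|$ per sector, total cost at most $|\mathcal S|$ times the worst sector) goes slightly beyond the paper, whose proof only treats the two error bounds.
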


\begin{proof}
For $s \in \mathcal{S}$, we let $n_s$ denote the particle number for sector $s$, and let
\begin{align}
Z_{L,s}(\beta):=
\tr\left(e^{-\beta H(L,s)}\right),\;
o_{L,s}:=
\frac{\tr\left(e^{-\beta H(L,s)}O\right)}{Z_{L,s}(\beta)},\;
w_{L,s}:=
e^{\beta\mu n_s}Z_{L,s}(\beta).
\end{align}
By \eqref{eq: identities}, the exact grand-canonical thermal average and partition function are represented by
\begin{align}
o_L^{\mathrm{GC}}
=
\frac{\sum_{s \in \mathcal{S}} w_{L,s} o_{L,s}}{\sum_{s \in \mathcal{S}} w_{L,s}},\quad \mathcal{Z}_L(\beta,\mu)=\sum_{s \in \mathcal{S}} w_{L,s}.
\end{align}
Let $\widehat w_{L,s}:= e^{\beta\mu n_s}\widehat Z_{L,s}(\beta)$ and
$\widehat{\mathcal{Z}}_L(\beta,\mu):=\sum_{s \in \mathcal{S}}\widehat w_{L,s}$.
By the assumption, $\abs{\widehat Z_{L,s}(\beta) - Z_{L,s}(\beta)} \le \epsilon Z_{L,s}(\beta)$, which implies $\abs{w_{L,s}-\widehat w_{L,s}}\le \epsilon w_{L,s}$. It follows  $\abs{\widehat{\mathcal{Z}}_L(\beta,\mu) - \mathcal{Z}_L(\beta,\mu)} \leq \epsilon \mathcal{Z}_L(\beta,\mu)$ by the triangle inequality. This proves the first statement.

We note that $\widehat{\mathcal{Z}}_L(\beta,\mu) \geq (1-\epsilon)\mathcal{Z}_L(\beta,\mu)$. For the thermal average estimation, we observe that
\begin{align}
\abs{\widehat o_{L}^{\mathrm{GC}}-o_{L}^{\mathrm{GC}}}
&\leq
\abs{
\frac{\sum_s\widehat w_{L,s}(\widehat o_{L,s}-o_{L,s})}
{\widehat{\mathcal{Z}}_L(\beta,\mu)}
}
+
\abs{
\frac{\sum_s\widehat w_{L,s} o_{L,s}}{\widehat{\mathcal{Z}}_L(\beta,\mu)}
-
\frac{\sum_s w_{L,s} o_{L,s}}{\mathcal{Z}_L(\beta,\mu)}
}.
\end{align}
The first term satisfies
\begin{align}
\abs{
\frac{\sum_s\widehat w_{L,s}(\widehat o_{L,s}-o_{L,s})}
{\widehat{\mathcal{Z}}_L(\beta,\mu)}
}\le \sum_s \frac{\widehat w_{L,s}}{\widehat{\mathcal{Z}}_L(\beta,\mu)}\abs{\widehat o_{L,s}-o_{L,s}}\le \sum_s \frac{\widehat w_{L,s}}{\widehat{\mathcal{Z}}_L(\beta,\mu)} \cdot \epsilon\norm{O}_2
=
\epsilon\norm{O}_2.
\end{align}
For the second term, since $\abs{o_{L,s}}\leq\norm{O}_2$, we obtain
\begin{align}
&\abs{
\frac{\sum_s\widehat w_{L,s} o_{L,s}}{\widehat{\mathcal{Z}}_L(\beta,\mu)}
-
\frac{\sum_s w_{L,s} o_{L,s}}{\mathcal{Z}_L(\beta,\mu)}
}\\
&\le
\frac{1}{\widehat{\mathcal{Z}}_L(\beta,\mu)}
\sum_s\abs{\widehat w_{L,s}-w_{L,s}}\abs{o_{L,s}}
+
\frac{\abs{\widehat{\mathcal{Z}}_L(\beta,\mu)-\mathcal{Z}_L(\beta,\mu)}}{\widehat{\mathcal{Z}}_L(\beta,\mu) \mathcal{Z}_L(\beta,\mu)}
\abs{\sum_s w_{L,s} o_{L,s}}\\
&\leq
\frac{\epsilon \mathcal{Z}_L(\beta,\mu)}{\widehat{\mathcal{Z}}_L(\beta,\mu)}\norm{O}_2
+
\frac{\epsilon \mathcal{Z}_L(\beta,\mu)}{\widehat{\mathcal{Z}}_L(\beta,\mu)}\norm{O}_2\\
&\leq
\frac{2\epsilon}{1-\epsilon}\norm{O}_2.
\end{align}
Therefore, 
$\abs{\widehat o_{L}^{\mathrm{GC}}-o_{L}^{\mathrm{GC}}}
\leq
\left(
\epsilon+\frac{2\epsilon}{1-\epsilon}
\right)\norm{O}_2=
\frac{\epsilon(3-\epsilon)}{1-\epsilon}\norm{O}_2$, which proves the second statement.
\end{proof}

\begin{algorithm}[H]
\SetAlgoLined
    \KwData{System size $L$, inverse temperature $\beta$, target precision $\epsilon>0$, chemical potential $\mu$, Hamiltonian $H$, number operator $N$, and observable $O$}
    \KwResult{Approximations of the grand-canonical partition function $\widehat{\mathcal{Z}}_{L}(\beta,\mu)\approx \tr(e^{-\beta(H-\mu N)})$ and thermal expectation value $\widehat{o}_{L}^{\mathrm{GC}}\approx \frac{\tr(e^{-\beta(H-\mu N)}O)}{\tr(e^{-\beta(H-\mu N)})}$}

    \tcc{Let $\mathcal{S}$ be the set of sectors (either $\{0,\ldots,L\}$ or $\{0,\ldots,L\}^2$) and let $n_s$ denote the particle number for sector $s \in \mathcal{S}$ (i.e., $k$ or $k+k'$).}

    1. \For{$s \in \mathcal{S}$}{
       Use \cref{alg: cano} to estimate the sector partition function $\widehat{Z}_{L,s}(\beta)$ and the sector thermal expectation value $\widehat{o}_{L,s}$.
    }

    2. \textbf{Partition Function Estimate:} Compute the grand-canonical partition function via
       $\widehat{\mathcal{Z}}_{L}(\beta,\mu) = \sum_{s \in \mathcal{S}} e^{\beta\mu n_s} \widehat{Z}_{L,s}(\beta)$, which is based on \eqref{eq: identities}.

    3. \textbf{Observable Estimate:} Compute the grand-canonical expectation value of $O$ via
       $\widehat{o}_{L}^{\mathrm{GC}} = \frac{1}{\widehat{\mathcal{Z}}_{L}(\beta,\mu)} \sum_{s \in \mathcal{S}} e^{\beta\mu n_s} \widehat{Z}_{L,s}(\beta) \widehat{o}_{L,s}$, which is based on \eqref{eq: identities}.

\caption{Quantum estimation of grand-canonical ensemble properties}
\label{alg: grand cano new}
\end{algorithm}

\begin{remark}
    For comparison, one may consider a standard approach that applies the Trotter formula directly for $H-\mu N$ defined in the full Hilbert space, similar to the QSVT-based Gibbs state preparation without amplitude amplification \cite{gilyen2019quantum,chakraborty2025quantum}. 
As summarized in \cref{alg: grand cano prev}, since this approach incorporates the chemical potential into the Hamiltonian simulation, its complexity (e.g. circuit sampling complexity and gate complexity) depends on the chemical potential. Thus, in practical situations where one requires to search a proper value of chemical potential, \cref{alg: grand cano prev} must re-execute the quantum circuit for every trial value of $\mu$. In contrast, \cref{alg: grand cano new} avoids this bottleneck since the chemical potential is incorporated in a classical postprocessing as discussed above. This offers a significant reduction of quantum computational resources. Additionally, the circuit sampling and gate complexities of \cref{alg: grand cano new} are independent of $\mu$, unlike \cref{alg: grand cano prev}.

\end{remark}

\section{Discussion}\label{sec: discussion}

\subsection{Comparison to a classical baseline}\label{sec: comparison to classical baseline}

As mentioned earlier, we consider the Lanczos method with a stochastic trace estimator as a classical baseline to compare with our schemes. In particular, we choose a vanilla version, specifically the method with a symmetric formulation  \cite{aichhorn2003low} rather the original approach \cite{PhysRevB.49.5065}, as such a formulation is more technically and numerically superior in the task of matrix function trace estimation \cite{ko2024ground}. While many variants have been developed (e.g. see \cite{chen2023krylov,epperly2024xtrace} and references therein), the computational cost would not significantly differ as it is dictated by the cost of matrix-vector multiplications. 

Let us denote a target Hilbert space dimension by $d$. For the case of grand canonical ensemble, $d=4^L$, and for the canonical ensemble (\cref{alg: cano}), $d=d_s$. The computational cost of the Lanczos approach is determined by the product of three factors:  $\mathcal{O}\left(\frac{\log d}{\epsilon^2}\log\frac{1}{\eta}\right)$ Rademacher vectors  multiplied with the positive-definite matrix $e^{-\frac{\beta}{2}H}$ \cite[Theorem 7.1]{avron2011randomized} (the Hutchinson's estimator), the degree of a Chebyshev polynomial approximation to the exponential function 
$\mathcal{O}\left(\beta\norm{H}_2 +\log\frac{1}{\epsilon}\right)$ \cite{tosta2024randomized}, and $\mathcal{O}(\text{poly}(L)d)$ the computational cost of matrix-vector multiplication. For the last factor, it is because the second quantized model \eqref{ex: models} forms a $\mathcal{O}(L^4)$-sparse matrix. Therefore, without polynomial prefactors, the computational cost of Lanczos approach scales as $\widetilde{\mathcal{O}}(d)$. 

In the following observation, we show that our schemes (\cref{alg: cano,alg: grand cano new}) achieve $\widetilde{\mathcal{O}}(d^\gamma)$ with $\gamma<1$ in constant-temperature regimes. This exponent $\gamma$ generally depends on the factor of the spectral norm of a target Hamiltonian divided by the system size $L$. This implies that for typical physical models where $\norm{H}_2=\mathcal{O}(L)$, there exists a regime of constant temperature where $\gamma<1$ and $\gamma$ is independent of the system size in the thermodynamic limit ($L\rightarrow\infty$). To this end, we define a figure of merit as follows,
\begin{equation}\label{eq: FoM}
\gamma(L,\beta):=\log_{d} Q(L,\beta).
\end{equation}Here $Q(L,\beta)$ accounts for a quantum circuit sampling complexity estimated with respect to system size $L$ and fixed $\beta$, without polynomial prefactors and constant computational overheads (e.g. circuit complexity, the factor $\frac{1}{\epsilon^2}$ in the circuit sampling complexity, and the factor involving failure probability $\log\frac{1}{\eta}$). The three cases are $Q_1^{(s)}$ for a canonical calculation in fixed particle-number sector $s$ (\cref{alg: cano}), $Q_1$ for a grand-canonical calculation reconstructed from the canonical sectors (\cref{alg: grand cano new}), and $Q_2$ for a direct grand-canonical calculation of $H-\mu N$ (\cref{alg: grand cano prev}),
\begin{equation}\label{eq: gammas}
    \begin{split}
        &Q_1^{(s)}=\frac{d_s e^{\beta\lambda_{L,s}}}{\tr(e^{-\beta H(L,s)})}\implies\gamma_{1}^{(s)}(L,\beta) =\log_{d_s} Q_1^{(s)},\\
    &Q_1 = \max_{s}Q_1^{(s)}\implies\gamma_1(L,\beta)=\log_{4^L}(Q_1),\\
    &Q_2 = \frac{4^L e^{\beta \lambda(H-\mu N)}}{\tr(e^{-\beta (H-\mu N)})}\implies\gamma_2(L,\beta)=\log_{4^L}(Q_2).
    \end{split}
\end{equation} 
The reason that the base differs in $\gamma_1^{(s)}$ and $\gamma_1$ is that they correspond to canonical and grand-canonical ensemble estimation, for which the cost of Lanczos method scales as $\widetilde{\mathcal{O}}(d_s)$ and $\widetilde{\mathcal{O}}(4^L)$. Intuitively, when $\beta=0$, $\gamma_1^{(s)}=\gamma_1=\gamma_2=0$ and in these cases, the circuit sampling complexity of \cref{alg: cano,alg: grand cano new,alg: grand cano prev} becomes $\mathcal{O}(1)$, which is consistent. If $\beta$ tends to $\infty$, the $\gamma$'s do so. That is, we are able to find regimes of constant temperature, where the values of $\gamma$ is strictly less than $1$. However, we note that for \cref{alg: cano,alg: grand cano new}, their complexity is independent of $\mu$ according to \cref{thm: main 1}, while \cref{alg: grand cano prev} is not. This difference is reflected in their performance. Specifically, we make the following two observations.
\begin{observation}\label{obs 2}
For a fixed system size $L$, there exist positive thresholds $\beta_1^{(s)}(L),\beta_1(L)$ and $\beta_2(L)$ for \eqref{eq: gammas} such that
\begin{align*}
    &\gamma_{1}^{(s)}(L,\beta) <0.99,\quad \forall \beta \in [0,\beta_1^{(s)}),\quad \beta_1^{(s)}(L)=\frac{0.99\log d_s}{2\lambda_{L,s}}\\
    &\gamma_{1}(L,\beta) <0.99,\quad \forall \beta \in [0,\beta_1),\quad \beta_1(L)=\min_s\frac{0.99L\log 4}{2\lambda_{L,s}}\\
    &\gamma_2(L,\beta)<0.99,\quad \forall \beta\in[0,\beta_2),\quad
    \beta_2(L)=\frac{0.99L\log4}{2\lambda_L},
\end{align*}
where $\lambda_{L,s}$ and $\lambda_L$ are defined in \cref{alg: cano,alg: grand cano prev}.  Consequently, for any class of Hamiltonians satisfying
\begin{equation}\label{eq: spec norm condition}
    \norm{H(L)}_2 \le C L,
\end{equation}
where \(H(L)\) denotes the Hamiltonian with system size $L$,  any $d_s$ satisfying $\inf_L \frac{\log d_s}{L}>0$, any $c_s\in(0,\frac{\log d_s}{L}]$ and a fixed $\mu$, there exist system-size-independent thresholds \(\beta_1\ge \frac{0.99\log 4}{2(1+\epsilon)C}\), \(\beta_2\ge \frac{0.99\log 4}{2(1+\epsilon)C_\mu}\) and \(\beta_1^{(s)}\ge \frac{0.99c_s}{2(1+\delta')C}\) such that a constant $C_\mu$ depends on $\mu$ and $C$, and 
\begin{align*}
    &\sup_L \gamma_1^{(s)}(L,\beta)<0.99,\quad \forall \beta \in [0,\beta_1^{(s)}),\\
    &\sup_L \gamma_1(L,\beta)<0.99,\quad \forall \beta \in [0,\beta_1),\\
    &\sup_L \gamma_2(L,\beta)<0.99,\quad \forall \beta \in [0,\beta_2).
\end{align*}

\end{observation}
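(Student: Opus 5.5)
The whole argument rests on one elementary lower bound on the partition function. Since $\norm{H(L,s)}_2\le\lambda_{L,s}$, every eigenvalue of $-\beta H(L,s)$ is at least $-\beta\lambda_{L,s}$. Hence
\begin{align*}
\tr\big(e^{-\beta H(L,s)}\big)\ge d_s e^{-\beta\lambda_{L,s}},
\qquad\text{so}\qquad
Q_1^{(s)}\le e^{2\beta\lambda_{L,s}}.
\end{align*}
Taking logarithms gives $\gamma_1^{(s)}(L,\beta)\le 2\beta\lambda_{L,s}/\log d_s$, which is below $0.99$ exactly when $\beta<\frac{0.99\log d_s}{2\lambda_{L,s}}=\beta_1^{(s)}(L)$.

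\textbf{Reconstructed and direct grand-canonical cases.} For $\gamma_1$ the same bound gives $Q_1=\max_s Q_1^{(s)}\le \max_s e^{2\beta\lambda_{L,s}}$. Measured in base $4^L$, this yields $\gamma_1\le \max_s 2\beta\lambda_{L,s}/(L\log 4)<0.99$ for $\beta<\min_s \frac{0.99L\log4}{2\lambda_{L,s}}$. For $\gamma_2$ I apply the identical argument to $H-\mu N$ on the full $4^L$-dimensional space with normalization $\lambda_L\ge\norm{H-\mu N}_2$. This gives $Q_2\le e^{2\beta\lambda_L}$ and the stated $\beta_2(L)$.

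\textbf{Size-independent thresholds.} Here I have to choose the normalization factors concretely. Using \eqref{eq: spec norm condition}, $\norm{H(L,s)}_2\le\norm{H(L)}_2\le CL$, so I may take $\lambda_{L,s}=(1+\epsilon)CL$; the slack factor $(1+\epsilon)$ (or $(1+\delta')$) keeps $\norm{H/\lambda}\le 1-\delta$ with $\delta>0$, as \cref{thm: main 1} requires.
\begin{itemize}
\item Then $\beta_1(L)=\frac{0.99L\log4}{2(1+\epsilon)CL}=\frac{0.99\log4}{2(1+\epsilon)C}$, independent of $L$.
\item For $\gamma_2$, $\norm{H-\mu N}_2\le CL+|\mu|\,\norm{N}_2\le (C+2|\mu|)L$, since $N$ counts at most $2L$ modes. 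This gives $C_\mu=C+2|\mu|$, or $C+|\mu|$ depending on the mode count, and the analogous threshold.
\item For the canonical case, $\frac{\log d_s}{2\lambda_{L,s}}\ge \frac{c_sL}{2(1+\delta')CL}$ because $c_s\le \log d_s/L$. This yields $\beta_1^{(s)}\ge\frac{0.99c_s}{2(1+\delta')C}$ uniformly in $L$.
\end{itemize}
In each case, for $\beta$ strictly below the $L$-independent threshold, the bound $\gamma\le\beta/\beta_{\mathrm{thr}}\cdot 0.99$ holds for every $L$, so the supremum over $L$ is $<0.99$.

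\textbf{Main obstacle.} The only delicate point is bookkeeping rather than analysis. The thresholds in the statement are lower bounds, so I must pick $\lambda_{L,s},\lambda_L$ as explicit linear-in-$L$ quantities, and not arbitrary valid normalizations, for the uniform statement to hold. I must also make sure $c_s>0$ uniformly, which is where the hypothesis $\inf_L \log d_s/L>0$ enters. Strictness of the supremum is automatic because $\gamma\le 0.99\,\beta/\beta_{\mathrm{thr}}$ with $\beta/\beta_{\mathrm{thr}}<1$ fixed.
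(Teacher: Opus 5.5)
Your proof is correct and has the same structure as the paper's: lower-bound each partition function by $d\,e^{-\beta\lambda}$, take logarithms to get $\gamma\le 2\beta\lambda/\log d$, then insert normalizations that are linear in $L$. The difference is the key inequality.

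The paper uses Jensen's inequality, $\tr(e^{-\beta H(L,s)})/d_s\ge \exp\bigl(-\beta\tr(H(L,s))/d_s\bigr)$. This first gives the sharper intermediate bound $\gamma_1^{(s)}\le \beta\bigl(\lambda_{L,s}+\tr(H(L,s))/d_s\bigr)/\log d_s$, and only afterwards relaxes $|\tr(H(L,s))/d_s|\le\lambda_{L,s}$. You instead use the pointwise fact that every eigenvalue of $H(L,s)$ is at most $\lambda_{L,s}$, which lands directly on $2\beta\lambda_{L,s}/\log d_s$. Your route is more elementary. The Jensen route keeps the mean-energy term, which can roughly halve the exponent when $\tr(H(L,s))/d_s$ is small or negative (for instance after the spectral shifts used in the numerics). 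The paper does not exploit this in the stated thresholds, so both routes give identical results.

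For the size-independent part, your write-up is more careful than the paper's one-line appeal to the norm assumption. You make explicit four points the paper leaves implicit:
\begin{itemize}
\item $\lambda_{L,s}$ and $\lambda_L$ must be chosen as quantities of the form $(1+\epsilon)CL$, with the slack supplying the margin $\delta$.
\item $c_s$ must satisfy $c_s\le\inf_L \log d_s/L$ for the bound to hold uniformly in $L$.
\item The supremum stays strictly below $0.99$ because $\gamma\le 0.99\,\beta/\beta_{\mathrm{thr}}$ holds uniformly in $L$.
\item The constant should be $C_\mu=C+2|\mu|$ rather than the paper's $C+2\mu$.
\end{itemize}
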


\begin{proof}
The proof is straightforward by Jensen's inequality. We notice that
\begin{align}
    \frac{\tr(e^{-\beta (H(L,s))})}{d_s}
    \ge
    \exp\!\left(
    -\beta \frac{\tr(H(L,s))}{d_s}
    \right),
\end{align}and it follows
\begin{align}\label{eq: gamma1}
    \gamma_1^{(s)}(L,\beta)\le \frac{\beta}{\log d_s}\left(\lambda_{L,s}+\frac{\tr(H(L,s))}{d_s}\right)\implies \gamma_1^{(s)}(L,\beta)\le \frac{2\beta\lambda_{L,s}}{\log d_s},
\end{align}since $\left|
    \frac{\tr(H(L,s))}{d_s}
    \right|
    \le
    \norm{H(L,s)}_2
    \le
    \lambda_{L,s}$. This implies the threshold in the statement for $\gamma_1^{(s)}(L,\beta)$. Similarly using Jensen's inequality, we can prove the first statement. For the second statement, we use the first result and the assumption on the Hamiltonian class. Especially for \cref{alg: grand cano prev}, we prove the statement by noticing that $\norm{(H-\mu N)(L)}_2\le (C+2\mu)L$.  This completes the proof. 
\end{proof}
In \cref{obs 2}, we apply our quantum algorithms (\cref{alg: cano,alg: grand cano new,alg: grand cano prev}) to sectors satisfying $\inf_L \frac{\log d_s}{L}>0$, whose dimensions grow exponentially with $L$, while polynomial-size sectors can be treated classically using the Lanczos algorithm.

\subsection{Numerical result}\label{sec: numerical result}

In this section, we present numerical results to validate \cref{obs 2} with the 2D next-nearest-neighbor Hubbard model
\begin{align}\label{eq: model hubbard}
    H &= -t \sum_{\langle i, j \rangle, \sigma} \left( c_{i\sigma}^{\dagger} c_{j\sigma} + \text{h.c.} \right) 
    - t' \sum_{\langle\langle i, j \rangle\rangle, \sigma} \left( c_{i\sigma}^{\dagger} c_{j\sigma} + \text{h.c.} \right) \nonumber \\
    &\quad + U \sum_{i} n_{i\uparrow} n_{i\downarrow}  
    - \mu \sum_{i,\sigma}n_{i\sigma}.
\end{align}Since  there are $\mathcal{O}(L)$ operators, the spectral norm of this model scales linearly with $L$ for fixed $t,t',\mu$ and $U$, and the assumption in \cref{obs 2} is satisfied. We consider the geometry $L=L_x\times L_y$ with $L_x=2$ and $L_y=2,4,6,8$ with periodic boundary condition. In the numerical results, we set $t=1$ and $t'=-0.2$, corresponds to a parameter setting in the non-particle-hole symmetric case \cite{leblanc2015solutions}. We consider two cases: $U=4$ (weak-coupling regime) and $U=8$ (intermediate-coupling regime).

For our numerical results, we estimate $\gamma_1(L,\beta)$ and $\gamma_2(L,\beta)$ by using a Lanczos method with the Hutchinson estimator. First, we collect the spectra of submatrices of the full Hamiltonian matrix by applying the method to the fixed-particle subspaces in a blockwise manner. Using these computed spectra, we estimate $\gamma_1(L,\beta)$ and $\gamma_2(L,\beta)$ by introducing a spectral shift. Accordingly, given the estimated minimum and maximum eigenvalues, denoted as $\widetilde{\lambda}_{\min}$ and $\widetilde{\lambda}_{\max}$, we apply the following shifts,
\begin{align}
    H(L,(k,k'))&\longrightarrow H(L,(k,k'))-\frac{\widetilde{\lambda}_{\min,L,(k,k')}+\widetilde{\lambda}_{\max,L,(k,k')}}{2}\\
    H&\longrightarrow H-\frac{\widetilde{\lambda}_{\min}+\widetilde{\lambda}_{\max}}{2}. 
\end{align}
Applying these shifts reduces the normalization factors required when compiling the Hamiltonian simulation via Trotter circuits in the GQSP in \cref{fig: GQSP}, thereby reducing the circuit sampling complexity. Thus, we can modify the quantities in  \eqref{eq: gammas} as follows,
\begin{align}\label{eq: estimates of min,max}
    &\gamma_1(L,\beta)=\max_{k,k'}\frac{1}{L\log 4}\left(\log\left[\binom{L}{k}\binom{L}{k'} \right]-\beta \widetilde{\lambda}_{\min,L,(k,k')}-\log\tr(e^{-\beta(H(L,(k,k')))}) \right)\\
    &\gamma_2(L,\beta)=\frac{1}{L\log 4}\left(L-\beta \widetilde{\lambda}_{\min}(H-\mu N) - \log \tr(e^{-\beta(H-\mu N)}) \right).
\end{align}
Notice that the error involved in $\widetilde{\lambda}_{\max}$ does not affect the sample complexity but the circuit complexity, such as gate complexity, as seen in \cref{thm: main 1} and \cref{thm: main 2}. In our simulations, we define approximations of the extreme eigenvalues, $\widetilde{\lambda}_{\min}=(1+\epsilon_{\min})\lambda_{\min}$ and $\widetilde{\lambda}_{\max}=(1+\epsilon_{\max})\lambda_{\max}$. For the results in \cref{fig: g1U4}, \cref{fig: g1U8}, and \cref{fig: g2}, we set $\epsilon_{\min}=0.005$ and $\epsilon_{\max}=0.007$ as $\lambda_{\min}<0$ and $\lambda_{\max}>0$ in the results. We set the dimension of Krylov subspace to be $40$ for large fixed-particle subspaces and use $30$ Rademacher vectors as the number of random samples.  For small subspaces, we use the exact eigensolver. 
Once the spectrum of the Hamiltonian has been obtained, we compute an approximation of the chemical potential $\mu$ for a given density $n$ by using the Brent's Method in the scipy.optimize.

\cref{fig: g2} plots the values of $\gamma_1(L,\beta)$, which converge to constants below approximately $0.9$ at temperatures $T=0.05, 0.1,$ and $0.5$. In contrast, at the lowest temperature ($T=0.01$), the value exceeds $1$. Together with \cref{obs 2}, these observations suggest that \cref{alg: grand cano new} can outperform the classical Lanczos method for estimating grand-canonical thermal properties at temperatures down to somewhere between $T=0.05$ and $T=0.01$. As mentioned earlier, a key advantage of \cref{alg: grand cano new} is that its performance, as characterized by $\gamma_1(L,\beta)$, is independent of the chemical potential and hence of the electron density. For comparison, under the same parameter settings as in \cref{fig: g2}, we evaluate \cref{alg: grand cano prev} by estimating $\gamma_2(L,\beta)$ in \eqref{eq: gammas} for electron densities $n=0.875, 0.9, 0.925, 0.95, 0.975,$ and $1$. In \cref{fig: g1U4} and \cref{fig: g1U8}, for a fixed temperature, we observe that $\gamma_2(L,\beta)$ generally increases as the electron density increases from $0.875$ to $1$. At the lowest temperature ($T=0.01$), $\gamma_2(L,\beta)>1$ in all cases, whereas at the highest temperature ($T=0.5$), it remains below $1$. At the intermediate temperatures ($T=0.05$ and $0.1$), whether $\gamma_2(L,\beta)$ remains below $1$ depends on the electron density, in contrast to the density-independent behavior of \cref{alg: grand cano new} observed in \cref{fig: g2}.

We also examine how the relative error of the ground state energy affects the circuit sampling complexity of \cref{alg: grand cano new}. Using the same spectral data from our previous results, we perform simulations with $\epsilon_{\min}$ values of $0.001$, $0.003$, $0.005$, $0.007$, and $0.01$. Based on \eqref{eq: estimates of min,max}, we expect the value of $\gamma_1(L,\beta)$ to scale linearly with the relative error and the inverse temperature. Indeed, \cref{fig: g2err1} and \cref{fig: g2err2} confirm this expectation.  For example, \cref{fig: g2err2} shows that at the highest temperature ($T=0.5$), a relative error of up to $0.1$\% has an almost negligible effect on $\gamma_1(L,\beta)$ as the value remains less than $0.8$ for all cases.  At intermediate temperatures, it is observed that $\gamma_1(L,\beta)$ does not exceed $0.9$ ($T = 0.1$) and $0.95$ ($T = 0.05$). However, the results at the lowest temperature ($T = 0.01$) imply that to outperform the classical baseline, the relative error of the ground state energy must be sufficiently small. In the case of weak-coupling regime ($U=4$) in \cref{fig: g2err1}, we clearly see worse scaling results than the intermediate-coupling case ($U=8$), which aligns consistently with the observations in \cref{fig: g2}.

\begin{figure}[htbp]
    

    \begin{center}
   \begin{subfigure}[b]{0.45\textwidth}
    \centering
    \includegraphics[width=\textwidth]{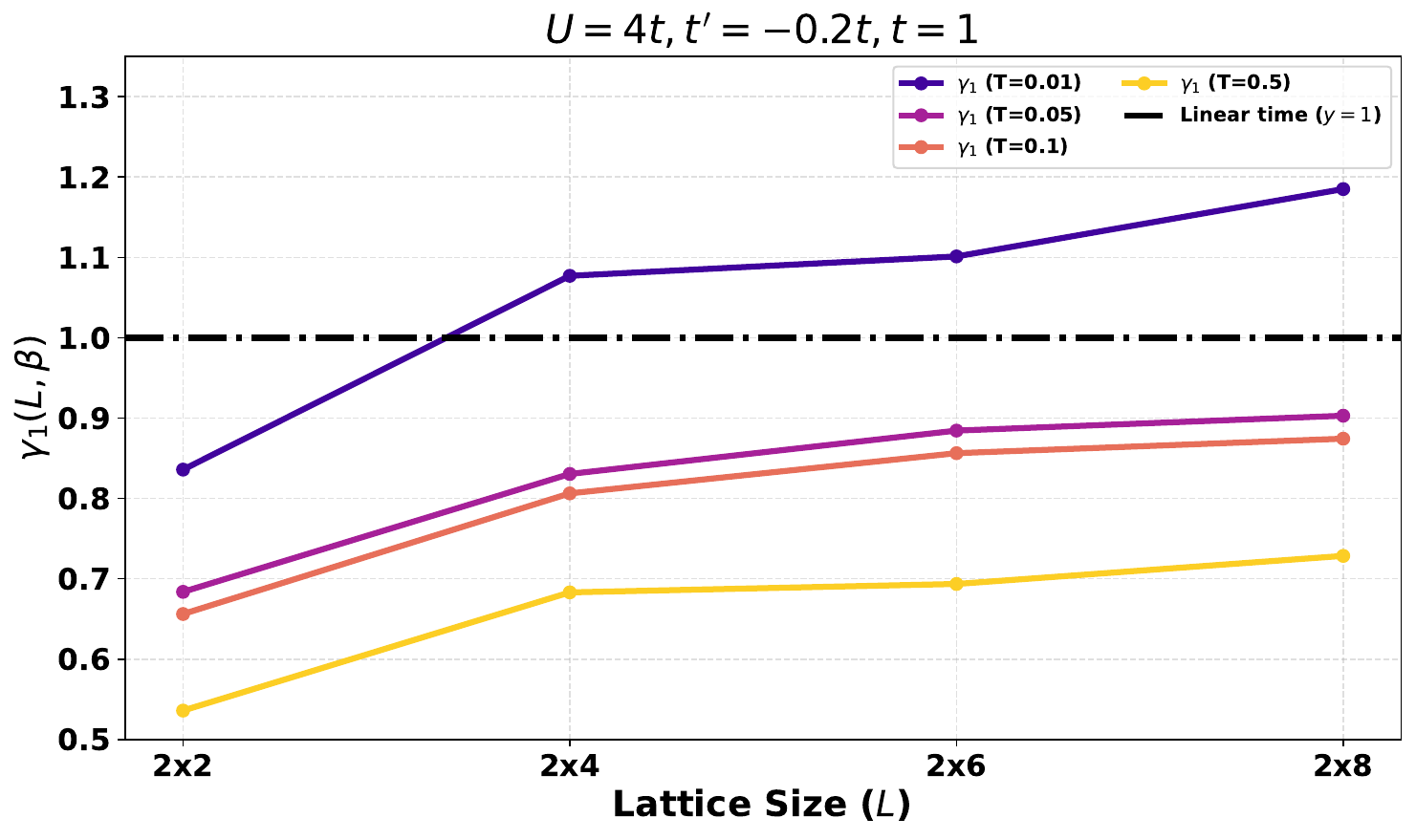}
    \end{subfigure}
    \begin{subfigure}[b]{0.45\textwidth}
    \centering
    \includegraphics[width=\textwidth]{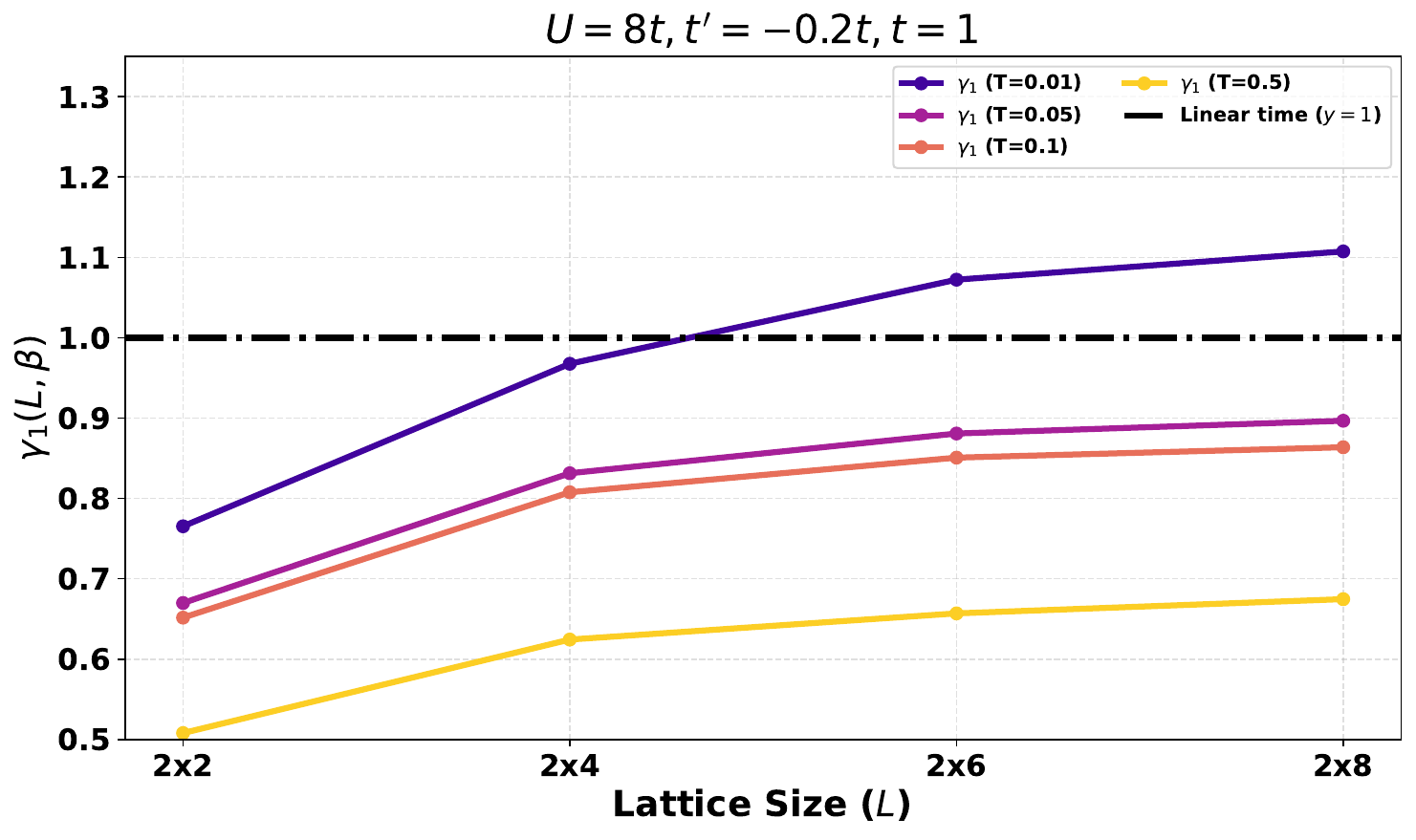}
    \end{subfigure}

\end{center}

    \caption{Simulation results evaluating the figure of merit $\gamma_1(L,\beta)$ for the proposed grand-canonical scheme (\cref{alg: grand cano new}). The figure of merit $\gamma_1$ is plotted as a function of system size ($L$) across different temperatures ($T \in \{0.01, 0.05, 0.1, 0.5\}$). (Left panel) Results for the weak-coupling regime ($U=4$). (Right panel) Results for the intermediate-coupling regime ($U=8$).}
    \label{fig: g2}
\end{figure}

\begin{figure}[htbp]
    

    \begin{center}
   \begin{subfigure}[b]{0.45\textwidth}
    \centering
    \includegraphics[width=\textwidth]{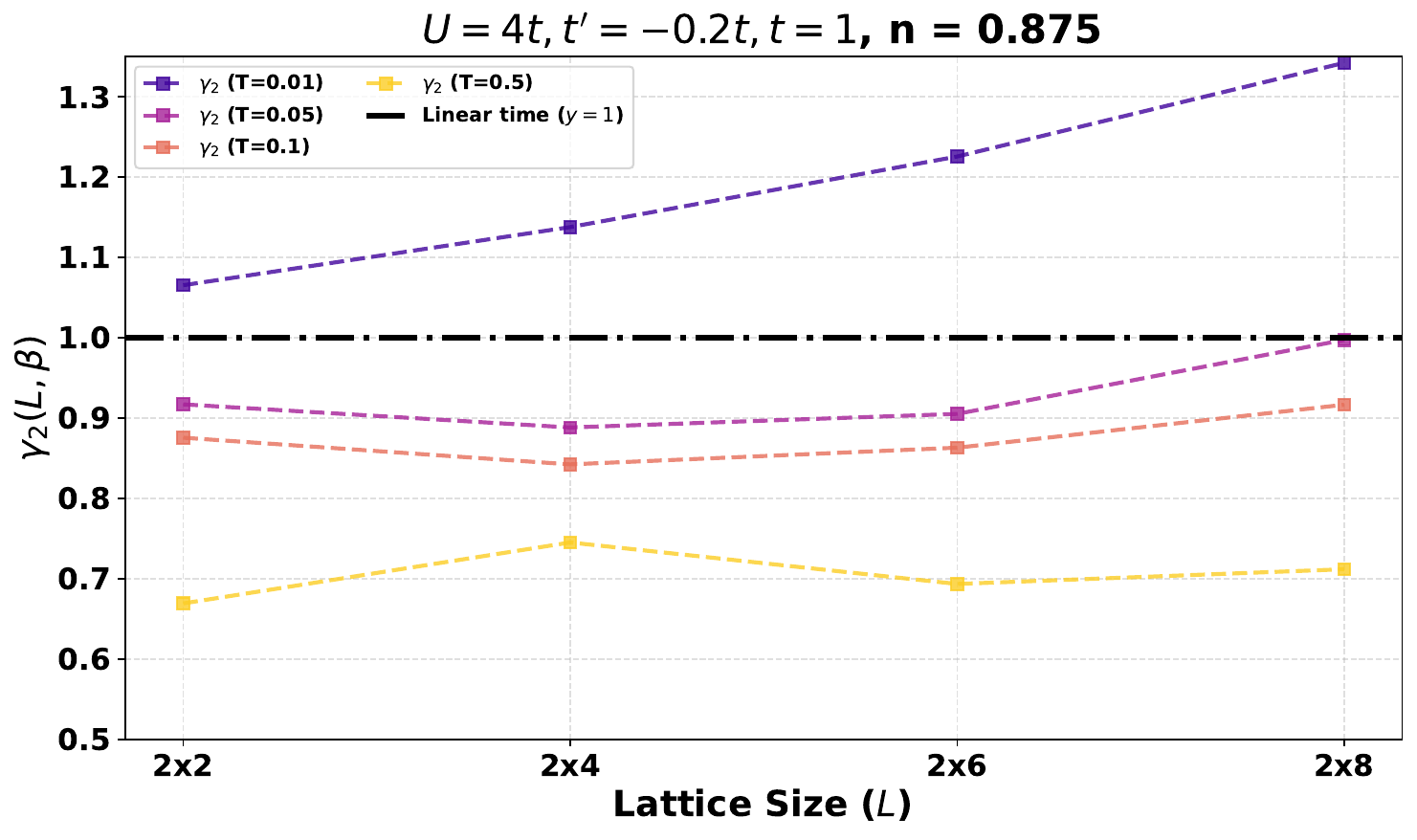}
    \end{subfigure}
    \begin{subfigure}[b]{0.45\textwidth}
    \centering
    \includegraphics[width=\textwidth]{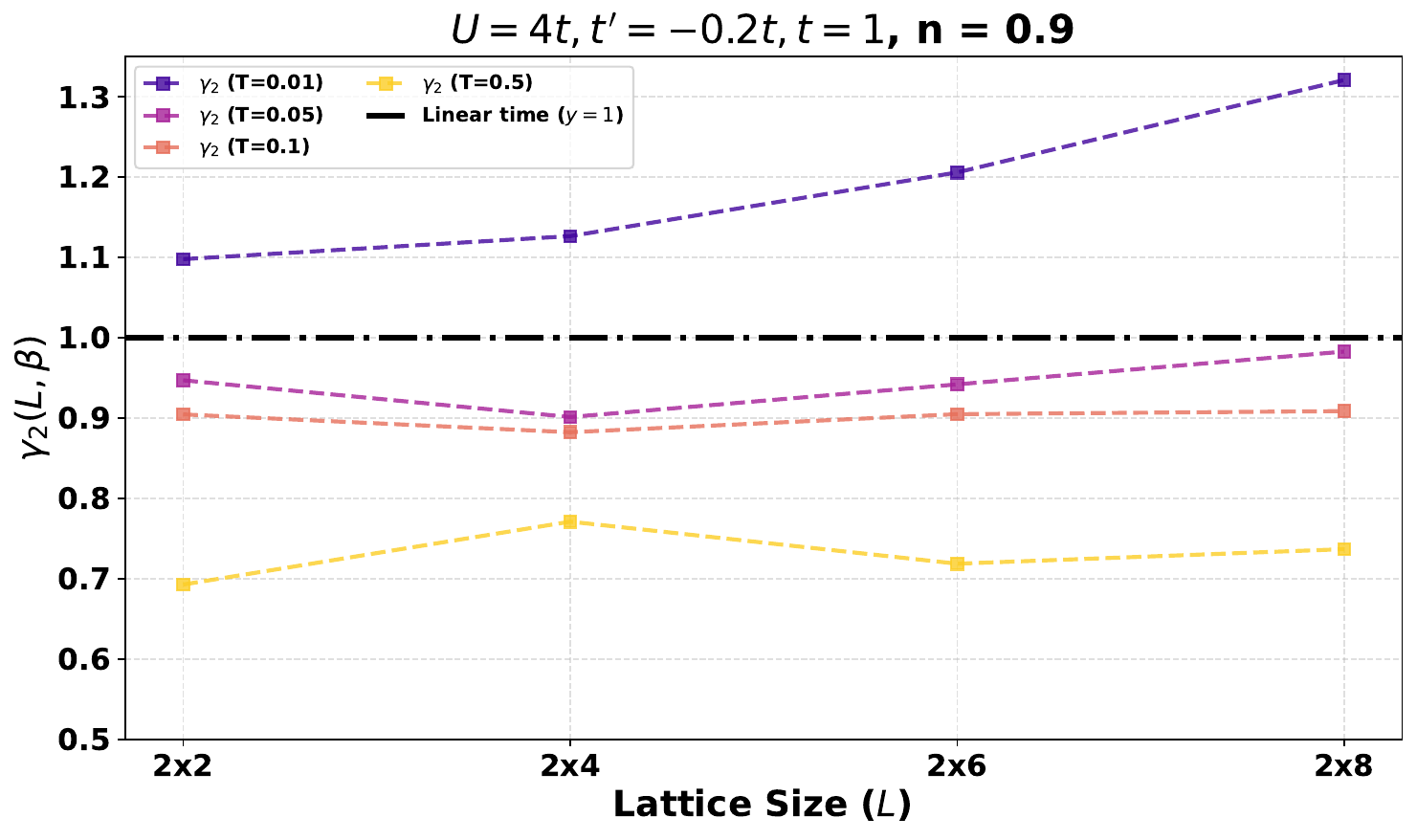}
    \end{subfigure}
    \begin{subfigure}[b]{0.45\textwidth}
    \centering
    \includegraphics[width=\textwidth]{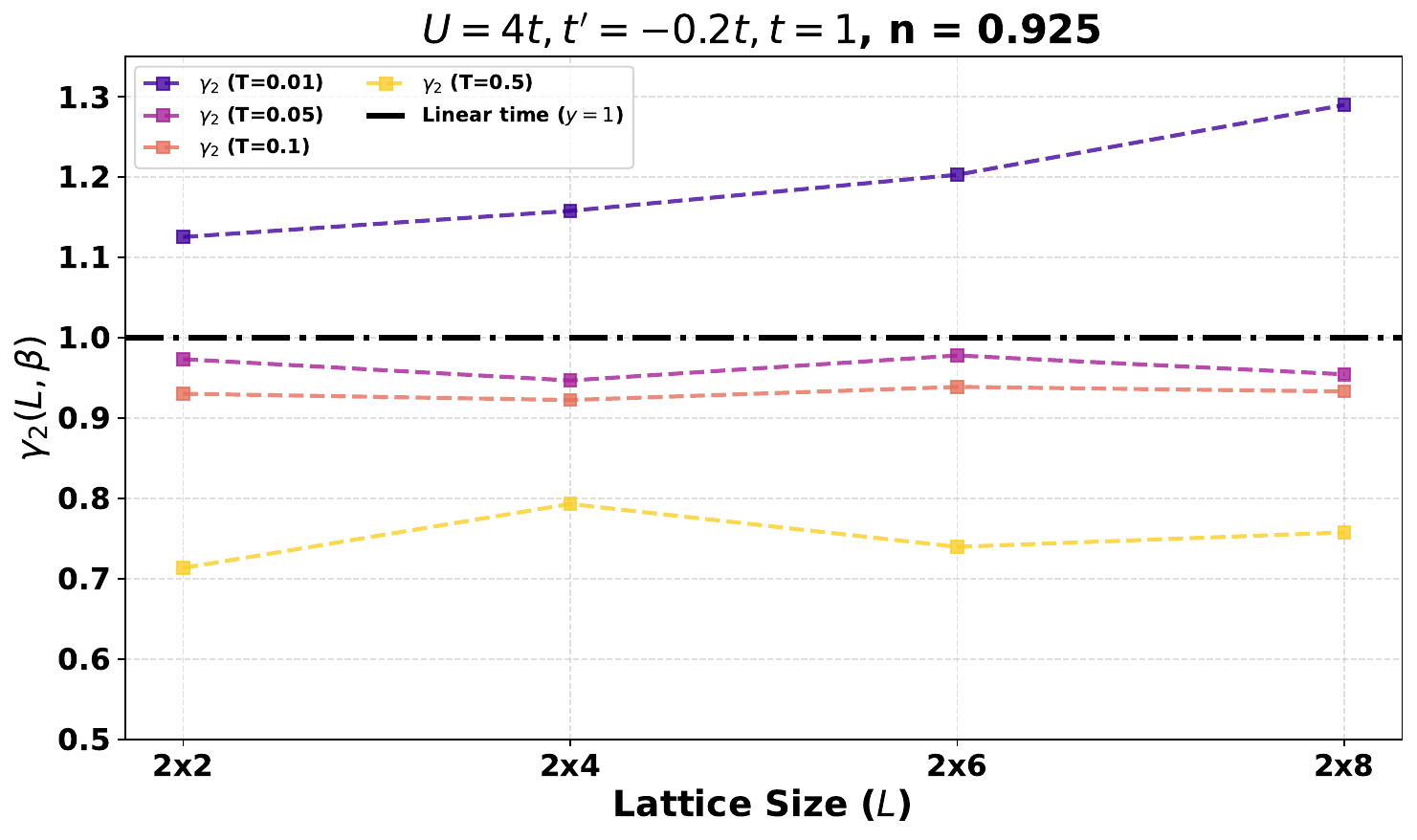}
    \end{subfigure}
    \begin{subfigure}[b]{0.45\textwidth}
    \centering
    \includegraphics[width=\textwidth]{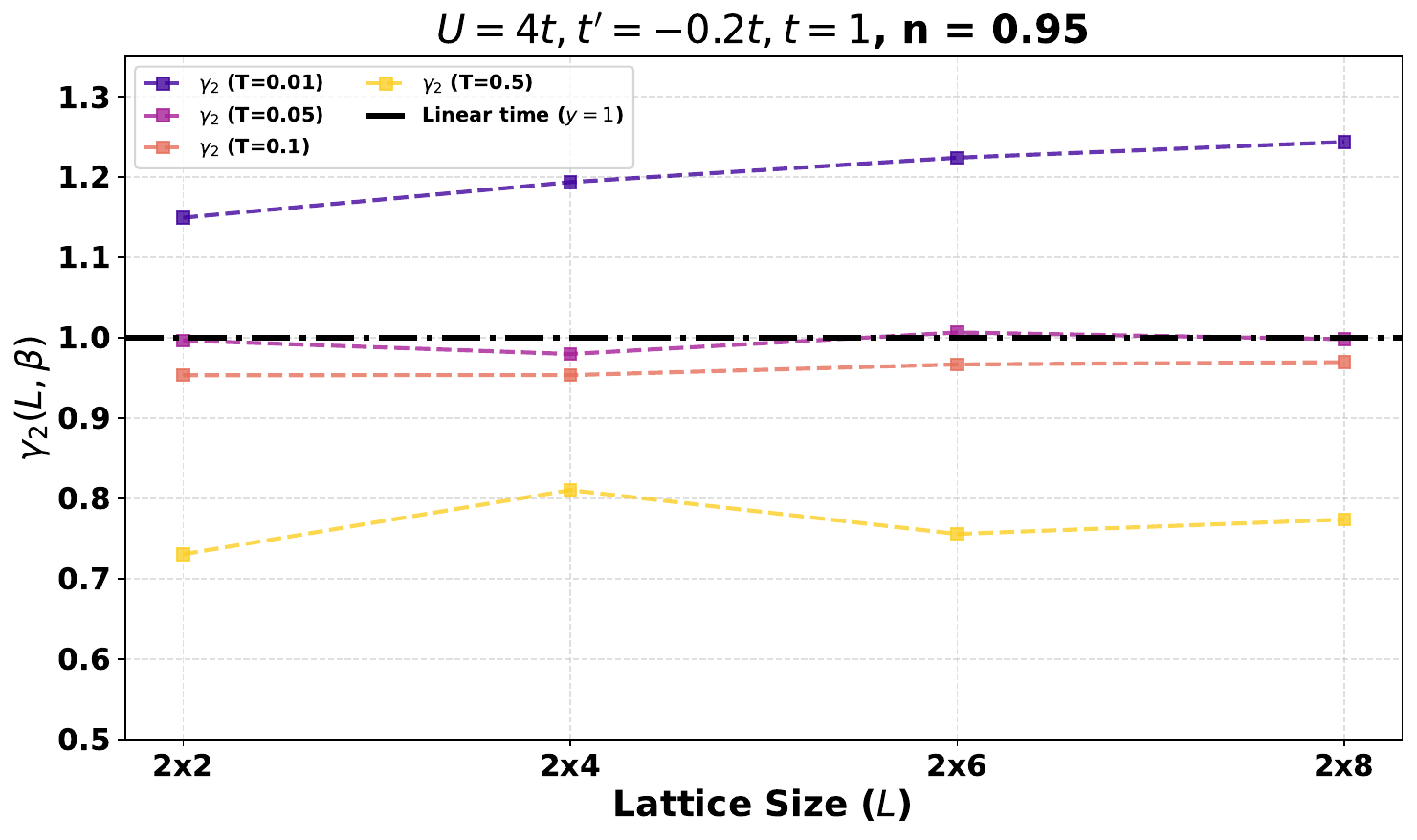}
    \end{subfigure}

    \begin{subfigure}[b]{0.45\textwidth}
    \centering
    \includegraphics[width=\textwidth]{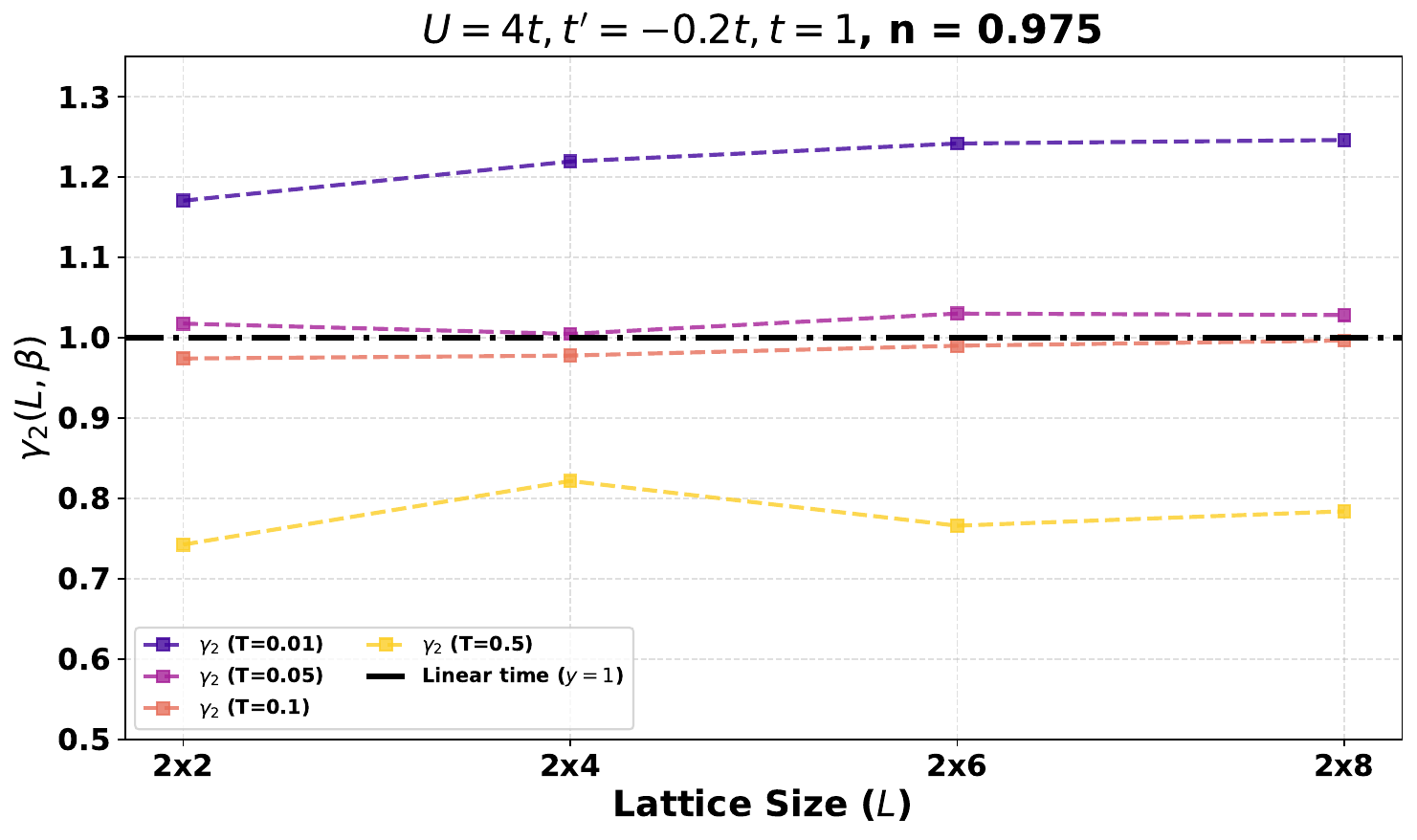}
    \end{subfigure}
    \begin{subfigure}[b]{0.45\textwidth}
    \centering
    \includegraphics[width=\textwidth]{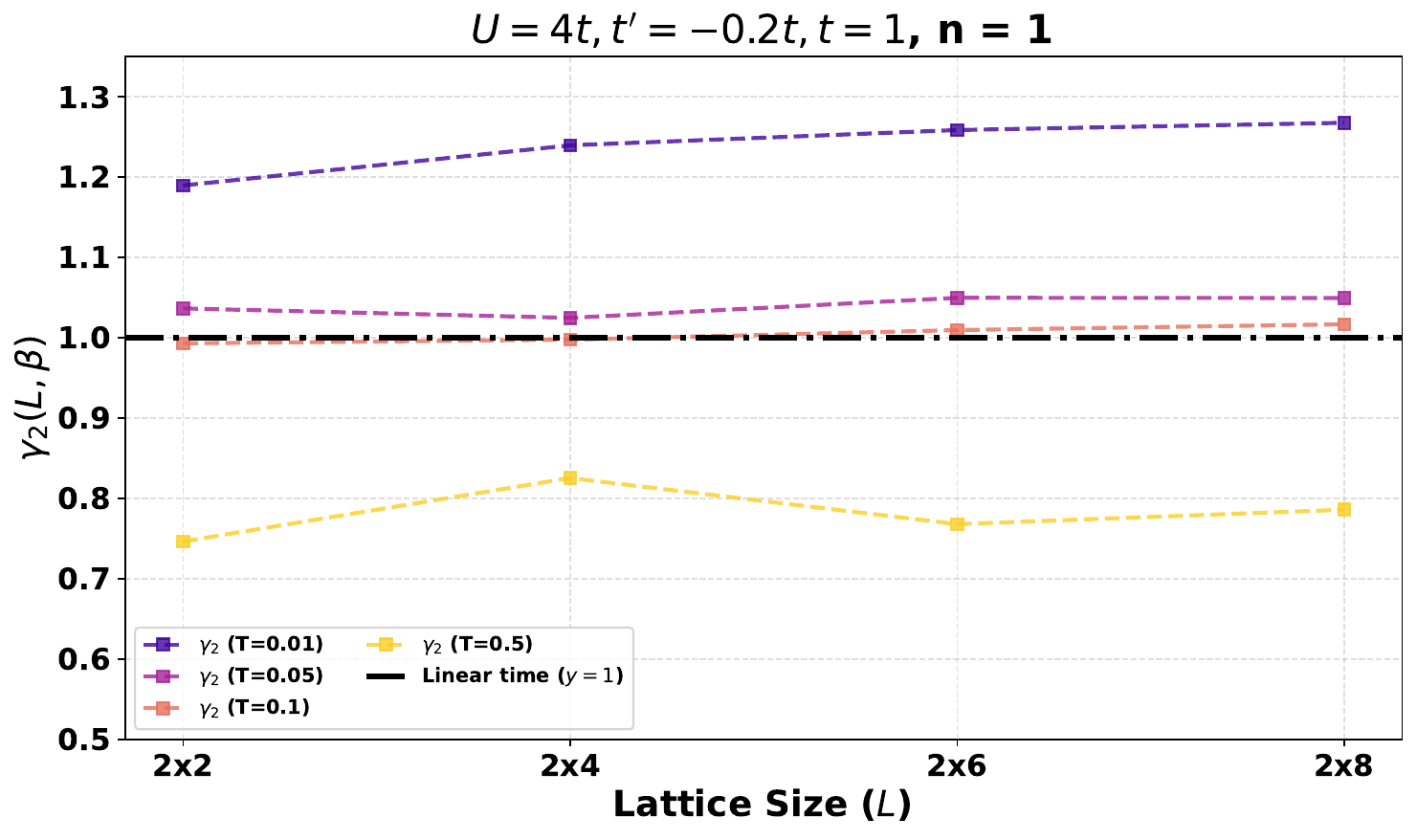}
    \end{subfigure}

\end{center}
   
    \caption{Simulation results evaluating the figure of merit $\gamma_2(L,\beta)$ for the conventional grand-canonical approach (\cref{alg: grand cano prev}) in the weak-coupling regime ($U=4$). The figure of merit $\gamma_2$ is plotted as a function of system size ($L$) across different temperatures ($T \in \{0.01, 0.05, 0.1, 0.5\}$). Results are obtained for electron densities $n=0.875, 0.9, 0.925, 0.95, 0.975$, and $1.0$, respectively.}
    \label{fig: g1U4}
\end{figure}

\begin{figure}[htbp]
    

    \begin{center}
    \begin{subfigure}[b]{0.45\textwidth}
    \centering
    \includegraphics[width=\textwidth]{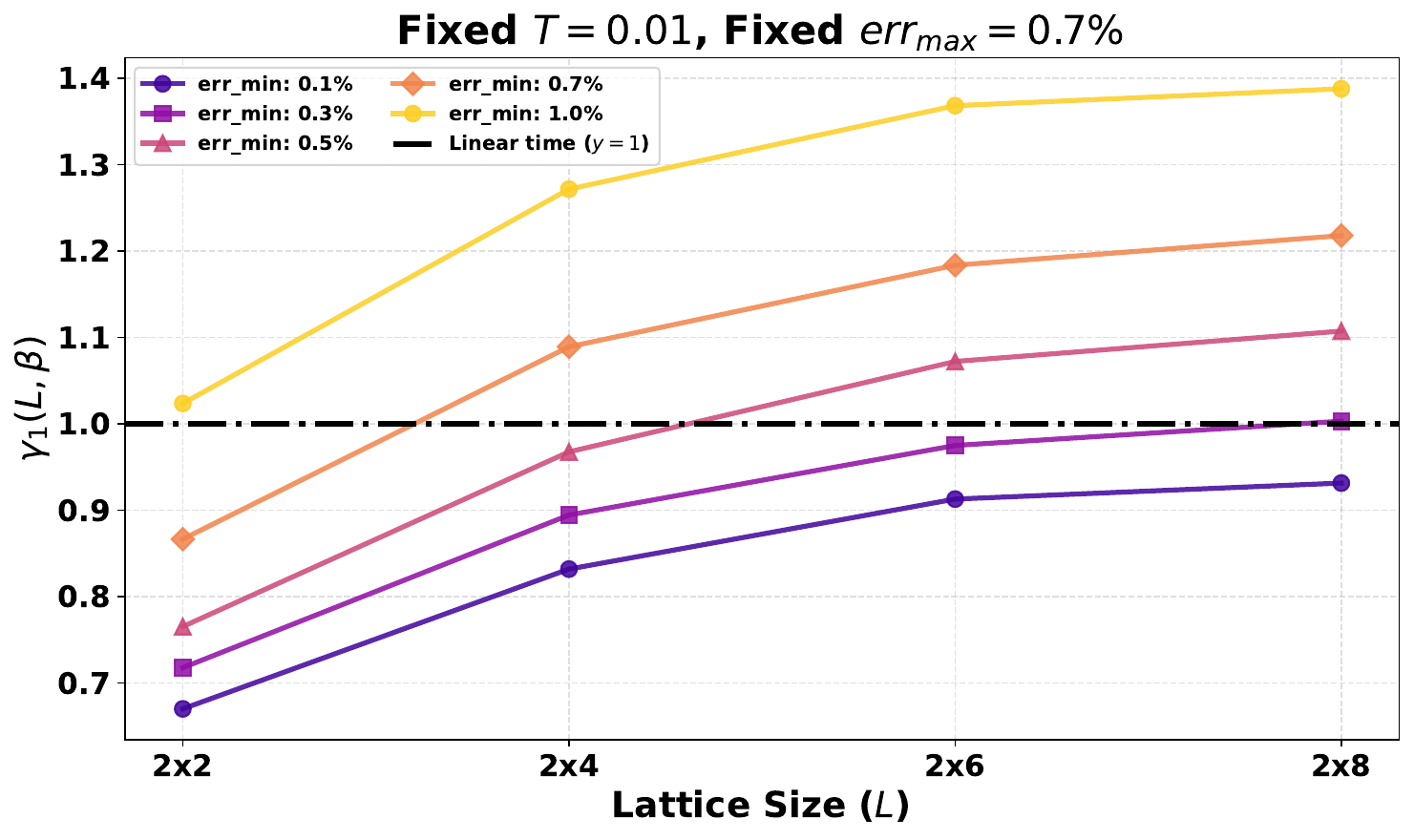}
    \end{subfigure}
    \begin{subfigure}[b]{0.45\textwidth}
    \centering
    \includegraphics[width=\textwidth]{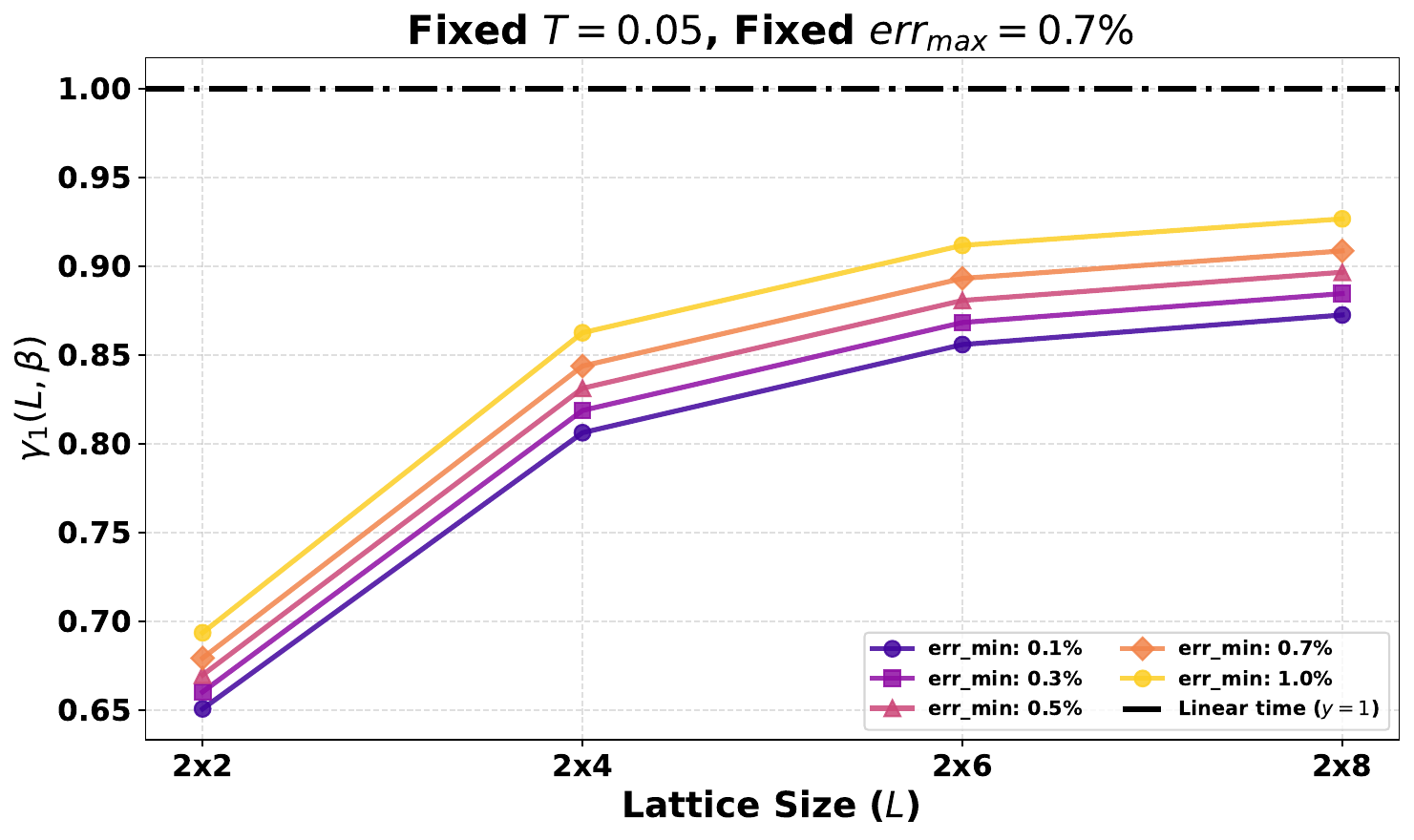}
    \end{subfigure}
    \begin{subfigure}[b]{0.45\textwidth}
    \centering
    \includegraphics[width=\textwidth]{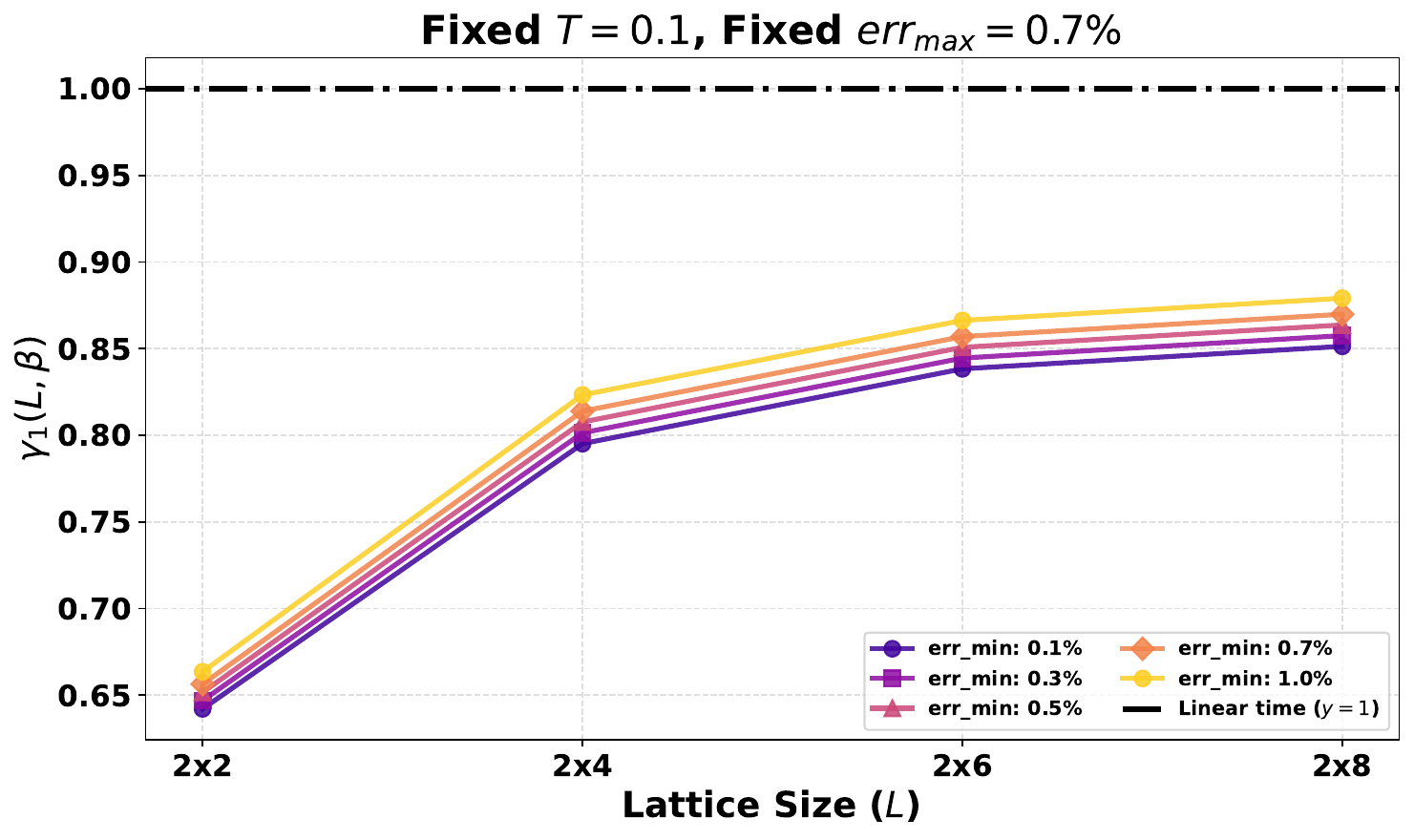}
    \end{subfigure}
    \begin{subfigure}[b]{0.45\textwidth}
    \centering
    \includegraphics[width=\textwidth]{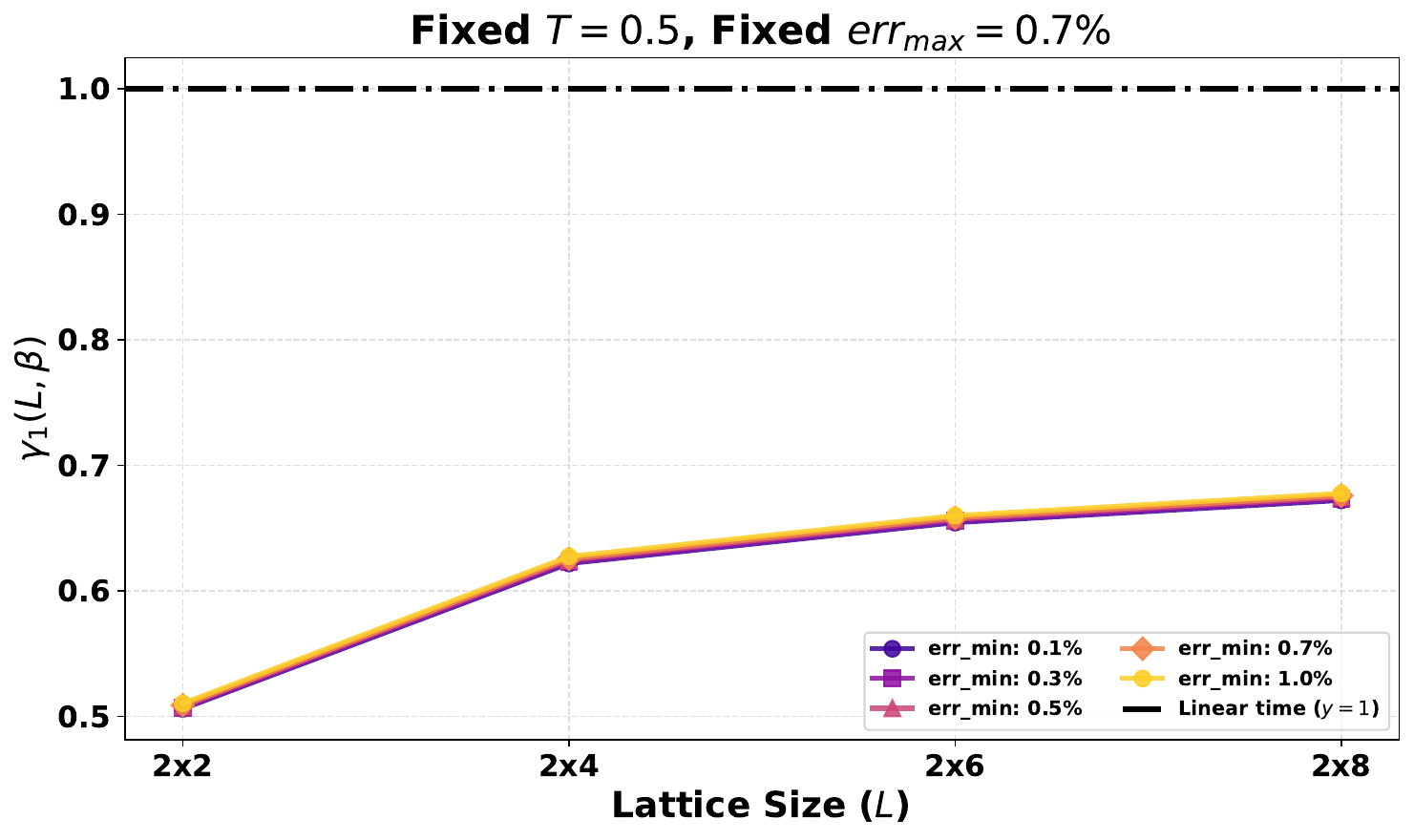}
    \end{subfigure}

\end{center}

    \caption{Simulation results illustrating the impact of the ground state energy relative error ($\epsilon_{\min}$) on the figure of merit $\gamma_1(L,\beta)$ for the proposed grand-canonical scheme in the intermediate-coupling regime ($U=8$). The figure of merit $\gamma_1$ is plotted as a function of system size ($L$) for varying relative errors $\epsilon_{\min} \in \{0.001, 0.003, 0.005, 0.007, 0.01\}$.   Results are obtained for different temperatures $T=0.01,0.05,0.1,0.5$.}
    \label{fig: g2err2}
\end{figure}

\section{Data Availability}

The research data supporting this manuscript are not publicly available at the time of submission. We intend to make the relevant data available at a later stage. Requests for further information or data may be directed to the authors.

\section{Acknowledgements}

This research was supported by the Korea Institute of Science and Technology Information(KISTI) (P25012). We used resources of the Center for Advanced Computation at Korea Institute for Advanced Study and the
National Energy Research Scientific Computing Center
(NERSC), a U.S. Department of Energy Office of Science User Facility operated under Contract No. DE-AC02-05CH11231. SC was supported by a KIAS Indi-
vidual Grant (No. CG090601) at Korea Institute for
Advanced Study and by Quantum Simulator Develop-
ment Project for Materials Innovation through the Na-
tional Research Foundation of Korea (NRF) funded by
the Korean government (Ministry of Science and ICT
(MSIT)) (No. NRF-2023M3K5A1094813). TK was sup-
ported by a KIAS Individual Grant (No. CG096001) and
by Institute of Information \& Communications Technology Planning \& Evaluation (IITP) grant funded by the
Korea government (MSIT) (No. 2022-0-01026).

\section{Method}

\begin{table}[t]
\centering
\renewcommand{\arraystretch}{1.25}
\begin{tabular}{ll}
\hline
\textbf{Notation} & \textbf{Description} \\
\hline

$L$ & System size (total number of spin orbitals, or total number of spatial modes/lattice sites) \\

$s$ 
& Particle number sector ($s=k$ for total particles, $s=(k,k')$ for spin-resolved) \\

$k,\, k'$ 
& Number of total/spin-up particles ($k$) and spin-down particles ($k'$) \\

$d_s$ 
& Dimension of the Hilbert space for sector $s$ (e.g., $\binom{L}{k}$ or $\binom{L}{k}\binom{L}{k'}$) \\

$H$ 
& Full second-quantized many-body Hamiltonian \\

$H(L,s)$ 
& Restriction of the Hamiltonian $H$ to the $s$-particle sector \\

$\widetilde{H}$ 
& Effective Hamiltonian generated by the Trotter formula \\

$N$ 
& Total particle number operator \\

$c_i^\dagger,\, c_i$ 
& Fermionic creation and annihilation operators on mode $i$ \\

$X_i,\, Y_i,\, Z_i$ 
& Pauli matrices acting on the $i$-th qubit \\

$\beta$ 
& Inverse temperature \\

$\mu$ 
& Chemical potential \\

$Z_{L,s}(\beta)$ 
& Canonical partition function for the $s$-particle sector \\

$\mathcal{Z}_L(\beta, \mu)$ 
& Grand-canonical partition function \\

$O$ 
& Target observable to be measured \\

$\widehat{Z},\, \widehat{o}$ 
& Algorithmic estimates of the partition function and observable expectation \\

$\omega_{L,s}$ 
& Mixed Dicke state (maximally mixed state representing the sector $s$) \\

$\epsilon$ 
& Target precision (relative for partition functions, additive for observables) \\

$\eta$ 
& Target failure probability for the overall estimation \\

$\delta$ 
& Spectral normalization margin \\

$\lambda_{L,s}$ 
& Normalization factor satisfying $\norm{H(L,s)/\lambda_{L,s}}_2 \le 1-\delta$ \\

$r$ 
& Number of Trotter steps \\

$N_{\text{shots}}$ 
& Circuit sampling complexity (total number of circuit measurements) \\

$\gamma$ 
& Figure of merit exponent defining the scaling of circuit sampling complexity \\

\hline
\end{tabular}
\caption{Summary of key mathematical notations and parameters used throughout this work.}
\label{tab:notation}
\end{table}

\subsection{Conventional grand-canonical scheme}

Beyond \eqref{ex: models}, we consider the usual class of quantum Hamiltonians of the form
\begin{align}\label{eq: generic Ham}
    H=\sum_{\ell=1}^m a_\ell P_{\ell},
\end{align}
where $m=\mathcal{O}(\text{poly}(L))$, $a_\ell=\mathcal{O}(1)$, and $P_\ell$ denotes an $L$-qubit Pauli operator. For this model, a numerical scheme for estimating grand-canonical thermal properties is summarized in \cref{alg: grand cano prev}. The following theorem establishes the overall complexity of the algorithm.
\begin{theorem}\label{thm: main 2}
    Assume a normalization factor $\lambda_L$ such that $\norm{\frac{H-\mu N}{\lambda_L}} \le 1-\delta$ for some $\delta \in (0,1)$. 
    Using the first-order Trotter formula with a GQSP circuit in \cref{fig: GQSP}, estimating the partition function $\tr(e^{-\beta (H-\mu N)})$ to a relative error $\epsilon$ and Gibbs-state property $\tr(\frac{e^{-\beta (H-\mu N)}}{\tr(e^{-\beta (H-\mu N)})} O)$ to an additive error $\epsilon\norm{O}_2$ with failure probability $\eta$ for the Hamiltonian in \eqref{eq: generic Ham} requires
    \begin{itemize}
        \item $2L$ qubits for the initial state preparation 
        \item $L$ system qubits and a single ancilla qubit for the GQSP
        \item  $r = \Omega\left(\Lambda_H^2\max\left[\frac{1}{\lambda_L\delta},\; \frac{\beta}{\log(1+\epsilon)}\right]\right)$
        \item $\mathcal{O}\left(r\text{poly}(L)\left(\ln\left(\frac{1}{\epsilon}\right)+\beta\lambda_L\right)\frac{1}{\delta}\right)$ gate complexity
        \item $\mathcal{O}\left(\frac{2^L e^{\beta\lambda_L}}{\operatorname{tr}(e^{-\beta (H-\mu N)})\epsilon^2}\log\frac{1}{\eta}\right)$ sample complexity
    \end{itemize}  
\end{theorem}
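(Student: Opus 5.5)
The proof will follow the proof of \cref{thm: main 1} almost verbatim, with the fixed-particle sector replaced by the full $2^L$-dimensional Hilbert space and $H$ replaced by $H-\mu N$.

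\textbf{Initial state and circuit.} The initial state will be the maximally mixed state $I/2^L$. We prepare it by making $L$ Bell pairs on $2L$ qubits and tracing out one half of each pair. This accounts for the $2L$ qubits. For the Hamiltonian simulation, $H-\mu N=\sum_\ell a_\ell P_\ell-\mu\sum_j\frac{1}{2}(I-Z_j)$ is again a sum of $\mathrm{poly}(L)$ Pauli terms. We Trotterize each term with the first-order formula. Each exponential $e^{-i\theta P_\ell}$ costs $\mathcal{O}(L)$ gates, so one Trotter step costs $\mathrm{poly}(L)$ gates. Here no symmetry preservation is needed, because the initial state already lives on the whole space.

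\textbf{Polynomial approximation and Trotter error.} Next we apply \cref{lem: fourier approx exp} and \cref{lem: state GQSP} to $(H-\mu N)/\lambda_L$. This gives a Laurent polynomial $f$ of degree $D=\mathcal{O}\big((\ln\frac{1}{\epsilon}+\beta\lambda_L)/\delta\big)$ such that $f(e^{iA})\approx e^{-\beta\lambda_L/2}e^{-\beta A/2}$, where $A$ is the simulated generator. The GQSP circuit of \cref{fig: GQSP} calls the controlled Trotter circuit $\mathcal{O}(D)$ times, which gives the stated gate count $\mathcal{O}(rD\,\mathrm{poly}(L))$.

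For the Trotter error, \cref{lem: p_order_generator} gives an effective generator $\widetilde H$ with $\norm{\widetilde H-(H-\mu N)}_2\le \Lambda_H^2\tau/r$. The step count $r$ must satisfy two conditions:
\begin{itemize}
\item[(i)] By Weyl's inequality, $\norm{\widetilde H/\lambda_L}\le 1$ must still hold. This requires $\Lambda_H^2/r\lesssim\lambda_L\delta$, giving the first term in the max.
\item[(ii)] The spectrum of $\widetilde H$ must be shifted by at most $\log(1+\epsilon)/\beta$. Then every eigenvalue weight $e^{-\beta\tilde E_i}$ lies within a factor $(1\pm\epsilon)$ of $e^{-\beta E_i}$. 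Hence $\tr(e^{-\beta\widetilde H})$ is a relative-$\epsilon$ approximation of $\tr(e^{-\beta(H-\mu N)})$, independently of how small $P(0)$ is. This mirrors \eqref{eq: Z analysis} and gives the second term.
\end{itemize}
The Gibbs-state observable is handled the same way. The state $e^{-\beta\widetilde H}/\tr(e^{-\beta\widetilde H})$ is close in trace distance to the true Gibbs state, since both the numerator and the normalization are controlled multiplicatively.

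\textbf{Sampling.} The success probability is $P(0)\approx 2^{-L}e^{-\beta\lambda_L}\tr(e^{-\beta(H-\mu N)})$. We estimate it to relative error $\epsilon$ with a Bernoulli mean. A Chernoff bound gives $\mathcal{O}(\log(1/\eta)/(P(0)\epsilon^2))$ shots, which is the stated sample complexity. For observables, we condition on success. Hoeffding's bound then requires $\mathcal{O}(\log(1/\eta)/\epsilon^2)$ successful shots, and so the same total scaling. A Pauli decomposition of $O$ contributes the factor $C$ absorbed into $\norm{O}_2$.

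\textbf{Main obstacle.} The step I expect to be hardest is (ii). There the relative Trotter error has to be converted into a multiplicative error on the partition function uniformly across the whole spectrum, and the errors from the Fourier truncation and the Trotterization must be combined so that their sum stays below $\epsilon$ in relative terms. I would handle this by splitting $\epsilon$ into halves and bounding each eigenvalue's contribution separately.
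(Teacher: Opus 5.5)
Your route is the paper's: its proof of \cref{thm: main 2} simply reruns the proof of \cref{thm: main 1} with $I/2^L$ in place of the mixed Dicke state and an arbitrary first-order Trotter circuit for $H-\mu N$. Your treatment of the initial state, the GQSP degree, the two conditions on $r$, and the partition-function sampling matches it. Two steps in the observable part, however, do not hold as you justify them.

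First, the trace-distance claim cannot be obtained ``the same way'' as (ii). Weyl's inequality only controls eigenvalues. The effective generator $\widetilde H$ is a BCH series containing commutators of the Trotter terms, so in general it does not share eigenvectors with $H-\mu N$. Multiplicative control of the Boltzmann weights and of the normalization therefore does not by itself control $e^{-\beta\widetilde H}$ against $e^{-\beta(H-\mu N)}$ as operators. Your conclusion is true, but it needs an operator-level perturbation bound: Duhamel's formula combined with the Schatten--H\"older inequality, which is exactly \cref{lem: thermal_observable_error}. With $\rho,\widetilde\rho$ the Gibbs states of $H-\mu N$ and $\widetilde H$, this gives $\abs{\tr(O\widetilde\rho)-\tr(O\rho)}\le 2\beta\norm{O}_2\norm{\widetilde H-(H-\mu N)}_2\le 2\log(1+\epsilon)\norm{O}_2$ under your condition (ii).

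Second, the Pauli factor $C=(\sum_a\abs{c_a})^2$ cannot be ``absorbed into $\norm{O}_2$''. You only have $\norm{O}_2\le\sum_a\abs{c_a}$, and the ratio $(\sum_a\abs{c_a})^2/\norm{O}_2^2$ can be as large as the number $m$ of Pauli terms. For example, for $m$ pairwise anticommuting strings, $\norm{\sum_a P_a}_2=\sqrt m$ while $\sum_a\abs{c_a}=m$. Term-by-term estimation via \cref{lem: observable} to additive error $\epsilon\norm{O}_2$ therefore costs this extra factor on top of the stated sample count. To get the sample complexity exactly as stated, measure $O$ directly in its eigenbasis on the post-selected state. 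Each outcome then lies in $[-\norm{O}_2,\norm{O}_2]$, and Hoeffding gives $\mathcal{O}(\epsilon^{-2}\log\frac{1}{\eta})$ successful shots. Otherwise the bound must carry the factor $(\sum_a\abs{c_a})^2/\norm{O}_2^2$, the analogue of $C$ in \cref{thm: main 1}.
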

The proof follows immediately from that of \cref{thm: main 1}, given in
\cref{Appendix: thm proofs}. 

We note that for the second-quantized model \eqref{ex: models}, the polynomial factor in gate complexity reduces to $L^4$. In addition, under JW transformation using $2L$ qubits, the qubit number $L$ appearing in this theorem is replaced by $2L$, so that the full Hilbert-space dimension $2^L$ becomes $4^L$.

\begin{algorithm}[H]
\SetAlgoLined
    \KwData{System size $L$, inverse temperature $\beta$, precision $\epsilon>0$, chemical potential $\mu$, normalization factor $\lambda_{L}>0$ (where $\norm{(H-\mu N)/\lambda_{L}}\leq 1-\delta$), and observable $O$}
    \KwResult{Approximations of the grand-canonical partition function $\widehat{\mathcal{Z}}_{L}(\beta,\mu)$ and the thermal expectation value $\widehat{o}_{L}^{\mathrm{GC}}$ of $O$}

    1. Construct the GQSP circuit (\cref{fig: GQSP}) using Trotter circuits for
$e^{\pm i(H-\mu N)/\lambda_{L}}$ and initial state $\omega=I/2^L$
($\omega=I/4^L$ for a spin-$\frac{1}{2}$ fermionic system with $L$ spatial modes).

    2. \textbf{Partition Function Estimate:} Execute the circuit for
$N_{\text{shots}}$ shots. Let $N_0$ be the number of ancilla $\ket{0}$ outcomes.
Estimate
$\widehat{\mathcal{Z}}_{L}(\beta,\mu)
=2^L e^{\beta\lambda_{L}}\frac{N_0}{N_{\text{shots}}}$
($\widehat{\mathcal{Z}}_{L}(\beta,\mu)
=4^L e^{\beta\lambda_{L}}\frac{N_0}{N_{\text{shots}}}$
for the spinful case).

    3. \textbf{Observable Estimate:} Execute the circuit for $N_{\text{shots}}$ shots. Conditioned on the $M_0$ occurrences where the ancilla is $\ket{0}$, measure $O$ on the system register to obtain outcomes $\{o_j\}_{j=1}^{M_0}$. Estimate $\widehat{o}_{L}^{\mathrm{GC}} = \frac{1}{M_0} \sum_{j=1}^{M_0}o_j$.

\caption{Quantum estimation of grand-canonical ensemble properties (direct method)}
\label{alg: grand cano prev}
\end{algorithm}

\subsection{Lemmas for proving the main theorem}

To estimate the number of Trotter steps in \cref{thm: main 1}, we use the result \cite[Lemma 1]{watson2025exponentially}. Let $H=\sum_{\gamma=1}^{\Gamma} H_j$ be a target Hamiltonian composed of $\Gamma$ time-independent terms. For a discrete time step $\Delta\tau=\tau/r$, the first-order Trotter formula is $S_1(\Delta\tau)
    :=
    \prod_{\gamma=1}^{\Gamma}e^{-iH_\gamma\Delta\tau}
    =
    e^{-i\widetilde H(\Delta\tau)\Delta\tau}$,
where $\widetilde H(\Delta\tau)$ denotes the corresponding effective Hamiltonian. For convenience, we write $\widetilde H=\widetilde H(\Delta\tau)$.

To exploit the result \cite[Lemma 1]{watson2025exponentially} for our purposes, we restate it as follows.

\begin{lemma}\label{lem: p_order_generator}If $r\ge 4\tau \Lambda_H$,
then $\norm{\widetilde H-H}_2
    \le
    \Lambda_H^2\Delta \tau
    $, where $\Lambda_H:=\sum_{\gamma=1}^{\Gamma}\norm{H_\gamma}_2$.
\end{lemma}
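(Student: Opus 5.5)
The statement is a restatement of \cite[Lemma 1]{watson2025exponentially}. I would prove it directly with a BCH-type argument. The key point is that the bound must hold on the generator $\widetilde H$ itself, not just on the unitaries $S_1(\Delta\tau)$ versus $e^{-iH\Delta\tau}$.

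\textbf{Step 1: define $\widetilde H$ by a convergent logarithm.} Write $S_1(\Delta\tau)=\prod_{\gamma}e^{-iH_\gamma\Delta\tau}$. The hypothesis $r\ge 4\tau\Lambda_H$ gives $\Delta\tau\,\Lambda_H\le 1/4$, so the total exponent norm $\sum_\gamma\norm{H_\gamma\Delta\tau}_2$ is at most $1/4$. In this regime the multi-factor BCH series (or the Magnus expansion) converges absolutely. I would therefore define $\widetilde H := \frac{i}{\Delta\tau}\log S_1(\Delta\tau)$ through that series; this is the branch continuously connected to $H$ as $\Delta\tau\to 0$.

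\textbf{Step 2: avoid the explicit series.} To get clean constants I would use a continuous interpolation instead of expanding BCH term by term. Set $S(t)=\prod_\gamma e^{-iH_\gamma t}$ for $t\in[0,\Delta\tau]$ and $S(t)=e^{-iK(t)t}$. Differentiating gives $\frac{d}{dt}S(t)\,S(t)^{-1}=-i\sum_\gamma V_\gamma(t)$, where $V_\gamma(t)$ is $H_\gamma$ conjugated by the preceding exponentials. Each difference $V_\gamma-H_\gamma$ is bounded by $t\sum_{\gamma'<\gamma}\norm{[H_{\gamma'},H_\gamma]}_2\le 2t\norm{H_\gamma}_2\sum_{\gamma'<\gamma}\norm{H_{\gamma'}}_2$.

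\textbf{Step 3: relate to $K(t)$ and integrate.} Use the derivative-of-exponential formula $\frac{d}{dt}e^{X}=\int_0^1e^{sX}\dot X e^{(1-s)X}ds$ with $X=-iK(t)t$. Since $\norm{tK(t)}_2$ stays $\le 1/4$ plus small corrections, the map $\dot X\mapsto \int_0^1 e^{sX}\dot X e^{-sX}ds$ is invertible with inverse bounded by a constant close to $1$. This yields a differential inequality for $\norm{tK(t)-tH}_2$ whose source term is the commutator sum. The source is at most $t\sum_{\gamma'<\gamma}2\norm{H_{\gamma'}}\norm{H_\gamma}\le t\Lambda_H^2$. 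Integrating from $0$ to $\Delta\tau$ should give $\norm{\widetilde H-H}_2\le c\,\Lambda_H^2\Delta\tau$. The condition $\Lambda_H\Delta\tau\le1/4$ is what keeps $c\le 1$: the leading term contributes $1/2\cdot 2=1$ times $\Lambda_H^2\Delta\tau/2$, and the higher-order tails form a geometric series in $\Lambda_H\Delta\tau\le1/4$ that fits in the remaining slack.

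\textbf{Main obstacle.} The hard part is the constant bookkeeping in Step 3. One must show the higher-order BCH/Magnus corrections sum to at most half of the leading term when $\Lambda_H\Delta\tau\le 1/4$, and ensure $K(t)$ stays on the principal branch so that no spectral wrap-around occurs. If the direct estimate is too lossy, I would fall back on citing \cite[Lemma 1]{watson2025exponentially} verbatim. Its hypotheses (first-order product formula, $\Delta\tau\Lambda_H\le 1/4$) match ours exactly, so the lemma follows by specializing their statement to $p=1$.
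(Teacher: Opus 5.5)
Your interpolation route is genuinely different from the paper's and can be made to work, but Step~3 has a gap, and it is not the one you flag. Write $D(t):=tK(t)-tH$ and $X=-itK$. Your identities then give exactly $\dot D=R+(\Phi_X^{-1}-\mathrm{id})(H+R)$, with $\Phi_X=(e^{\mathrm{ad}_X}-1)/\mathrm{ad}_X$, $R=\sum_\gamma(V_\gamma-H_\gamma)$, and $\norm{R}_2\le t\Lambda_H^2$.

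Knowing only that $\Phi_X^{-1}$ is close to the identity gives $\norm{(\Phi_X^{-1}-\mathrm{id})(H)}_2\lesssim\tfrac12\norm{\mathrm{ad}_X}\norm{H}_2\approx t\Lambda_H^2$. That is the same order as your source $R$, so it is not part of the tail. You would end up with $\norm{\widetilde H-H}_2\le\Lambda_H^2\Delta\tau\,(1+c\,\Lambda_H\Delta\tau)$ for some $c>0$, which misses the stated constant.

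The missing idea is a cancellation. Factor $\Phi_X^{-1}-\mathrm{id}=h(\mathrm{ad}_X)\circ\mathrm{ad}_X$ with $h(z)=\frac{1}{e^z-1}-\frac1z$. Since $[tH,H]=0$, you get $\mathrm{ad}_X(H+R)=-i\bigl([D,H]+[D,R]+t[H,R]\bigr)$, so the correction is $O(\norm{D}_2\Lambda_H+t^2\Lambda_H^3)$ and genuinely higher order.

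Your branch concern is then harmless. Since $\norm{S(t)-I}_2\le t\Lambda_H\le\tfrac14$, you have $\norm{tK(t)}_2\le 2\arcsin(t\Lambda_H/2)<0.26$, hence $\norm{\mathrm{ad}_X}\le 0.51$. On that range the absolute-coefficient majorant of $h$, namely $\tfrac12+\tfrac1z-\tfrac12\cot\tfrac z2$, stays below $0.55$. In the variable $u=t\Lambda_H$ this gives, in integral form, $\frac{d}{du}\norm{D}_2\le u+1.1u^2+1.375\norm{D}_2$. Gr\"onwall then yields $\norm{\widetilde H-H}_2\le 0.67\,\Lambda_H^2\Delta\tau$ for all $\Lambda_H\Delta\tau\le\tfrac14$, so the lemma follows.

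The paper avoids any differential equation. It imports from \cite[Lemma 1]{watson2025exponentially} the convergent expansion $\widetilde H=H+\sum_{j\ge1}E_{j+1}\Delta\tau^j$ with nested-commutator bounds on $E_{j+1}$. It estimates these crudely via $\norm{[A,B]}_2\le 2\norm{A}_2\norm{B}_2$ to get $\norm{E_{j+1}}_2\le\frac{(2\Lambda_H)^{j+1}}{2(j+1)^2}$. Then it sums $\Lambda_H\sum_{j\ge1}\frac{(2\Lambda_H\Delta\tau)^j}{(j+1)^2}\le\frac{\Lambda_H^2\Delta\tau}{2(1-2\Lambda_H\Delta\tau)}\le\Lambda_H^2\Delta\tau$, using $(j+1)^{-2}\le\tfrac14$ and $2\Lambda_H\Delta\tau\le\tfrac12$.

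That proof is shorter, but it outsources convergence and all per-order bounds to the citation. Your argument, once repaired, is self-contained and gives a somewhat better constant near $\Lambda_H\Delta\tau=\tfrac14$. Your fallback is essentially the paper's proof. However, as the paper uses it, the cited lemma supplies per-order commutator bounds rather than the final inequality, so ``specializing to $p=1$'' still requires the summation step just described.
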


\begin{proof}

Based on the result \cite[Lemma 1]{watson2025exponentially}, the first-order Trotter formula satisfies that $a_{\max}=1$, $\Upsilon=1$, and $\norm{E_{j+1}}_2\le \frac{\alpha_{\mathrm{comm}}^{(j)}}{j^2}$, where $\alpha_{\mathrm{comm}}^{(j)}=\sum_{\gamma_1,\ldots,\gamma_j=1}^{\Gamma}
\norm{
[H_{\gamma_1},[H_{\gamma_2},\ldots,H_{\gamma_j}]]}_2
$, the $j$-th folded nested commutator. Using $\norm{[A,B]}_2\le 2\norm{A}_2\norm{B}_2$, we have 
\begin{align}
    \alpha_{\mathrm{comm}}^{(j)}\le \sum_{\gamma_1,\ldots,\gamma_j=1}^{\Gamma}
2^{j-1}\prod_{i=1}^j\norm{H_{\gamma_i}}_2=2^{j-1}\Lambda_H^j,
\end{align}and thus
\begin{align}
    \norm{E_{j+1}}_2\le \frac{\alpha_{\mathrm{comm}}^{(j)}}{j^2}=\frac{1}{2(j+1)^2}(2\Lambda_H)^{j+1}.
\end{align}
Then, $\norm{\widetilde H-H}_2
    \le
    \Lambda_H
    \sum_{j=1}^{\infty}
    \frac{(2\Lambda_H\Delta\tau)^j}{(j+1)^2}$.
If $4\Lambda_H\Delta\tau<1$, using $(j+1)^{-2}\le 1/4$ for $j\ge1$ gives
\begin{align}
    \norm{\widetilde H-H}_2
    \le
    \frac{\Lambda_H^2\Delta\tau}
    {2(1-2\Lambda_H\Delta\tau)}\le \Lambda_H^2\Delta\tau.
\end{align}

\end{proof}

\begin{lemma}\label{lem: effective Ham}
The first-order Trotter formula composed of circuits in \cref{Trotter circuits} preserves the number symmetry and so does the corresponding effective Hamiltonian to the Hamiltonian simulation $e^{iH\tau}$.
\end{lemma}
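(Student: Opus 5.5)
The plan is to show that every generator appearing in \cref{Trotter circuits} commutes with the number operator $N=\sum_j n_j=\sum_j \frac{1}{2}(I-Z_j)$. From this, every block $e^{-i\tau G}$ commutes with $N$, and so does any finite product $S_1(\Delta\tau)$ of such blocks. Finally, we transfer the symmetry to the effective Hamiltonian $\widetilde H$ defined by $S_1(\Delta\tau)=e^{-i\widetilde H\Delta\tau}$.

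\textbf{Step 1: generators.} Each grouped generator is, up to a real scalar, the Jordan--Wigner image of a fermionic operator that is manifestly number preserving:
\begin{itemize}
\item $\Re(h_{ij})(c_i^\dagger c_j+c_j^\dagger c_i)$ or $\Im(h_{ij})\, i(c_i^\dagger c_j-c_j^\dagger c_i)$ for the one-body terms;
\item $n_i$ or $n_pn_q$ for the diagonal cases;
\item $n_p$ times a hopping term for case (b);
\item the Hermitian and anti-Hermitian parts of $c_p^\dagger c_q^\dagger c_uc_v$ for case (a).
\end{itemize}
Each such monomial has equally many creation and annihilation operators, so $[c^\dagger_a,N]=-c_a^\dagger$ and $[c_a,N]=c_a$ give zero commutator.

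Alternatively, one can verify this directly on the qubit side. $Z$-strings commute with $N$. The combinations $X_iX_j+Y_iY_j=2(\sigma_i^+\sigma_j^-+\sigma_i^-\sigma_j^+)$ and $X_iY_j-Y_iX_j$ only swap excitations between sites. The eight-term sums $P^{(\mathrm R)}_{vuqp}$ and $P^{(\mathrm I)}_{vuqp}$ expand into $\sigma^+\sigma^+\sigma^-\sigma^-$-type products plus Hermitian conjugates, which preserve Hamming weight. The statement's remark in case (a) also asserts that the eight strings in each group mutually commute. I would check this by noting that any two of them differ by swapping $X\leftrightarrow Y$ on an even number of sites, so they anticommute on an even number of sites and hence commute. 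This justifies exponentiating each group as a single block, although number preservation itself does not depend on it.

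\textbf{Step 2: products.} Since $[G,N]=0$ implies $[e^{-i\tau G},N]=0$, for example via the power series or simultaneous diagonalization, any ordered product of blocks commutes with $N$. Hence $S_1(\Delta\tau)$, and the full $r$-step circuit, are block diagonal in the particle-number sectors. The same argument applies to the spin-resolved numbers $N_\uparrow$ and $N_\downarrow$ when each term preserves them.

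\textbf{Step 3: effective Hamiltonian (the main subtlety).} The logarithm of a unitary is not unique, so "the" effective Hamiltonian has to be chosen carefully. I would take $\widetilde H$ as given in \cref{lem: p_order_generator}, i.e. via the convergent BCH/Magnus series from \cite[Lemma 1]{watson2025exponentially}, valid for $4\Lambda_H\Delta\tau<1$. Every term in that series is a nested commutator of the $H_\gamma$. Because each $H_\gamma$ commutes with $N$, the Jacobi identity $[[A,B],N]=[A,[B,N]]-[B,[A,N]]=0$ shows every nested commutator commutes with $N$. The norm-convergent sum therefore satisfies $[\widetilde H,N]=0$.

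A second route avoids the series entirely. One can define $\widetilde H=\frac{i}{\Delta\tau}\log S_1$ using the principal logarithm, which is available because $\norm{S_1-I}<1$ for small $\Delta\tau$. The principal logarithm is a power series in $S_1-I$, which commutes with $N$, so $\widetilde H$ does as well. Either way, $\widetilde H=\bigoplus_s \widetilde H(L,s)$, which is what the subsequent analysis requires.
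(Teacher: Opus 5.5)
Your proposal is correct and follows essentially the same route as the paper. The paper checks that the Pauli strings in each grouped two-body generator mutually commute, using the parity argument from Romero et al., which is equivalent to yours; it then obtains $[\widetilde H,N]=0$ by writing $\widetilde H$ as a convergent BCH series of nested commutators of the $H_\gamma$ and applying the Jacobi identity by induction. Your explicit check that every grouped generator commutes with $N$, which the paper simply assumes, and your principal-logarithm alternative, which avoids the series entirely, are both valid additions.
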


\begin{proof}
The case of $2$-body terms in \cref{Trotter circuits} is the only non-trivial case. In this case, Trotter circuit blocks defined by  $H_{pquv}^{(2)}$ are divided into two parts by considering the real and imaginary parts of the coefficient. The case of imaginary part is shown in \cite[Appendix]{romero2019strategies}. Using the same idea, we verify the case of real part under the JW transformation,
\begin{align}
    &\Re\{h_{ij}\} (c_i^\dagger c_j + c_j^\dagger c_i) = \frac{\Re\{h_{ij}\}}{2} \left( \bigotimes_{k=i+1}^{j-1} Z_k \right) (X_i X_j + Y_i Y_j)\\
    &\Re\{g_{pquv}\} (c_p^\dagger c_q^\dagger c_u c_v + c_v^\dagger c_u^\dagger c_q c_p) = \frac{\Re\{g_{pquv}\}}{8} \left( \bigotimes_{a=v+1}^{u-1} Z_a \bigotimes_{b=q+1}^{p-1} Z_b \right) \big( P_{vuqp} \big),
\end{align}where the sum of the Pauli strings $P_{vuqp}$ is
\begin{align}
    P_{vuqp} = &-X_v X_u X_q X_p + X_v X_u Y_q Y_p - X_v Y_u X_q Y_p - X_v Y_u Y_q X_p \nonumber \\
    &- Y_v X_u X_q Y_p - Y_v X_u Y_q X_p + Y_v Y_u X_q X_p - Y_v Y_u Y_q Y_p.
\end{align}The commutativity between any of the operators is equivalent to the commutativity of the strings containing $X$ and $Y$ operators only. As in \cite{romero2019strategies}, we consider two arbitrary strings of $X$ and $Y$ operators of length $2n$, $P_A = \bigotimes_{i=1}^{2n} \sigma_i^a$ and $P_B = \bigotimes_{i=1}^{2n} \sigma_i^b$, acting on the same set of qubits. The commutator is then given by,
\begin{equation}
    [P_A, P_B] = \bigotimes_{i=1}^{2n} (\sigma_i^a \sigma_i^b) - \bigotimes_{i=1}^{2n} (\sigma_i^b \sigma_i^a) 
\end{equation}
where the product $\sigma_i^a \sigma_i^b$ satisfies
\begin{equation}
    \sigma_i^a \sigma_i^b = \begin{cases} 
        1 & \text{if } a = b \\ 
        iZ_i & \text{if } a = x, b = y \\ 
        -iZ_i & \text{if } a = y, b = x 
    \end{cases}. 
\end{equation}
Then, we obtain
\begin{align}
    [P_A, P_B] &= \left[ i^{(n_y^A - c_y - c_x)} (-i)^{(n_y^B - c_y - c_x)} - (-i)^{(n_y^A - c_y - c_x)} i^{(n_y^B - c_y - c_x)} \right] P\\
    &=P \cdot i^{(n_y^A + n_y^B - 2c_y - 2c_x)} \left( 1 - (-1)^{n_y^A - n_y^B} \right),
\end{align}
where $P$ is the string of Pauli matrices obtained from the multiplication, and $n_x^A$ and $n_y^A$ are the numbers of $X$ and $Y$ operators in string $A$, respectively. $n_x^B$ and $n_y^B$ are defined analogously. $c_x$ is the number of times $\sigma_i^a = \sigma_i^b = x$; $c_y$ is defined accordingly. Noticing that $n_y^A$ and $n_y^B$ are both even, we can deduce that $[P_A, P_B]=0$.

Now we prove that an effective Hamiltonian constructed from a Trotter formula composed of circuits in \cref{Trotter circuits} preserves the number symmetry. As in \cref{lem: p_order_generator}, we consider the form of Hamiltonian $H=\sum_{\gamma} H_{\gamma}$. Using the number-symmetry-preserving representations in \cref{Trotter circuits}, we suppose that $[H_{\gamma},N]=0$ for all $\gamma$. We claim that for the effective Hamiltonian $\widetilde{H}$ corresponding to the given Trotter formula, $[\widetilde{H},N]=0$. By the BCH formula, $\widetilde{H}$ can be expressed as a convergent infinite sum of nested commutators of the local terms $H_\gamma$ for sufficiently small $\Delta\tau$. Thus, it suffices to show that every nested commutator of any length $\ell$ formed by the local terms $H_\gamma$, which we denote as $C_\ell$,  $[C_\ell,N]=0$. We prove this by induction.

For the trivial case $\ell=1$, $C_1$ is simply a single local term $H_\gamma$, and by definition,
\begin{equation*}
    [C_1, N] = [H_\gamma, N] = 0.
\end{equation*}
Suppose that any nested commutator of length up to $\ell$ commutes with $N$. Any nested commutator of length $\ell+1$, denoted $C_{\ell+1}$, can be written as the commutator of two smaller nested commutators, say $C_p$ and $C_q$, such that $C_{\ell+1} = [C_p, C_q]$ where $p, q \leq \ell$ and $p+q=\ell+1$. By the hypothesis, $[C_p, N] = 0$ and $[C_q, N] = 0$. Using the Jacobi identity, we obtain
\begin{equation*}
    [C_p, [C_q, N]] + [C_q, [N, C_p]] + [N, C_{\ell+1}] = 0\Longrightarrow [N, C_{\ell+1}] = 0.
\end{equation*}By induction, this completes the proof.

\end{proof}

\begin{lemma}[Bernstein inequality]\label{lem: bernstein ineq}
    Let $X_1, X_2, \dots, X_n$ be independent random variables. Suppose that $\abs{X_i-\mathbb{E}[X_i]}\le M$ almost surely for all $i$. Then, for the sum $S = \sum_{i=1}^n X_i$ and any $\epsilon > 0$,
    \begin{align}
        \mathbb{P}(\abs{S-\mathbb{E}[S]}\ge \epsilon)\le 2\exp\left(-\frac{\epsilon^2}{2 Var(S) + \frac{2}{3}M\epsilon}\right).
    \end{align}
\end{lemma}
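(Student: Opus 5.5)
We will prove the upper-tail bound $\mathbb{P}(S-\mathbb{E}[S]\ge\epsilon)\le \exp\!\left(-\frac{\epsilon^2}{2\,Var(S)+\frac{2}{3}M\epsilon}\right)$ by the Cramér--Chernoff (exponential moment) method. Applying the same bound to $-X_1,\dots,-X_n$ gives the lower tail, since these also satisfy the boundedness hypothesis with the same $M$ and the same variance. A union bound over the two tails then produces the factor $2$ in the statement.

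First we center the variables. Put $Y_i:=X_i-\mathbb{E}[X_i]$, so that $\mathbb{E}[Y_i]=0$, $\abs{Y_i}\le M$ almost surely, $\sigma_i^2:=\mathbb{E}[Y_i^2]$, and $\sigma^2:=\sum_i\sigma_i^2=Var(S)$ by independence. The key per-variable estimate is a bound on the moment generating function for $0\le t<3/M$. Since $\abs{Y_i}\le M$, we have $\mathbb{E}[Y_i^k]\le \sigma_i^2M^{k-2}$ for $k\ge2$. Expanding $e^{tY_i}$ and using $\mathbb{E}[Y_i]=0$ gives
\begin{align}
\mathbb{E}[e^{tY_i}]\le 1+\sigma_i^2\sum_{k\ge2}\frac{t^kM^{k-2}}{k!}.
\end{align}
With the elementary inequality $k!\ge 2\cdot 3^{k-2}$, the series is dominated by a geometric series, and we get
\begin{align}
\mathbb{E}[e^{tY_i}]\le 1+\frac{t^2\sigma_i^2}{2(1-tM/3)}\le \exp\!\left(\frac{t^2\sigma_i^2}{2(1-tM/3)}\right),
\end{align}
where the last step uses $1+x\le e^x$.

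Next, independence lets the moment generating functions factor. Markov's inequality applied to $e^{t(S-\mathbb{E}[S])}$ then yields
\begin{align}
\mathbb{P}(S-\mathbb{E}[S]\ge\epsilon)\le \exp\!\left(-t\epsilon+\frac{t^2\sigma^2}{2(1-tM/3)}\right).
\end{align}
We do not optimize exactly. Instead we choose $t=\epsilon/(\sigma^2+M\epsilon/3)$. This choice satisfies $tM<3$, and it gives $1-tM/3=\sigma^2/(\sigma^2+M\epsilon/3)$. The quadratic term therefore simplifies to $\epsilon^2/\bigl(2(\sigma^2+M\epsilon/3)\bigr)$, and the exponent becomes $-\epsilon^2/(2\sigma^2+\frac{2}{3}M\epsilon)$, which is exactly the claimed bound. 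The degenerate case $\sigma^2=0$ is covered by the same formula: there $t=3/M$ is excluded from the admissible range, but $Y_i=0$ almost surely, so the probability is zero.

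The main obstacle is the moment generating function bound, specifically obtaining the constant $1/3$ through $k!\ge 2\cdot 3^{k-2}$ rather than a cruder estimate. The subsequent choice of $t$ must be made so that the denominator $1-tM/3$ cancels cleanly. Everything else is routine.
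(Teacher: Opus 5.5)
Your proof is correct. The moment bound $\mathbb{E}[Y_i^k]\le\sigma_i^2M^{k-2}$, the estimate $k!\ge 2\cdot 3^{k-2}$, the choice $t=\epsilon/(\sigma^2+M\epsilon/3)$ (which gives $tM<3$ whenever $\sigma^2>0$), the separate handling of $\sigma^2=0$, and the passage to $-X_i$ for the factor $2$ all check out.

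The paper gives no proof of this lemma; it simply quotes Bernstein's inequality as a standard tool. Your Cram\'er--Chernoff argument is the textbook derivation of exactly this form, so there is no alternative route to compare against.
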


We apply a result in  \cite{van2017quantum}  as the following lemma.
\begin{lemma}\label{lem: fourier approx exp}
    For $\delta,\epsilon\in(0,1)$ with $\delta=\mathcal{O}(1)$ and $\beta\ge 0$, there exists $\bm c\in\mathbb{C}^{2M+1}$ such that
    \begin{align}
        \max_{\abs{x}\le \frac{(1-\delta)\pi}{2}}\abs{e^{-\beta x}-\sum_{m=-M}^Mc_me^{imx}} \le \epsilon,
    \end{align}where
    \begin{align}
        M= \mathcal{O}\left(\beta +\log\frac{1}{\epsilon}\right),\quad \norm{\bm c}_1\le e^{\frac
        {\beta\pi}{2}}.
    \end{align}Subsequently, it follows
    \begin{align}
        \max_{\abs{x}\le \pi}\abs{\sum_{m=-M}^Mc_me^{imx}}\le  e^{\frac{\beta\pi}{2}}.
    \end{align}
     Moreover, the Fourier coefficient $\bm c$ can be efficiently calculated on a classical computer in time $\mathrm{poly}(\beta , \log\frac{1}{\epsilon})$.
\end{lemma}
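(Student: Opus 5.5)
The plan is to reduce the problem to a power series in $\sin x$ with nonnegative majorizing coefficients, following the strategy of \cite{van2017quantum}. For $\abs{x}\le \frac{\pi}{2}$ we have $x=\arcsin(\sin x)$. Hence
\begin{align}
e^{-\beta x}=g(\sin x),\qquad g(y):=e^{-\beta\arcsin y}=\sum_{l\ge 0}a_l y^l .
\end{align}
The Taylor coefficients of $\arcsin y$ are all nonnegative, and they sum to $\arcsin 1=\pi/2$. So the coefficients $b_l$ of $e^{\beta\arcsin y}=\sum_l b_l y^l$ are nonnegative, satisfy $\abs{a_l}\le b_l$ (replacing $\beta$ by $-\beta$ only flips signs termwise), and obey
\begin{align}
\sum_l b_l=e^{\beta\pi/2}.
\end{align}
Each monomial $\sin^l x=\big((e^{ix}-e^{-ix})/2i\big)^l$ is a trigonometric polynomial of degree $l$ whose Fourier coefficients have $\ell_1$-norm exactly $1$ (binomial coefficients divided by $2^l$).

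First step: truncate. Define
\begin{align}
p_M(x):=\sum_{l=0}^{M}a_l\sin^l x ,
\end{align}
and expand it into Fourier modes $e^{imx}$ with $\abs{m}\le M$; the collected coefficients form $\bm c$. By the triangle inequality, $\norm{\bm c}_1\le\sum_{l\le M}\abs{a_l}\le\sum_l b_l=e^{\beta\pi/2}$. This also gives the uniform bound on $\abs{x}\le\pi$ immediately, since $\abs{e^{imx}}=1$.

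Second step, the main obstacle: the truncation error. On $\abs{x}\le (1-\delta)\pi/2$ we have $\abs{\sin x}\le r:=\cos(\delta\pi/2)<1$, with $1-r=\Theta(\delta^2)$. I will compare with the radius-$1$ majorant:
\begin{align}
\sum_{l>M}\abs{a_l}r^l\le r^{M}\sum_{l>M} b_l\le r^{M}e^{\beta\pi/2}.
\end{align}
The error is therefore at most $\epsilon$ once
\begin{align}
M\ge \frac{\beta\pi/2+\log(1/\epsilon)}{\log(1/r)}=\mathcal{O}\!\left(\delta^{-2}\left(\beta+\log\tfrac1\epsilon\right)\right).
\end{align}
Since $\delta=\mathcal{O}(1)$ is treated as a constant, this gives $M=\mathcal{O}(\beta+\log\frac1\epsilon)$. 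The $\delta$-dependence is absorbed in the constant; a sharper Chebyshev-type argument could improve it to $\delta^{-1}$, but that is not needed here.

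Finally, the coefficients $a_l$ for $l\le M$ can be computed by truncated power-series arithmetic. Concretely, compose $\exp$ with the known series of $\arcsin$ using $\mathrm{poly}(M)$ operations. The binomial expansion of $\sin^l x$ then yields $\bm c$ in $\mathrm{poly}(\beta,\log\frac1\epsilon)$ classical time, as claimed.
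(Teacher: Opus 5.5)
Your argument is correct and proves the lemma as stated, with $\delta$ treated as a constant. It is organized differently from the paper's proof.

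\textbf{The paper's route.} The paper first truncates the Taylor series of $e^{-\beta x}$ on $[-1,1]$ to degree $K=\mathcal{O}(\beta+\log\frac1\epsilon)$, with $\norm{\bm a}_1\le e^{\beta}$. It then invokes \cite[Lemma 37]{van2017quantum} as a black box to pass to a Fourier series on $[-1+\delta,1-\delta]$. That lemma gives $M=2\lceil\delta^{-1}\log(4\norm{\bm a}_1/\epsilon)\rceil$ and $\norm{\bm c}_1\le\norm{\bm a}_1$. Finally the paper rescales $x\mapsto 2x/\pi$ and $\beta\mapsto\beta\pi/2$.

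\textbf{Your route.} You apply the $\arcsin$ substitution, which is the engine inside that cited lemma, directly to $e^{-\beta\arcsin y}$. The exact majorant identity $\sum_l b_l=e^{\beta\pi/2}$ then gives the $\ell_1$ bound at the correct scale. There is no intermediate polynomial, no $\epsilon/4$ bookkeeping and no rescaling, and the approximant is automatically real-valued. This makes your proof self-contained.

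\textbf{What you give up.} Your degree is $\mathcal{O}(\delta^{-2}(\beta+\log\frac1\epsilon))$, whereas the paper's is $\mathcal{O}(\delta^{-1}(\beta+\log\frac1\epsilon))$. The difference is invisible in the statement. However, the paper uses the $\delta^{-1}$ explicitly downstream: in the degree bound of \cref{cor: main} and in the $\frac1\delta$ factor of the gate complexity in \cref{thm: main 1}. With your version those would become $\delta^{-2}$.

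\textbf{How to recover $\delta^{-1}$.} The improvement is not a Chebyshev-type argument; it is the binomial-concentration step inside the cited lemma.
\begin{itemize}
\item Keep the powers $\sin^l x$ up to $L=\mathcal{O}(\delta^{-2}(\beta+\log\frac1\epsilon))$, as you already do.
\item In each binomial expansion, discard the modes with $\abs{m}>M$.
\item The weight $2^{-l}\binom{l}{j}$ on mode $m=2j-l$ has Hoeffding tail at most $2e^{-M^2/(2l)}$. So discarding these modes costs at most $2e^{\beta\pi/2}e^{-M^2/(2L)}$, uniformly in $x$.
\item Dropping terms cannot increase $\norm{\bm c}_1$.
\end{itemize}
Hence $M=\mathcal{O}(\sqrt{L(\beta+\log\frac1\epsilon)})=\mathcal{O}(\delta^{-1}(\beta+\log\frac1\epsilon))$ Fourier modes suffice, even though $L$ powers of $\sin x$ are used.
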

\begin{proof}
By using the Taylor series of $e^{-\beta x}$ and the Stirling's approximation, we have 
\begin{align}
    \max_{\abs{x}\le 1}\abs{\sum_{j=0}^Ka_jx^j-e^{-\beta x}}\le \frac{\epsilon}{4},\quad K=\mathcal{O}\left(\beta + \log\frac{1}{\epsilon}\right), \quad \norm{\bm a}_1\le e^{\beta}.
\end{align}
Next, using \cite[Lemma 37]{van2017quantum}, we obtain a Fourier-series approximation such that with coefficient $\bm c\in \mathbb{C}^{2M+1}$,
\begin{align}
    \max_{\abs{x}\le 1-\delta}\abs{\sum_{m=-M}^Mc_me^{\frac{i\pi m}{2}x}-e^{-\beta x}}\le \epsilon,\quad M=2\left\lceil\log\left(\frac{4\norm{\bm a}_1}{\epsilon}\right)\frac{1}{\delta} \right\rceil,
\end{align}and
\begin{align}
    \norm{\bm c}_1\le \norm{\bm a}_1\le e^{\beta}.
\end{align}
By replacing $\beta$ with $\frac{\beta\pi}{2}$ and $x$ with $\frac{2x}{\pi}$ in the above, we prove the statement. Furthermore, since the classical time spent finding the coefficient $\bm c$ is $\mathrm{poly}(K, M , \log\frac{1}{\epsilon})$ by \cite[Lemma 37]{van2017quantum}, we verify the time complexity as stated.

\end{proof}
Using \cref{lem: fourier approx exp}, we evaluate the circuit sampling complexity of our approach. To this end, we require the following lemma.  
\begin{lemma}\label{lem: state GQSP}
    Let $\ket{\psi}=\sum_{\ell=1}^Sa_\ell\ket{\lambda_{i_\ell}}$ where $\ket{\lambda_{i_\ell}}$'s are eigenstates of $H$ corresponding to the eigenvalues $\lambda_{i_\ell}$'s. Assume a normalization factor $\lambda_S$ such that $\max_{\ell\in[S]}\abs{\frac{\lambda_{i_\ell}}{\lambda_S}}\le 1-\delta$. 
Then, for $\beta\ge0$ and a relative error $\epsilon_0=\mathcal{O}(1)$, there exists a Laurent polynomial $p_{\frac{\beta}{2}}$ satisfying
\begin{align}
\norm{p_{\frac{\beta}{2}}}_1
=
\mathcal{O}\left(e^{\frac{\beta\lambda_S}{2}}\right),
\end{align}
such that the ancilla success probability of the GQSP circuit in \cref{fig: GQSP}, after rescaling by
$\norm{p_{\frac{\beta}{2}}}_1$, approximates
$\bra{\psi}e^{-\beta H}\ket{\psi}$
within relative error $\epsilon_0$.   
\end{lemma}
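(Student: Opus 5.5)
The plan is to build $p_{\beta/2}$ from \cref{lem: fourier approx exp}, then compute the GQSP success probability eigenbasis-wise using that $\ket{\psi}$ only involves eigenvalues $\lambda_{i_\ell}$ with $|\lambda_{i_\ell}/\lambda_S|\le 1-\delta$.

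\emph{Rescaling.} The controlled unitaries in \cref{fig: GQSP} are $e^{\pm iH/\lambda_S'}$ with $\lambda_S' := 2\lambda_S/\pi$. For eigenvalues in the support of $\ket{\psi}$ set $x_\ell := \lambda_{i_\ell}/\lambda_S' $, so $|x_\ell|\le (1-\delta)\pi/2$. Apply \cref{lem: fourier approx exp} with inverse temperature $\beta' := \beta\lambda_S'/2 = \beta\lambda_S/\pi$ and precision $\epsilon_1$. This gives a Laurent polynomial $q(e^{ix})=\sum_{m=-M}^M c_m e^{imx}$ with
\begin{align*}
|q(e^{ix_\ell}) - e^{-\beta\lambda_{i_\ell}/2}|\le\epsilon_1, \qquad \norm{\bm c}_1\le e^{\beta'\pi/2}=e^{\beta\lambda_S/2}, \qquad \max_{|x|\le\pi}|q|\le e^{\beta\lambda_S/2}.
\end{align*}
Define $p_{\beta/2} := q$, so $\norm{p_{\beta/2}}_1 = \mathcal{O}(e^{\beta\lambda_S/2})$. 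Since $q/\norm{\bm c}_1$ is bounded by $1$ on the whole unit circle, \cite[Theorem 6]{motlagh2024generalized} (with the standard GQSP complementary-polynomial construction) yields rotation angles realizing $\bra{0}_a U\ket{0}_a = q(e^{iH/\lambda_S'})/\norm{\bm c}_1$ with degree $M$.

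\emph{Success probability.} Acting on $\ket{\psi}$ and expanding in eigenstates gives
\begin{align*}
P(0) = \norm{\bm c}_1^{-2}\sum_\ell |a_\ell|^2 |q(e^{ix_\ell})|^2 .
\end{align*}
Therefore $\norm{\bm c}_1^2 P(0)$ approximates $\bra{\psi}e^{-\beta H}\ket{\psi}=\sum_\ell|a_\ell|^2 e^{-\beta\lambda_{i_\ell}}$. Writing $q_\ell = e^{-\beta\lambda_{i_\ell}/2}+e_\ell$ with $|e_\ell|\le\epsilon_1$, the pointwise error is
\begin{align*}
\big||q_\ell|^2-e^{-\beta\lambda_{i_\ell}}\big|\le 2\epsilon_1 e^{-\beta\lambda_{i_\ell}/2}+\epsilon_1^2 .
\end{align*}

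\emph{Main obstacle: converting additive to relative error.} Each target term satisfies $e^{-\beta\lambda_{i_\ell}/2}\ge e^{-\beta\lambda_S/2}$. Choosing $\epsilon_1 = \epsilon_0 e^{-\beta\lambda_S/2}/3$ gives
\begin{align*}
2\epsilon_1 e^{-\beta\lambda_{i_\ell}/2}+\epsilon_1^2 \le \epsilon_0\, e^{-\beta\lambda_{i_\ell}} ,
\end{align*}
i.e., a termwise relative error of $\epsilon_0$. Summing with weights $|a_\ell|^2$ preserves this relative error for $\bra{\psi}e^{-\beta H}\ket{\psi}$.

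This choice of $\epsilon_1$ only raises the degree to $M=\mathcal{O}(\beta\lambda_S+\log(1/\epsilon_0))$, which is harmless. It does not change $\norm{\bm c}_1$, because the bound from \cref{lem: fourier approx exp} is independent of $\epsilon$. The point to verify carefully is this independence: $\norm{\bm c}_1\le\norm{\bm a}_1\le e^{\beta'}$, so the $\ell_1$ norm does not grow with the precision. The degree must also absorb the $1/\delta$ factor from \cite[Lemma 37]{van2017quantum}.
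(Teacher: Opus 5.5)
Your proof is correct and follows the paper's route: you take the Fourier approximation of \cref{lem: fourier approx exp}, rescaled so that $\norm{\bm c}_1\le e^{\beta\lambda_S/2}$, apply GQSP to the normalized Laurent polynomial, and expand the success probability in the eigenbasis with a squared-error bound. There are two differences, both minor improvements. First, you work with $|q|^2$ instead of first symmetrizing $p_{\frac{\beta}{2}}$ to be real. Second, you convert additive to relative error termwise via $e^{-\beta\lambda_{i_\ell}/2}\ge e^{-\beta\lambda_S/2}$, giving a state-independent precision $\epsilon_1=\epsilon_0 e^{-\beta\lambda_S/2}/3$ (valid for $\epsilon_0\le 3$); the paper instead bounds the error uniformly by $3e^{\frac{\beta\lambda_S}{2}}\epsilon$ and chooses $\epsilon$ proportional to the unknown target $\bra{\psi}e^{-\beta H}\ket{\psi}$.
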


\begin{proof}
By \cref{lem: fourier approx exp},
    we take a $p_{\frac{\beta}{2}}(e^{ix})$ such that $p_{\frac{\beta}{2}}(e^{ix})$ is real in $x$ by averaging with its conjugation and $\max_{\abs{x}\le \frac{\pi(1-\delta)}{2}}\abs{p_{\frac{\beta}{2}}\left(e^{ix}\right)-e^{-\frac{\beta\lambda_S}{\pi}x}} \le \epsilon$.
   We define $P(e^{ix}) = \frac{p_{\frac{\beta}{2}}\left(e^{ix}\right)}{\norm{p_{\frac{\beta}{2}}}_1}$, where $\norm{p_{\frac{\beta}{2}}}_1$ denotes the $\ell_1$-norm of Fourier coefficients of $p_{\frac{\beta}{2}}(x)$.  Note that 
    \begin{align}
        \norm{p_{\frac{\beta}{2}}}_1\le e^{\frac{\beta\lambda_S}{2}},
    \end{align}and 
    \begin{align}
      &\abs{p_{\frac{\beta}{2}}(e^{ix})^2 - e^{-\frac{2\beta\lambda_S}{\pi} x}} \\
      &\le 2e^{-\frac{\beta\lambda_S}{\pi}x}\abs{p_{\frac{\beta}{2}}\left(e^{ix}\right)-e^{-\frac{\beta\lambda_S}{\pi}x}} + \abs{p_{\frac{\beta}{2}}\left(e^{ix}\right)-e^{-\frac{\beta\lambda_S}{\pi}x}}^2\\
      &\le 2e^{\frac{\beta\lambda_S}{2}(1-\delta)}\epsilon +\epsilon^2,\quad \abs{x}\le \frac{\pi(1-\delta)}{2}\\
      &\le 3e^{\frac{\beta\lambda_S}{2}(1-\delta)}\epsilon,\quad \epsilon\in(0,1),
    \end{align}
    by squaring the identity, $p_{\frac{\beta}{2}}\left(x\right)=p_{\frac{\beta}{2}}\left(x\right)-e^{-\frac{\beta\lambda_S}{\pi}x}+e^{-\frac{\beta\lambda_S}{\pi}x}$, and using the triangle inequality.
    
    Since $\max_{\abs{x}\le \pi}\abs{P(x)}\le 1$,  \cite[Corollary 5, Theorem 6]{motlagh2024generalized} yields a circuit construction such that for some $\bm\theta, \bm\phi \in \mathbb{R}^{2D+1}$ and $\lambda\in \mathbb{R}$, 
\begin{align}
    \begin{pmatrix}
        P(U) &  *\\
        * & *
    \end{pmatrix}=\prod_{m=1}^D\left(R(\theta_{D+m},\phi_{D+m},\lambda)C U^\dag \right)\prod_{m=1}^D\left(R(\theta_m,\phi_m,\lambda)C U \right)R(\theta_0,\phi_0,\lambda).
\end{align}Here we set
\begin{align}
    U=e^{i\frac{\pi}{2}\frac{H}{\lambda_S}}.
\end{align}Note that the probability of ancilla state to be $0$ is $\bra{\psi}P\left(U\right)^2\ket{\psi}$. By classically multiplying this probability by $\norm{p_{\frac{\beta}{2}}}_1^2$, we can estimate $\bra{\psi}p_{\frac{\beta}{2}}\left(U\right)^2\ket{\psi}$. Using the error bound derived earlier, we then obtain
\begin{align}
    &\abs{\bra{\psi}p_{\frac{\beta}{2}}\left(U\right)^2\ket{\psi}
    -\bra{\psi}e^{-\beta H}\ket{\psi}} \nonumber \\
    &=\abs{
    \sum_{\ell=1}^S\abs{a_\ell}^2
    p_{\frac{\beta}{2}}\left(
    e^{i\frac{\pi \lambda_{i_\ell}}{2\lambda_S}}
    \right)^2
    -
    \sum_{\ell=1}^S\abs{a_\ell}^2 e^{-\beta \lambda_{i_\ell}}
    } \nonumber \\
    &\le
    \sum_{\ell=1}^S\abs{a_\ell}^2
    \abs{
    p_{\frac{\beta}{2}}\left(
    e^{i\frac{\pi \lambda_{i_\ell}}{2\lambda_S}}
    \right)^2
    -
    e^{-\beta \lambda_{i_\ell}}
    } \nonumber \\
    &\le \sum_{\ell=1}^S\abs{a_\ell}^2
    \left(
    3e^{\frac{\beta\lambda_S}{2}(1-\delta)}\epsilon
    \right) \nonumber \\
    &=3e^{\frac{\beta\lambda_S}{2}(1-\delta)}\epsilon
    \le 3e^{\frac{\beta\lambda_S}{2}}\epsilon.
\end{align}
By taking $\epsilon=\min\{\frac{1}{2},\frac{\bra{\psi}e^{-\beta H}\ket{\psi}}{3e^{\frac{\beta\lambda_S}{2}}}\epsilon_0\}$,
we estimate the desired quantity to relative error $\epsilon_0$. 
\end{proof}

Using this lemma, we derive the required query complexity and circuit sampling complexity as follows.
\begin{corollary}\label{cor: main}
    Assume a normalization factor $\lambda_{L,s}$ such that $\norm{\frac{H(L,s)}{\lambda_{L,s}}}\le 1-\delta$ for some $\delta \in (0,1)$. Further assume access to the ideal input models, $e^{i\frac{\pi H}{2\lambda_{L,s}}}$ and $e^{-i\frac{\pi H}{2\lambda_{L,s}}}$. Estimating the canonical partition function, $Z_{L,s}(\beta)=\tr(e^{-\beta H(L,s)})$, to relative error $\epsilon$ with failure probability at most $\eta$  using the GQSP circuit in \cref{fig: GQSP} requires
    \begin{itemize}
        \item  $\mathcal{O}\left(\frac{1}{\delta}\left(\ln\left(\frac{1}{\epsilon}\right)+\beta\lambda_{L,s} \right)\right)$ query complexity to $e^{i\frac{\pi H}{2\lambda_{L,s}}}$ and $e^{-i\frac{\pi H}{2\lambda_{L,s}}}$ per circuit sample
        \item  $\mathcal{O}\left(\frac{d_s e^{\beta\lambda_{L,s}}}{Z_{L,s}(\beta)\epsilon^2}\log\frac{1}{\eta}\right)$ circuit sampling complexity
    \end{itemize}
    where $d_s$ is the dimension of the symmetry sector ($d_s=\binom{L}{k}$ or $\binom{L}{k}\binom{L}{k'}$).
\end{corollary}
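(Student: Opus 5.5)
The plan is to apply \cref{lem: state GQSP} to the purification of the mixed Dicke state and then run a standard concentration argument on the ancilla outcomes. First I will fix the input oracle $U=e^{i\frac{\pi H}{2\lambda_{L,s}}}$. Since $[H,N]=0$, the purified initial state of \cref{Initial state}, $\binom{L}{k}^{-1/2}\sum_{\mathrm{wt}(x)=k}\ket{x}\ket{x}$ (or its spin-resolved analogue), can be expanded in eigenstates of $H\otimes I$ whose system parts lie in the sector $s$. Hence the eigenvalues that actually appear obey $\abs{\lambda_{i_\ell}}\le\norm{H(L,s)}_2\le(1-\delta)\lambda_{L,s}$, and the hypothesis of \cref{lem: state GQSP} holds with $\lambda_S=\lambda_{L,s}$. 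This is the one place where the sector-restricted normalization enters rather than the full-space norm. The lemma then gives
\begin{align*}
\bra{\psi}e^{-\beta H}\ket{\psi}=\tr\big(\omega_{L,s}e^{-\beta H}\big)=\frac{Z_{L,s}(\beta)}{d_s}.
\end{align*}
So the success probability times $\norm{p_{\beta/2}}_1^2$ approximates $Z_{L,s}(\beta)/d_s$ to relative error $\epsilon_0$, which I will take to be $\epsilon/2$.

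The query complexity per circuit sample is the Laurent degree $D=M$ from \cref{lem: fourier approx exp}. The rescaled inverse temperature is $\beta'=\beta\lambda_{L,s}/\pi$ (the same rescaling used in \cref{lem: state GQSP}), and the inner approximation error is $\epsilon'=\min\{\tfrac12,\tfrac{\epsilon_0 Z_{L,s}/d_s}{3e^{\beta\lambda_{L,s}/2}}\}$. Substituting into $M=\mathcal{O}(\frac{1}{\delta}\log(\norm{\bm a}_1/\epsilon'))$, I need $\log(1/\epsilon')$. Here $Z_{L,s}/d_s\ge e^{-\beta\lambda_{L,s}}$, because every eigenvalue of $H(L,s)$ is at most $\lambda_{L,s}$ in absolute value. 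This gives $\log(1/\epsilon')=\mathcal{O}(\beta\lambda_{L,s}+\ln\frac1\epsilon)$, and therefore $D=\mathcal{O}(\frac1\delta(\ln\frac1\epsilon+\beta\lambda_{L,s}))$. The GQSP circuit uses $D$ applications of $U$ and $D$ of $U^\dagger$, which proves the first bullet.

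For the sampling complexity, let $p=P(0)$. Up to the already controlled bias, $p\approx Z_{L,s}/(d_s\norm{p_{\beta/2}}_1^2)\ge Z_{L,s}/(d_s e^{\beta\lambda_{L,s}})$, using $\norm{p_{\beta/2}}_1\le e^{\beta\lambda_{L,s}/2}$. The estimator $N_0/N_{\mathrm{shots}}$ is an average of Bernoulli$(p)$ variables, so I will apply \cref{lem: bernstein ineq} with $M=1$, $\mathrm{Var}(N_0)\le N_{\mathrm{shots}}p$, and deviation $\frac{\epsilon}{2}pN_{\mathrm{shots}}$. This gives failure probability at most $2\exp(-c\,\epsilon^2pN_{\mathrm{shots}})$, so $N_{\mathrm{shots}}=\mathcal{O}(\frac{1}{p\epsilon^2}\log\frac1\eta)$ suffices. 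Combining the statistical error $\epsilon/2$ with the approximation error $\epsilon/2$ multiplicatively gives relative error at most $\epsilon$ (after adjusting constants), and substituting the lower bound on $p$ yields the second bullet.

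The step I expect to be most delicate is the lower bound on $p$ together with the bias bookkeeping. The relative-error guarantee of \cref{lem: state GQSP} is stated for $\bra{\psi}p^2\ket{\psi}$ rather than for $p$ itself, so I need to check that the relative error transfers to $p$ with only a constant-factor loss. I also need to make sure the normalization $\norm{p_{\beta/2}}_1^2$, not its square root, matches the factor $e^{\beta\lambda_{L,s}}$ in the claimed sampling bound.
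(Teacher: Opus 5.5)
Your proposal is correct and follows essentially the paper's route. You bound the polynomial bias by the argument of \cref{lem: state GQSP}; the paper works with the mixed state $\omega_{L,s}$ directly rather than its purification. You control the degree via $Z_{L,s}(\beta)/d_s\ge e^{-\beta\lambda_{L,s}}$; the paper gets this from Jensen's inequality, and your eigenvalue bound works equally well. You then apply \cref{lem: bernstein ineq} with variance at most the success probability, as the paper does. The step you flag as delicate is immediate: the success probability times $\norm{p_{\frac{\beta}{2}}}_1^2$ equals $\bra{\psi}p_{\frac{\beta}{2}}(U)^2\ket{\psi}$ exactly, so the relative error transfers without loss and the success probability is at least $(1-\epsilon_0)Z_{L,s}(\beta)/(d_s e^{\beta\lambda_{L,s}})$.
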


\begin{proof}
Let $\omega_{L,s} = \frac{1}{d_s}\Pi_s$ from \cref{def: Dicke} be the initial maximally mixed state within this sector. Let $Z_{\text{poly}, s} = \tr\left(\Pi_s p_{\frac{\beta}{2}}\left(e^{i\frac{\pi H }{2\lambda_{L,s}}}\right)^2\right)$ be the polynomial approximation of the canonical partition function.

To estimate the partition function to a relative error $\epsilon$, we require the polynomial approximation to satisfy $\abs{Z_{\text{poly}, s} - Z_{L,s}(\beta)} \le \frac{\epsilon}{3} Z_{L,s}(\beta)$. Following the proof technique for \cref{lem: state GQSP}, this approximation error is bounded by
\begin{align}
    \abs{Z_{\text{poly}, s} - Z_{L,s}(\beta)} \le d_s \cdot 3e^{\frac{\beta\lambda_{L,s}}{2}} \epsilon_{\text{poly}},
\end{align}
where $\epsilon_{\text{poly}}$ is the error bound from \cref{lem: fourier approx exp} without the normalization factor $\lambda_{L,s}$. Thus, we choose $\epsilon_{\text{poly}}$ such that
\begin{align}
    \epsilon_{\text{poly}} = \frac{\epsilon Z_{L,s}(\beta)}{9 d_s e^{\frac{\beta\lambda_{L,s}}{2}}}.
\end{align}
By \cref{lem: fourier approx exp}, achieving this $\epsilon_{\text{poly}}$ requires a Laurent polynomial of degree scaling as
\begin{align}
    \mathcal{O}\left(\frac{1}{\delta}\left(\beta\lambda_{L,s} + \ln\frac{1}{\epsilon_{\text{poly}}}\right)\right) 
    = \mathcal{O}\left(\frac{1}{\delta}\left(\ln\frac{1}{\epsilon} + \beta\lambda_{L,s} + \ln\frac{d_s}{Z_{L,s}(\beta)}\right)\right).
\end{align}Similar to the proof of \cref{obs 2}, Jensen's inequality yields
 the stated query complexity. 

When executing the GQSP circuit on $\omega_{L,s}$, the probability of measuring the ancilla state as $\ket{0}$ is
\begin{align}
    P_s^{\text{poly}}(0) = \tr\left(\omega_{L,s} P\left(e^{i\frac{\pi H}{2\lambda_{L,s}}}\right)^2\right) = \frac{Z_{\text{poly}, s}}{d_s \norm{p_{\frac{\beta}{2}}}_1^2}.
\end{align}
We estimate this probability using the sample mean estimator $\widehat{P}_s^{\text{poly}}(0) = \frac{N_{\ket{0}}}{N_{\mathrm{shots}}}$, where $\mathbb{E}[\widehat{P}_s^{\text{poly}}(0)]=P_s^{\text{poly}}(0)$. 

By the properties of a binomial distribution, the variance is bounded by $Var(\hat{P}_s^{\text{poly}}(0))\le P_s^{\text{poly}}(0)$. To ensure that our estimate $\widetilde{Z}_{\text{poly}, s}$ satisfies $\abs{\widetilde{Z}_{\text{poly}, s} - Z_{\text{poly}, s}} \le \frac{\epsilon}{3} Z_{L, s}(\beta)$, we must estimate $P_s^{\text{poly}}(0)$ to an additive error of $\epsilon_P = \frac{\epsilon Z_{L,s}(\beta)}{3 d_s \norm{p_{\frac{\beta}{2}}}_1^2}$. 
By \cref{lem: bernstein ineq} and assuming $\epsilon \le 1$, achieving this precision with failure probability at most $\eta$ requires a number of samples bounded by
\begin{align}
    N_{\mathrm{shots}} = \mathcal{O}\left(\frac{P_s^{\text{poly}}(0)}{\epsilon_P^2}\log \frac{1}{\eta}\right) = \mathcal{O}\left( \frac{Z_{\text{poly}, s} \cdot d_s \norm{p_{\frac{\beta}{2}}}_1^2}{\epsilon^2 Z_{L,s}(\beta)^2} \log \frac{1}{\eta} \right).
\end{align}
Since $Z_{\text{poly}, s} \le Z_{L,s}(\beta)\left(1+\frac{\epsilon}{3}\right) = \mathcal{O}(Z_{L,s}(\beta))$ and  $\norm{p_{\frac{\beta}{2}}}_1^2 = \mathcal{O}(e^{\beta\lambda_{L,s}})$, we achieve
\begin{align}
    N_{\mathrm{shots}} = \mathcal{O}\left(\frac{d_s e^{\beta\lambda_{L,s}}}{Z_{L,s}(\beta)\epsilon^2}\log\frac{1}{\eta}\right).
\end{align}
Finally, using the triangle inequality, we obtain
\begin{align}
    \abs{\widetilde{Z}_{\text{poly}, s} - Z_{L,s}(\beta)} &\le \abs{\widetilde{Z}_{\text{poly}, s} - Z_{\text{poly}, s}} + \abs{Z_{\text{poly}, s} - Z_{L,s}(\beta)} \nonumber \\
    &\le \frac{\epsilon}{3} Z_{L,s}(\beta) + \frac{\epsilon}{3} Z_{L,s}(\beta) \le \epsilon Z_{L,s}(\beta).
\end{align}
This completes the proof.
\end{proof}

Finally, the following two lemmas will be used for the task of estimating observables in \cref{def: problem}.
\begin{lemma}\label{lem: thermal_observable_error}
For any observable $O$, the following inequality holds
\begin{align}
    \abs{\tr(O\widetilde{\rho}) - \tr(O\rho)} \le 2\beta \norm{O}_2 \norm{\widetilde{H}-H}_2,
\end{align}where $\rho = \frac{e^{-\beta H}}{\tr(e^{-\beta H})}$ and $\widetilde{\rho} = \frac{e^{-\beta \widetilde{H}}}{\tr(e^{-\beta \widetilde{H}})}$.
\end{lemma}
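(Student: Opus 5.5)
We interpolate between the two Hamiltonians and integrate the derivative of the Gibbs expectation. Set $H(t) := H + t(\widetilde{H}-H)$ for $t\in[0,1]$, with $\rho(t) = e^{-\beta H(t)}/\tr(e^{-\beta H(t)})$ and $g(t) := \tr(O\rho(t))$. Then $g(0)=\tr(O\rho)$ and $g(1)=\tr(O\widetilde{\rho})$, so
\[
\abs{\tr(O\widetilde\rho)-\tr(O\rho)} \le \int_0^1 \abs{g'(t)}\,dt ,
\]
and it suffices to show $\abs{g'(t)}\le 2\beta\norm{O}_2\norm{\widetilde H-H}_2$ uniformly in $t$.

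To differentiate, write $V:=\widetilde H-H$ and use Duhamel's formula
\[
\frac{d}{dt}e^{-\beta H(t)} = -\beta\int_0^1 e^{-s\beta H(t)}\,V\,e^{-(1-s)\beta H(t)}\,ds .
\]
Write $Z(t)=\tr(e^{-\beta H(t)})$. By cyclicity of the trace, $Z'(t) = -\beta\tr(e^{-\beta H(t)}V)$. The quotient rule then gives
\[
g'(t) = -\beta\int_0^1 \tr\!\left(O\,\rho(t)^{s} V \rho(t)^{1-s}\right) ds \;+\; \beta\,\tr(O\rho(t))\,\tr(\rho(t)V).
\]
Here we used $e^{-s\beta H(t)}/Z(t)^{s}=\rho(t)^s$.

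We bound the two terms separately. The second term is at most $\beta\norm{O}_2\norm{V}_2$, since $\abs{\tr(A\rho)}\le\norm{A}_2$ for any state $\rho$. For the first term we apply Hölder's inequality with exponents $1/s$ and $1/(1-s)$:
\[
\abs{\tr(O\rho^{s}V\rho^{1-s})}\le \norm{O}_2\norm{V}_2\,\norm{\rho^{s}}_{1/s}\norm{\rho^{1-s}}_{1/(1-s)} = \norm{O}_2\norm{V}_2 ,
\]
since $\norm{\rho^s}_{1/s}=(\tr\rho)^s=1$. Integrating over $s$ keeps the bound at $\beta\norm{O}_2\norm{V}_2$. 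Adding the two contributions gives $\abs{g'(t)}\le 2\beta\norm{O}_2\norm{V}_2$, and integrating over $t\in[0,1]$ yields the lemma.

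The main obstacle is the first term, because $V$ need not commute with $H(t)$. The point of the Hölder/Schatten step is that the bound must not depend on $\norm{e^{-\beta H}}$; otherwise it would grow exponentially in $\beta$ and spoil the estimate. If that step needs an alternative justification, one can bound $\abs{\tr(A\rho^{s}B\rho^{1-s})}$ by Cauchy–Schwarz in the inner product $\langle X,Y\rangle=\tr(X^\dagger\rho^{s}Y\rho^{1-s})$, which is positive semidefinite.
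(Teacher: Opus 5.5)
Your proof is correct and follows essentially the same route as the paper: linear interpolation $H+t(\widetilde H-H)$, Duhamel's formula, and the Schatten--H\"older step $\norm{\rho^{s}}_{1/s}=1$, which keeps the bound free of any $e^{\beta\norm{H}_2}$ factor. The only cosmetic difference is in how the factor $2$ arises. The paper bounds $\norm{d\rho_\tau/d\tau}_1\le 2\beta\norm{E}_2$ via $\norm{E-\tr(\rho_\tau E)I}_2\le 2\norm{E}_2$ and only then pairs with $O$, whereas you bound the scalar derivative $g'(t)$ directly and split the centering term into two pieces, reaching the same constant $2$.
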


\begin{proof}
We define a one-parameter family of Hamiltonians $H_\tau = H + \tau E$ with corresponding thermal states,
\begin{equation}
    \rho_\tau = \frac{e^{-\beta H_\tau}}{\tr(e^{-\beta H_\tau})}, \quad \text{for } \tau \in [0, 1].
\end{equation}
Let us denote $E = \widetilde{H} - H$. Differentiating $\rho_\tau$ with respect to $\tau$ using the Duhamel formula yields
\begin{equation}
    \frac{d\rho_\tau}{d\tau} = -\beta \int_0^1 \rho_\tau^{1-u} \big(E - \tr(\rho_\tau E)I\big) \rho_\tau^u \, du.
\end{equation}
By using the Schatten H\"older inequality, we obtain that for any $A$,
\begin{equation}
    \norm{\rho_\tau^{1-u} A \rho_\tau^u}_1\le \norm{\rho_\tau^{1-u}}_{\frac{1}{1-u}} \norm{A\rho_\tau^u}_{\frac{1}{u}}  \le \norm{\rho_\tau^{1-u}}_{\frac{1}{1-u}} \norm{A}_2 \norm{\rho_\tau^u}_{\frac{1}{u}}.
\end{equation}
Since $\rho_\tau$ is a density matrix,  $\norm{\rho_\tau^p}_{1/p} = \left(\tr(\rho_\tau)\right)^p = 1$. Then, the inequality simplifies to
\begin{equation}
    \norm{\rho_\tau^{1-u} A \rho_\tau^u}_1 \le \norm{A}_2.
\end{equation}
Setting $A = E - \operatorname{Tr}(\rho_\tau E)I$, we bound its spectral norm as
\begin{equation}
    \norm{E - \tr(\rho_\tau E)I}_2 \le 2\norm{E}_2,
\end{equation}by noticing $\abs{\tr(\rho_\tau E)}\le \norm{E}_2$ and using the triangle inequality. Substituting this back into the integral gives
\begin{equation}
    \norm{\frac{d\rho_\tau}{d\tau}}_1  \le 2\beta \norm{E}_2, 
\end{equation}and we obtain 
\begin{equation}
    \norm{\widetilde{\rho} - \rho}_1 = \norm{\int_0^1 \frac{d\rho_\tau}{d\tau} d\tau}_1 \le 2\beta \norm{\widetilde{H} - H}_2.
\end{equation}
Finally, we arrive at
\begin{align}
    \abs{\tr(O\widetilde{\rho}) - \tr(O\rho)} 
    \le \norm{O}_2 \norm{\widetilde{\rho} - \rho}_1 \le 2\beta \norm{O}_2 \norm{\widetilde{H} - H}_2,
\end{align}which completes the proof.
\end{proof}

As a standard approach, we estimate an observable of the form \eqref{eq: observable} by measuring each Pauli operator. This results in the following sampling complexity.
\begin{lemma}\label{lem: observable}
    Given access to state $\rho$, for an observable in \eqref{eq: observable}, an estimate $\widehat{o}$ satisfying $\abs{ \widehat{o} - \tr(O\rho)} \le \epsilon$ 
    with probability at least $1-\eta$ can be obtained using $\mathcal{O}\left( \frac{\left(\sum_a \abs{c_a}\right)^2}{\epsilon^2} \log\frac{1}{\eta} \right)$ circuit sampling complexity.
\end{lemma}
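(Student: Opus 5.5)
The plan is to estimate each Pauli term separately and then combine the estimates with a linear-combination error budget. Write $O=\sum_{a=1}^{A}c_aP_a$ with $A=\mathcal{O}(\mathrm{poly}(L))$.

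For each $a$, measure $P_a$ on copies of $\rho$ in its eigenbasis. This means applying the appropriate single-qubit basis rotations and then reading out the parity. The outcome is a random variable $X^{(a)}_j\in\{-1,+1\}$ with $\mathbb{E}[X^{(a)}_j]=\tr(P_a\rho)$. Let $\widehat{p}_a$ be the sample mean of $n_a$ such outcomes, and set $\widehat{o}=\sum_a c_a\widehat{p}_a$.

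The key choice is how to allocate the error budget. I would use a weighted allocation: require $\abs{\widehat{p}_a-\tr(P_a\rho)}\le \epsilon_a$ with $\epsilon_a:=\epsilon/\sum_b\abs{c_b}$, uniformly in $a$. The triangle inequality then gives
\[
\abs{\widehat{o}-\tr(O\rho)}\le\sum_a\abs{c_a}\epsilon_a=\epsilon .
\]
Each $X^{(a)}_j$ is bounded in $[-1,1]$, so \cref{lem: bernstein ineq} (or Hoeffding) with $M=2$ and variance at most $1$ gives
\[
n_a=\mathcal{O}\!\left(\epsilon_a^{-2}\log(A/\eta)\right)=\mathcal{O}\!\left(\frac{(\sum_b\abs{c_b})^2}{\epsilon^2}\log\frac{A}{\eta}\right).
\]
This per-term accuracy holds with failure probability $\eta/A$. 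A union bound over the $A$ terms then yields overall success probability at least $1-\eta$.

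The main obstacle is that summing these $n_a$ over $a$ introduces an extra factor of $A$, and the union bound introduces $\log A$, compared with the stated bound. To avoid both, I would instead use importance sampling. In each shot, draw an index $a$ with probability $\abs{c_a}/\norm{\bm c}_1$, measure $P_a$, and record $Y=\norm{\bm c}_1\,\mathrm{sgn}(c_a)\,X^{(a)}$. Then $\mathbb{E}[Y]=\tr(O\rho)$ and $\abs{Y}\le\norm{\bm c}_1$.

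Applying Hoeffding to the $n$ i.i.d. copies of $Y$ immediately gives $n=\mathcal{O}(\norm{\bm c}_1^2\epsilon^{-2}\log(1/\eta))$. This matches the stated complexity exactly, with no factor of $A$ and no $\log A$. This is the proof I would write, keeping the deterministic per-term scheme only as a remark.
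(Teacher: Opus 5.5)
Your importance-sampling argument is correct, but it is not the paper's route.

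The paper keeps the deterministic per-term scheme. It measures each $P_a$ on $N_a$ copies and sets $\widehat o=\sum_a c_a\widehat P_a$. It then applies Hoeffding \emph{once} to $\widehat o$, viewed as a sum of $N=\sum_a N_a$ independent terms: a shot for term $a$ contributes $c_aX^{(a)}_j/N_a$, with range $2\abs{c_a}/N_a$. This gives the tail $2\exp\bigl(-\epsilon^2/(2\sum_a\abs{c_a}^2/N_a)\bigr)$, and the allocation $N_a\propto\abs{c_a}$ turns $\sum_a\abs{c_a}^2/N_a$ into $\norm{\bm c}_1^2/N$.

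So the extra factor $A$ and the $\log A$ you called the main obstacle do not come from deterministic allocation itself. They come from splitting the error budget into per-term guarantees and then union-bounding. Concentrating the combined estimator directly removes both.

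Your estimator reaches the same $\mathcal{O}(\norm{\bm c}_1^2\epsilon^{-2}\log(1/\eta))$ through a single i.i.d.\ Hoeffding bound. It also avoids an integrality issue: strictly, $N_a\propto\abs{c_a}$ means $N_a=\lceil N\abs{c_a}/\norm{\bm c}_1\rceil$, which can cost up to $A$ extra shots.

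The paper's stratified version, in turn, uses a fixed measurement schedule and never has larger variance than yours. With $p_a=\tr(P_a\rho)$, $N$ times its variance is $\norm{\bm c}_1\sum_a\abs{c_a}(1-p_a^2)$, while $N$ times yours is $\norm{\bm c}_1^2-\bigl(\sum_a c_ap_a\bigr)^2$. By Cauchy--Schwarz the first is at most the second, although this gain does not show up at the level of Hoeffding bounds.

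Your use of $\mathrm{sgn}(c_a)$ is fine, since Hermiticity of $O$ forces the $c_a$ to be real.
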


\begin{proof}
    Suppose we 
    measure each $P_a$ independently using $N_a$ samples of $\rho$, and let $\widehat{P}_a$ denote the corresponding sample mean. We define 
$\widehat o=\sum_a c_a\widehat P_a
$.  Since each measurement outcome lies in $\{-1, 1\}$, applying Hoeffding's inequality to the independent measurement outcomes  yields
    \begin{align*}
        \mathbb{P}\left( \abs{ \widehat{o} - \tr(O\rho)} \ge \epsilon \right)
        \le 2\exp\left( -\frac{\epsilon^2}{2\sum_a \frac{\abs{c_a}^2}{N_a}} \right).
    \end{align*}We prove the statement by
    choosing $N_a \propto \abs{c_a}$ yields $\sum_a \frac{\abs{c_a}^2}{N_a} = \frac{\left(\sum_a \abs{c_a}\right)^2}{N}$, where $N = \sum_a N_a$, the total sampling complexity.
\end{proof}

\subsection{The proofs of \cref{thm: main 1} and \cref{thm: main 2}}\label{Appendix: thm proofs}

\begin{proof}[Proof of \cref{thm: main 1}]
In \cref{alg: cano}, we replace the ideal Hamiltonian simulation in \cref{fig: GQSP} with the first-order Trotter circuit composed of Pauli exponentials from \cref{Trotter circuits}. Since we set a $s$-sector mixed Dicke state in \cref{def: Dicke} as the initial state, we can consider the corresponding effective Hamiltonian $\widetilde{H}(L,s)$ to the Trotter formula. By imposing the error condition
\begin{align}\label{eq: H error}
    \norm{\widetilde{H}(L,s) - H(L,s)}_2  \le \min\left[\frac{\delta\lambda_{L,s}}{2}, \frac{\log(1+\epsilon)}{\beta}\right],
\end{align}
and applying \cref{lem: p_order_generator}, we obtain $\Omega\left(\Lambda_{H(L,s)}^2\max\left[\frac{1}{\lambda_{L,s}\delta},\; \frac{\beta }{\log(1+\epsilon)}\right]\right)$ Trotter steps as stated in \cref{thm: main 1}. Here $\Lambda_{H(L,s)}$ is derived from the definition of $\Lambda_H$ in \cref{lem: p_order_generator} as we consider $H(L,s)$ not $H$. That is, $H(L,s)=\Pi_{s}H\Pi_s = \sum_{\gamma} \Pi_s H_\gamma \Pi_s$, and we define 
\begin{align}\label{eq: lambda_H_s}
    \Lambda_{H(L,s)}:= \sum_\gamma \norm{\Pi_s H_\gamma \Pi_s}.
\end{align}
With the above Trotter step number, the gate complexity is determined by multiplying the total query complexity of Hamiltonian simulations in \cref{cor: main} and $\mathcal{O}(L^4)$ gate complexity for each Trotter step for the model \eqref{ex: models}. Furthermore, we verify that this Trotter step number guarantees the desired precision for the partition function and observable estimation. 

Let $\widetilde{H}(L,s)$ denote the restriction of $\widetilde{H}$ to the particle sector $s$. Let $\widehat{Z}_{\text{poly}, s} = \tr\left(\Pi_s p_{\frac{\beta}{2}}\left(e^{i\frac{\pi \widetilde{H}}{2\lambda_{L,s}}}\right)^2\right)$ be the polynomial approximation of the partition function that incorporates the Trotter error. We observe that
\begin{align}\label{eq: Z analysis}
    \abs{\widehat{Z}_{\text{poly}, s} -Z_{L,s}(\beta)} &\le \abs{\widehat{Z}_{\text{poly}, s} -\tr(e^{-\beta \widetilde{H}(L,s)})} + \abs{\tr(e^{-\beta \widetilde{H}(L,s)})-\tr(e^{-\beta H(L,s)})} \nonumber \\
    &\le \epsilon \tr(e^{-\beta \widetilde{H}(L,s)}) + \abs{\sum_{j=1}^{d_s}\left(e^{-\beta(\lambda_j+\epsilon_j)} - e^{-\beta\lambda_j}\right)} \nonumber \\
    &\le \epsilon \tr(e^{-\beta H(L,s)}) + (1+\epsilon)\abs{\sum_{j=1}^{d_s}\left(e^{-\beta(\lambda_j+\epsilon_j)} - e^{-\beta\lambda_j}\right)} \nonumber \\
    &\le \epsilon \tr(e^{-\beta H(L,s)}) + (1+\epsilon)\sum_{j=1}^{d_s}e^{-\beta\lambda_j}\abs{e^{-\beta\epsilon_j} - 1} \nonumber \\
    &\le \left(\epsilon+\epsilon(1+\epsilon)\right)Z_{L,s}(\beta) = (2\epsilon + \epsilon^2)Z_{L,s}(\beta).
\end{align}The third inequality follows from the triangle inequality, and the final inequality holds by Weyl's inequality and the numeric identity $\abs{e^{ax}-1} \le \epsilon$ for $\abs{x} \le \frac{\log(1+\epsilon)}{a}$, guaranteed by our condition on the eigenvalue perturbations $\epsilon_j$ from \eqref{eq: H error}. 

Recalling that $\widehat{Z}_{L,s}(\beta)$ in \cref{alg: cano} is the statistical estimate of $\widehat{Z}_{\text{poly}, s}$ obtained from measurement data, we apply the technique for circuit sampling complexity from \cref{cor: main}. Specifically, estimating this quantity requires $\mathcal{O}\left(\frac{d_s e^{\beta\lambda_{L,s}}}{\widehat{Z}_{\text{poly},s}\epsilon^2}\log\frac{1}{\eta}\right)$ samples. Since $\widehat{Z}_{\text{poly},s}$ approximates $Z_{L,s}(\beta)$ to within a relative error of $\mathcal{O}(\epsilon)$, we obtain the desired sampling complexity $\mathcal{O}\left(\frac{d_s e^{\beta\lambda_{L,s}}}{Z_{L,s}(\beta)\epsilon^2}\log\frac{1}{\eta}\right)$.

Now we prove the statement regarding the estimation of canonical thermal averages. By \cref{cor: main}, the success probability of measuring the ancilla in $\ket{0}$ to prepare the approximate Gibbs state $\widetilde{\rho}$ scales as $\Theta\left(\frac{Z_{L,s}(\beta)}{d_s e^{\beta\lambda_{L,s}}}\right)$. In other words, we obtain an output of $\widetilde{\rho}$ with $\Theta\left(\frac{d_s e^{\beta\lambda_{L,s}}}{Z_{L,s}(\beta)}\right)$ circuit samplings. Consequently, estimating the expectation value of an observable $O$ via sample means in \cref{alg: cano} requires $\mathcal{O}\left(\left(\sum_a \abs{c}_a\right)^2\frac{d_s e^{\beta\lambda_{L,s}}}{Z_{L,s}(\beta)\epsilon^2}\log\frac{1}{\eta}\right)$ samples to reach an additive precision $\epsilon$, according to \cref{lem: observable}. The error bound on the observable expectation immediately follows from the  condition \eqref{eq: H error} evaluated via \cref{lem: thermal_observable_error}.
\end{proof}

\begin{proof}[Proof of \cref{thm: main 2}]
The proof follows straightforwardly from the arguments above by replacing the restricted initial state with the maximally-mixed state over the full Hilbert space, $\rho = \frac{I}{2^L}$, and utilizing an arbitrary first-order Trotter circuit. 
\end{proof}

\subsection{Further numerical tests}
Following standard conventions \cite{leblanc2015solutions}, in the Hamiltonian \eqref{eq: model hubbard}, we set $t=1$ as the unit of energy and take $t'=-0.2$ as in our numerical results in \cref{sec: numerical result}. We consider two interaction regimes: $U=4$ (weak-coupling) and $U=8$ (intermediate-coupling). The geometry is an $L_x \times L_y$ torus with $L_x=2$ and $L_y \in \{2, 4, 6, 8\}$ under periodic boundary conditions.

The sampling theorem in \cite[Supplemental Material, Theorem 4.11]{qprk-k7wn} applies when
\begin{equation}\label{eq: temp from qprk}
    \beta J \le \frac{1}{50R d_{\mathrm{ov}}^2},
\end{equation}
where \(R\) is the maximum Majorana support size, \(d_{\mathrm{ov}}\) is the maximum number of Hamiltonian terms whose Majorana supports overlap with that of a given term, and \(J\) is the maximum magnitude of a Majorana-string coefficient. We emphasize that \eqref{eq: temp from qprk} is a sufficient condition for the particular classical sampling algorithm of \cite{qprk-k7wn}, rather than a classical-hardness threshold. 

Expressing the spin-orbital operators in terms of Majoranas,
\[
c_{i\sigma}
=
\frac{\gamma_{i\sigma,1}+i\gamma_{i\sigma,2}}{2},
\]
the hopping terms give 2-Majorana strings, whereas the on-site interaction contains a 4-Majorana string. Therefore, $R=4$. We next determine \(d_{\mathrm{ov}}\) directly from overlaps of Majorana supports. Below, $d_{\mathrm{quartic}}$, $d_{\mathrm{quadratic}}$, and $d_{\mathrm{hopping}}$ denote the maximum overlap degrees of an on-site quartic Majorana term, an on-site quadratic Majorana term, and a hopping Majorana term, respectively, so that
$d_{\mathrm{ov}}=\max\{d_{\mathrm{quartic}},,d_{\mathrm{quadratic}},,d_{\mathrm{hopping}}\}$.

Let $z=z_1+z_2$ denote the number of distinct nearest- and next-nearest-neighbor sites connected to a given spatial site. An on-site quartic Majorana string overlaps with itself, the two on-site quadratic strings, and the \(4z\) hopping Majorana strings incident on that site. Hence,
\begin{equation}
d_{\mathrm{quartic}}\le 3+4z.
\end{equation}
An on-site quadratic string overlaps with itself, the on-site quartic string, and \(2z\) hopping strings involving the same spin, yielding
\begin{equation}
d_{\mathrm{quadratic}}\le 2+2z.
\end{equation}
Finally, a hopping Majorana string overlaps with the corresponding on-site quadratic and quartic strings at its two endpoints and with the hopping strings sharing either of its two Majorana operators, giving
\begin{equation}
d_{\mathrm{hopping}}\le 2z+3.
\end{equation}
Thus, the on-site quartic string determines the maximum overlap degree. For our model \eqref{eq: model hubbard} with \(2\times L_y\) geometries (\(L_y=4,6,8\)), periodic boundary conditions with duplicate bonds removed give \(z_1=3\) and \(z_2=2\), and hence
\begin{equation}
d_{\mathrm{ov}}\le 3+4(3+2)=23.
\end{equation}

After combining the quadratic terms originating from the interaction and chemical potential, the maximum Majorana-string coefficient is
\begin{equation}
J=
\max\left\{
\frac{|t|}{2},
\frac{|t'|}{2},
\frac{U}{4},
\left|\frac{\mu}{2}-\frac{U}{4}\right|
\right\}.
\end{equation}
For \(t=1\), \(t'=-0.2\), and the chemical potentials reported in \cref{tab:mus_combined}, which satisfy \(0<\mu<U\), this gives
\begin{equation}
J=
\begin{cases}
1, & U=4,\\
2, & U=8.
\end{cases}
\end{equation}
Using \(R=4\) and the uniform value \(d_{\mathrm{ov}}=23\), Eq.~\eqref{eq: temp from qprk} therefore gives
\begin{equation}
T=\beta^{-1}\ge 50Rd_{\mathrm{ov}}^2J
=105800J,
\end{equation}
corresponding to
\begin{equation}
T\ge105800 \quad (U=4),
\qquad
T\ge211600 \quad (U=8).
\end{equation}
Again, these values characterize a rigorous sufficient temperature regime for the algorithm in \cite{qprk-k7wn} and do not imply classical hardness below these temperatures. Table~\ref{tab:mus_combined} summarizes these temperature thresholds with chemical potentials computed in each interaction regime.

\begin{table}[h]
\centering
\caption{Values of the chemical potential $\mu$ for varying particle densities $n$ and temperatures $T$, alongside the theoretical temperature lower bounds $T = \frac{1}{\beta}$ for the sampling algorithm in \cite{qprk-k7wn}.}
\label{tab:mus_combined}
\vspace{0.2cm}
\begin{tabular}{c c | c c c | c c c}
\toprule
\multirow{4}{*}{$n$} & \multirow{4}{*}{$T$} & \multicolumn{6}{c}{\textbf{Chemical potential $\mu$}} \\ \cline{3-8} 
 & & \multicolumn{6}{c}{\textbf{Geometry}} \\ \cline{3-8} 
 & & \multicolumn{3}{c|}{$U=4$ \quad ($J=1$)} & \multicolumn{3}{c}{$U=8$ \quad ($J=2$)} \\ \cline{3-8} 
 & & $2 \times 4$ & $2 \times 6$ & $2 \times 8$ & $2 \times 4$ & $2 \times 6$ & $2 \times 8$ \\
\hline
\multirow{4}{*}{$0.875$} 
 & $0.01$ & $0.7348$ & $0.8664$ & $0.7764$ & $1.4751$ & $1.4213$ & $1.4368$ \\
 & $0.05$ & $0.7272$ & $0.8700$ & $0.7951$ & $1.4939$ & $1.4175$ & $1.4499$ \\
 & $0.1$  & $0.7385$ & $0.8929$ & $0.8221$ & $1.5186$ & $1.4324$ & $1.4642$ \\
 & $0.5$  & $0.9859$ & $0.9781$ & $0.9785$ & $1.5770$ & $1.5573$ & $1.5618$ \\
\hline
\multirow{4}{*}{$0.9$} 
 & $0.01$ & $0.7482$ & $0.8715$ & $0.9160$ & $1.4810$ & $1.4345$ & $1.5921$ \\
 & $0.05$ & $0.7658$ & $0.8987$ & $0.8868$ & $1.5213$ & $1.4671$ & $1.5785$ \\
 & $0.1$  & $0.8018$ & $0.9443$ & $0.9114$ & $1.5695$ & $1.5137$ & $1.5767$ \\
 & $0.5$  & $1.1459$ & $1.1338$ & $1.1327$ & $1.7714$ & $1.7614$ & $1.7656$ \\
\hline
\multirow{4}{*}{$0.925$} 
 & $0.01$ & $0.7572$ & $0.8759$ & $0.9277$ & $1.4870$ & $1.4585$ & $1.5970$ \\
 & $0.05$ & $0.8018$ & $0.9272$ & $0.9415$ & $1.5499$ & $1.5184$ & $1.6179$ \\
 & $0.1$  & $0.8654$ & $0.9982$ & $0.9924$ & $1.6235$ & $1.5956$ & $1.6619$ \\
 & $0.5$  & $1.3204$ & $1.3001$ & $1.2980$ & $1.9924$ & $1.9927$ & $1.9963$ \\
\hline
\multirow{4}{*}{$0.95$} 
 & $0.01$ & $0.7652$ & $0.8807$ & $0.9379$ & $1.4938$ & $1.4743$ & $1.6011$ \\
 & $0.05$ & $0.8384$ & $0.9592$ & $0.9901$ & $1.5823$ & $1.5689$ & $1.6500$ \\
 & $0.1$  & $0.9344$ & $1.0607$ & $1.0735$ & $1.6860$ & $1.6815$ & $1.7402$ \\
 & $0.5$  & $1.5127$ & $1.4787$ & $1.4761$ & $2.2666$ & $2.2771$ & $2.2801$ \\
\hline
\multirow{4}{*}{$0.975$} 
 & $0.01$ & $0.7749$ & $0.8874$ & $0.9495$ & $1.5029$ & $1.4866$ & $1.6060$ \\
 & $0.05$ & $0.8859$ & $1.0038$ & $1.0466$ & $1.6274$ & $1.6251$ & $1.6896$ \\
 & $0.1$  & $1.0270$ & $1.1493$ & $1.1754$ & $1.7745$ & $1.7869$ & $1.8375$ \\
 & $0.5$  & $1.7233$ & $1.6694$ & $1.6673$ & $2.6772$ & $2.6974$ & $2.6995$ \\
\hline
\multirow{4}{*}{$1$} 
 & $0.01$ & $1.5385$ & $1.1671$ & $1.2101$ & $5.9802$ & $2.9838$ & $2.1328$ \\
 & $0.05$ & $1.8725$ & $1.8406$ & $1.7852$ & $3.0724$ & $3.1609$ & $3.0281$ \\
 & $0.1$  & $1.9082$ & $1.8433$ & $1.8020$ & $3.9015$ & $3.9231$ & $3.9129$ \\
 & $0.5$  & $1.9439$ & $1.8672$ & $1.8671$ & $3.9666$ & $3.9846$ & $3.9780$ \\
\hline
\multicolumn{2}{c|}{\textbf{Lower bound ($T \ge$)}} & \multicolumn{3}{c|}{$\mathbf{105,800}$} & \multicolumn{3}{c}{$\mathbf{211,600}$} \\
\hline
\end{tabular}
\end{table}

\begin{figure}[htbp]
    

    \begin{center}
   \begin{subfigure}[b]{0.45\textwidth}
    \centering
    \includegraphics[width=\textwidth]{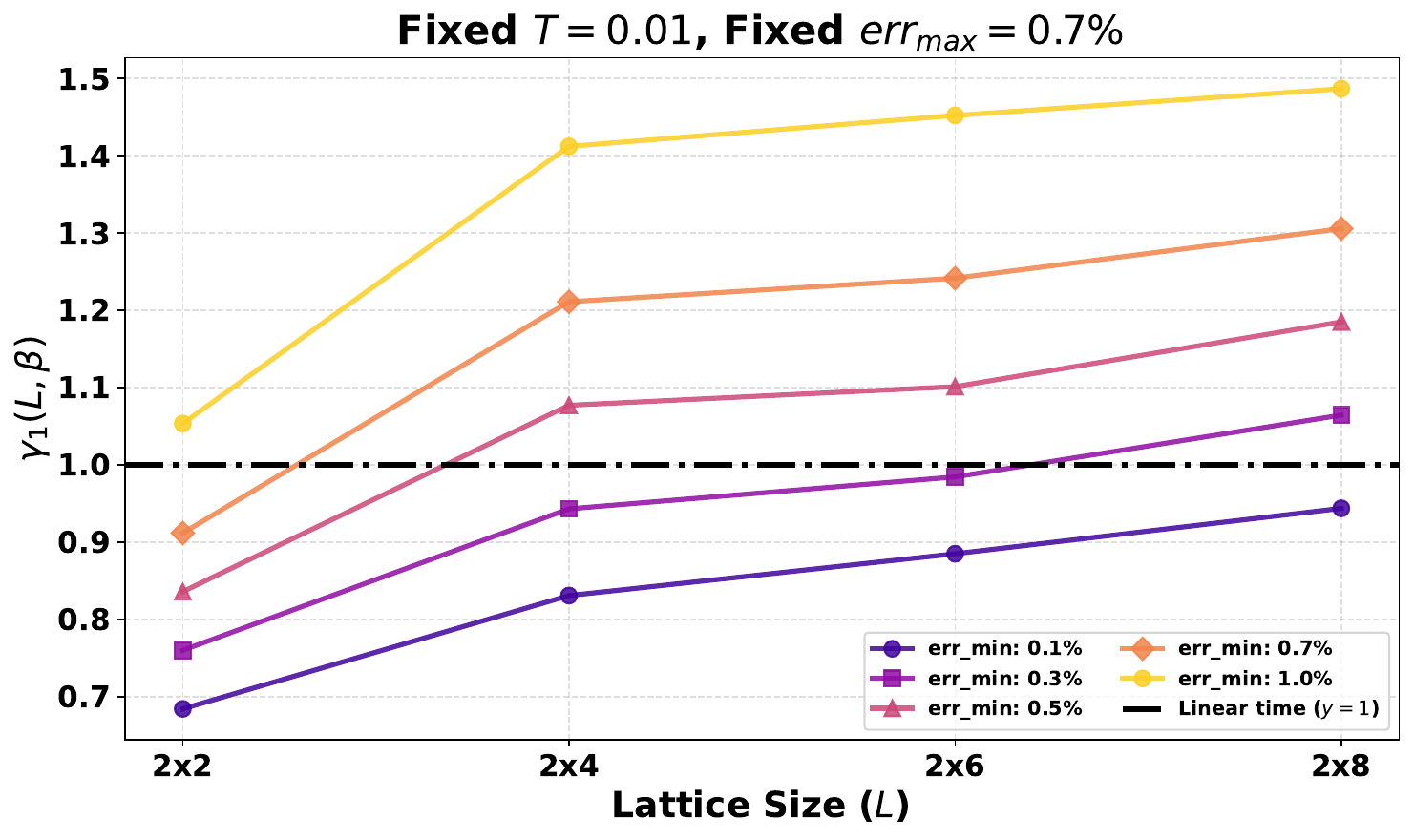}
    \end{subfigure}
    \begin{subfigure}[b]{0.45\textwidth}
    \centering
    \includegraphics[width=\textwidth]{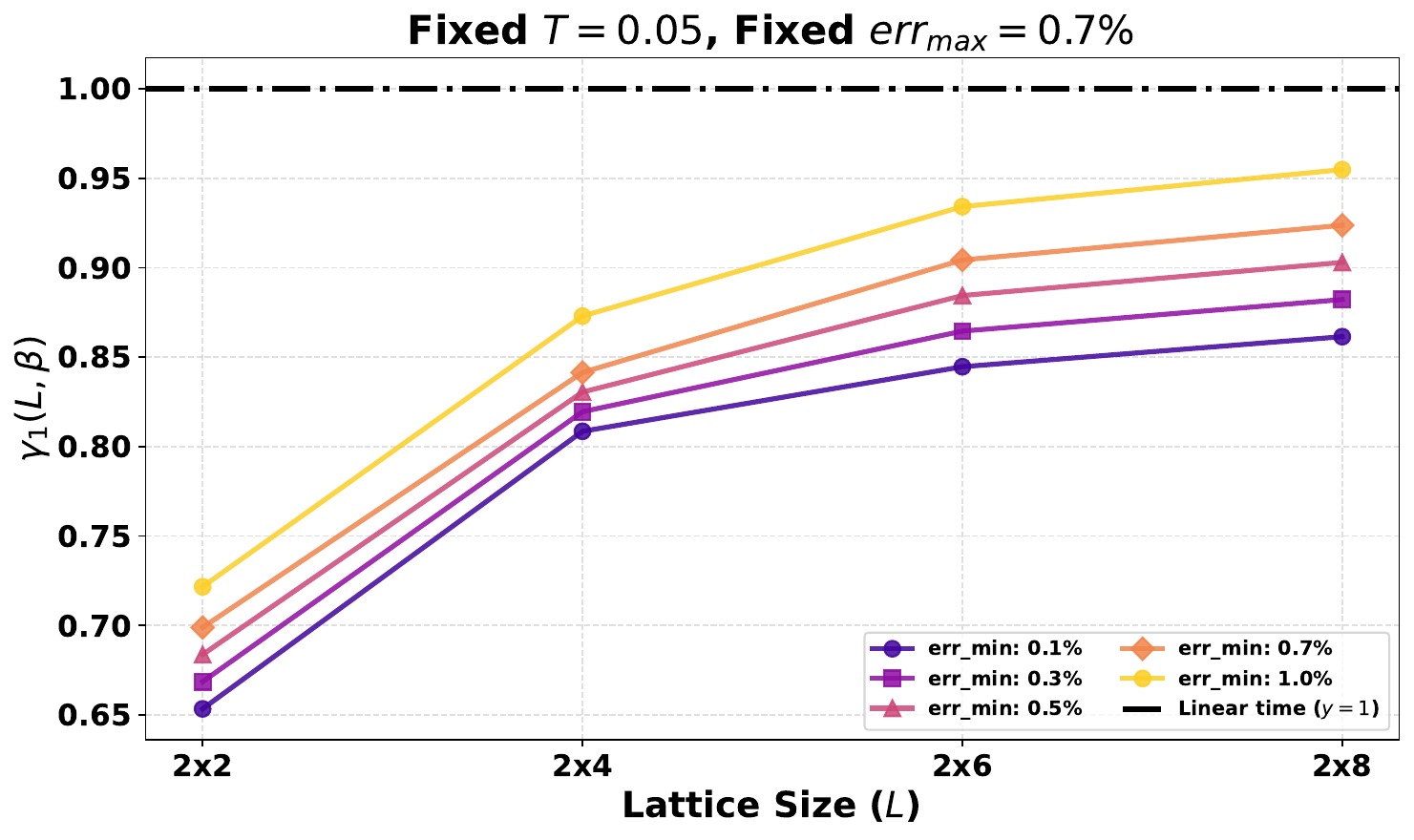}
    \end{subfigure}
    \begin{subfigure}[b]{0.45\textwidth}
    \centering
    \includegraphics[width=\textwidth]{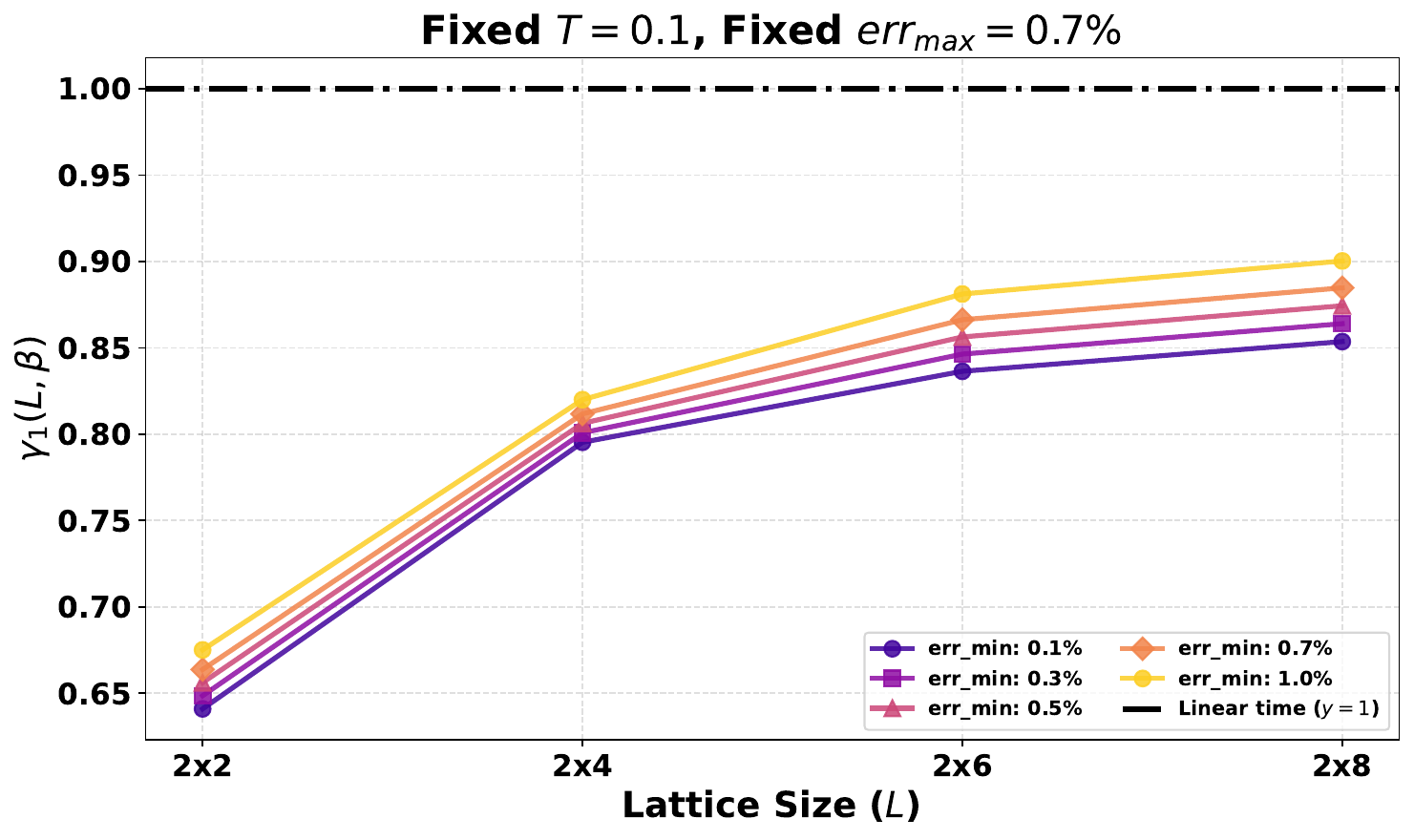}
    \end{subfigure}
    \begin{subfigure}[b]{0.45\textwidth}
    \centering
    \includegraphics[width=\textwidth]{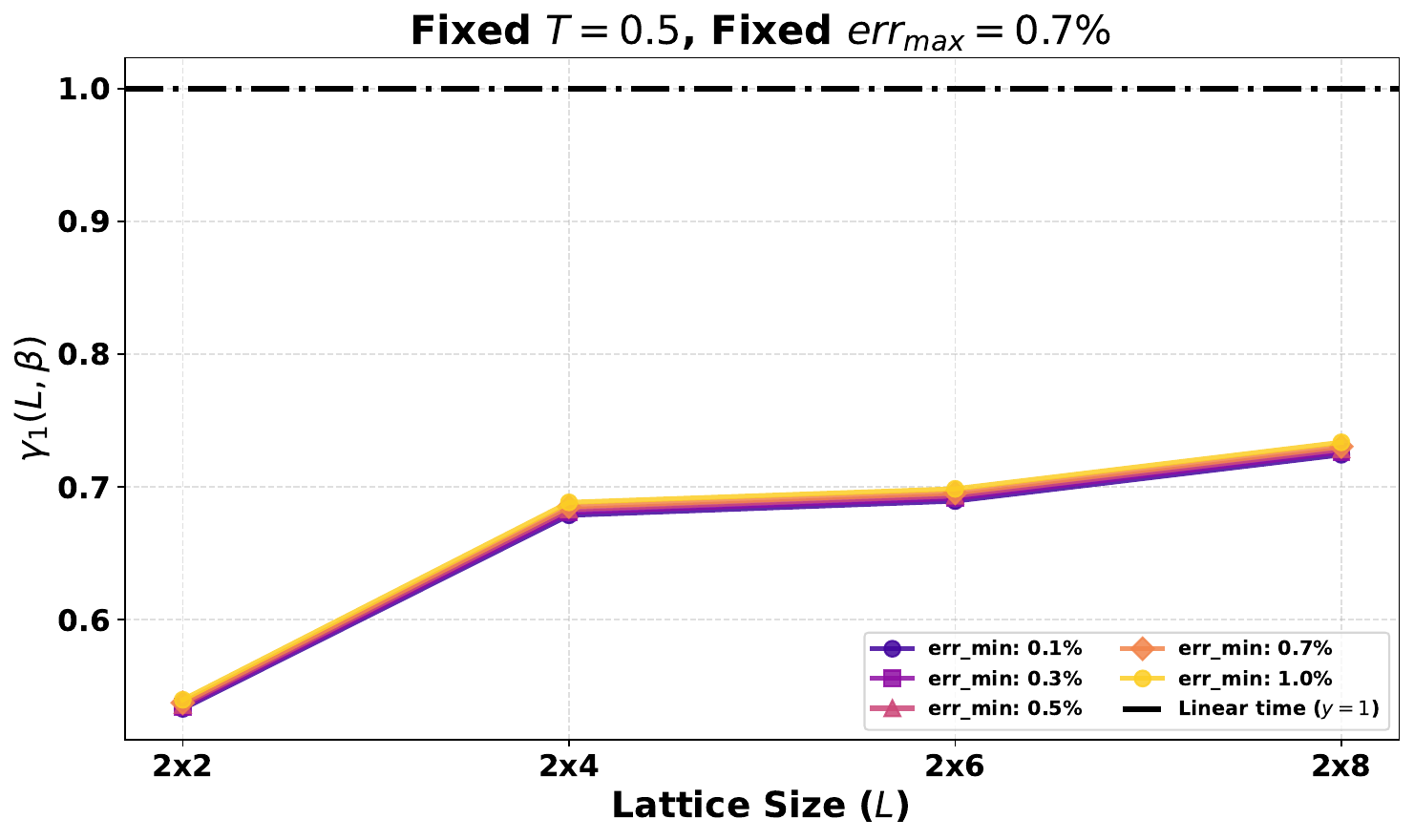}
    \end{subfigure}

\end{center}

    \caption{Simulation results illustrating the impact of the ground state energy relative error ($\epsilon_{\min}$) on the figure of merit $\gamma_1(L,\beta)$ for the proposed grand-canonical scheme in the weak-coupling regime ($U=4$). The figure of merit $\gamma_1$ is plotted as a function of system size ($L$) for varying relative errors $\epsilon_{\min} \in \{0.001, 0.003, 0.005, 0.007, 0.01\}$.   Results are obtained for different temperatures $T=0.01,0.05,0.1,0.5$.}
    \label{fig: g2err1}
\end{figure}

\begin{figure}[htbp]
    

    \begin{center}
   \begin{subfigure}[b]{0.45\textwidth}
    \centering
    \includegraphics[width=\textwidth]{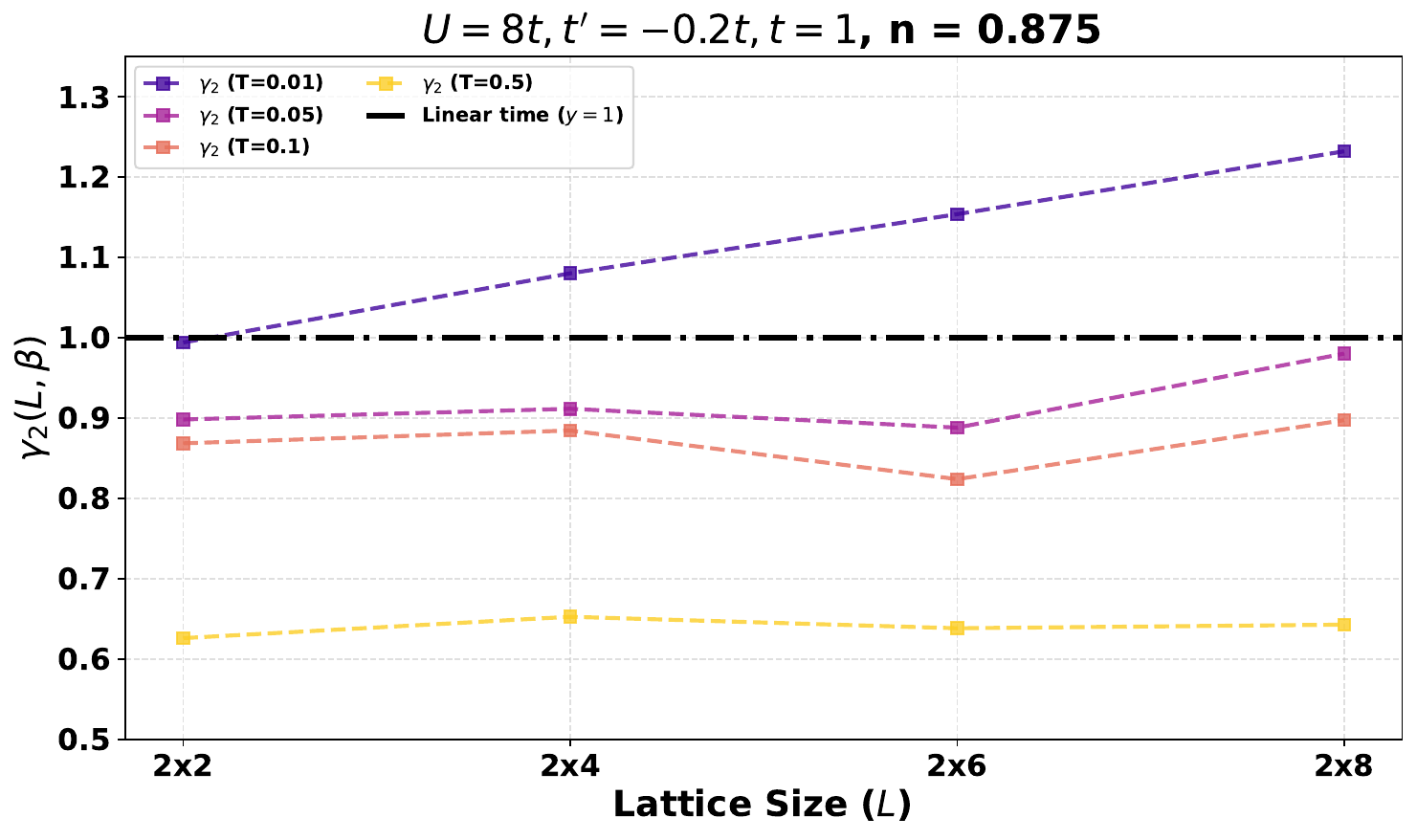}
    \end{subfigure}
    \begin{subfigure}[b]{0.45\textwidth}
    \centering
    \includegraphics[width=\textwidth]{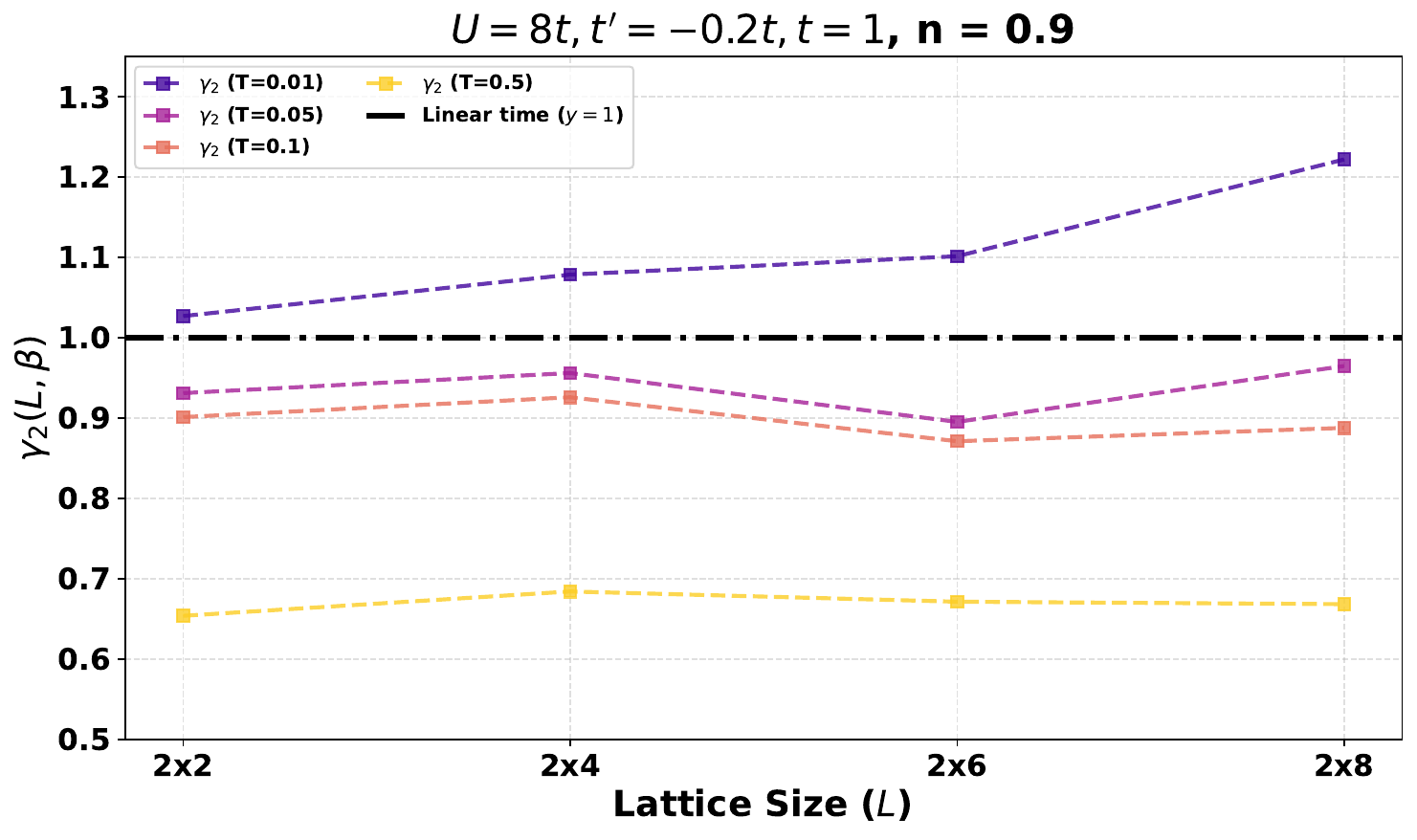}
    \end{subfigure}
    \begin{subfigure}[b]{0.45\textwidth}
    \centering
    \includegraphics[width=\textwidth]{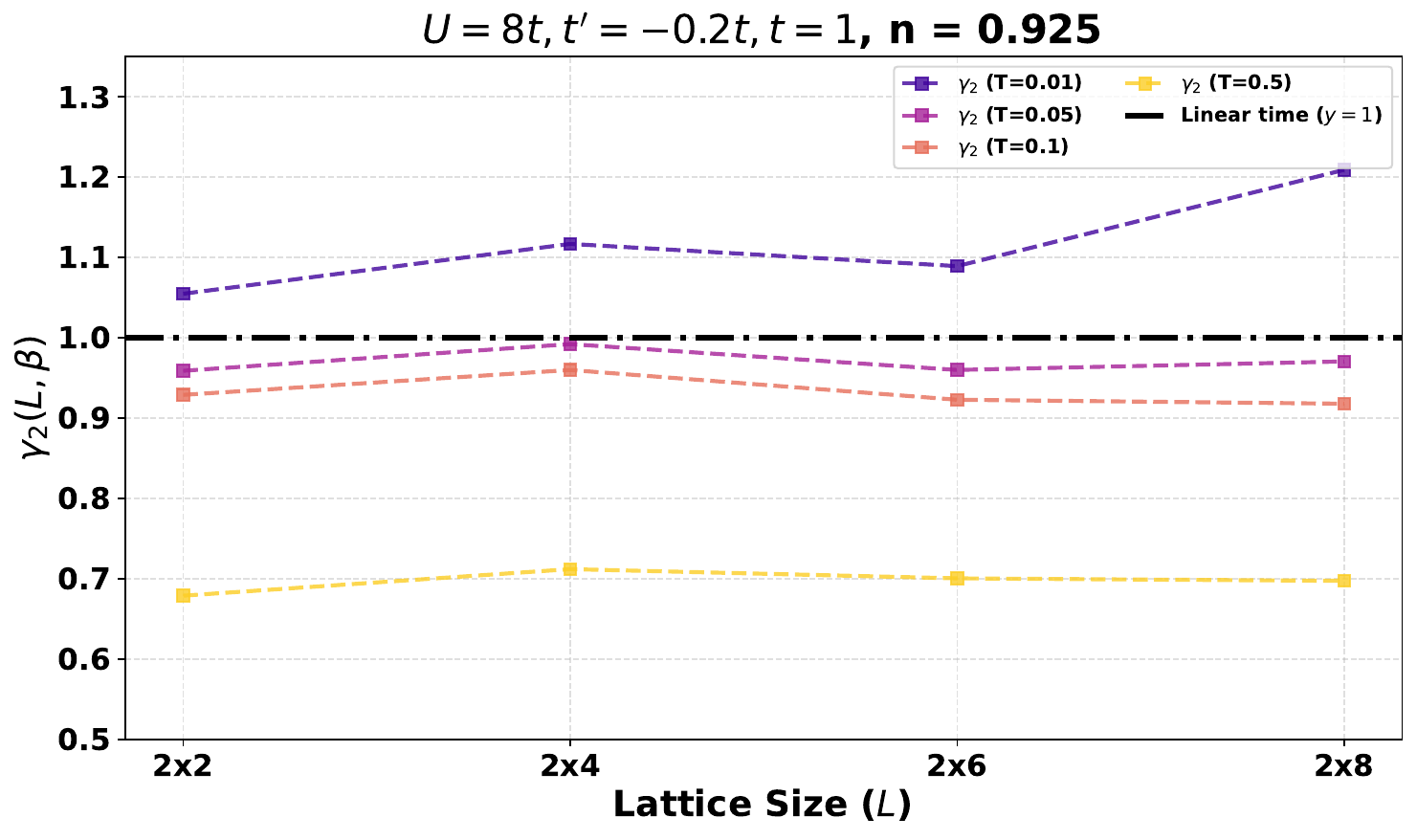}
    \end{subfigure}
    \begin{subfigure}[b]{0.45\textwidth}
    \centering
    \includegraphics[width=\textwidth]{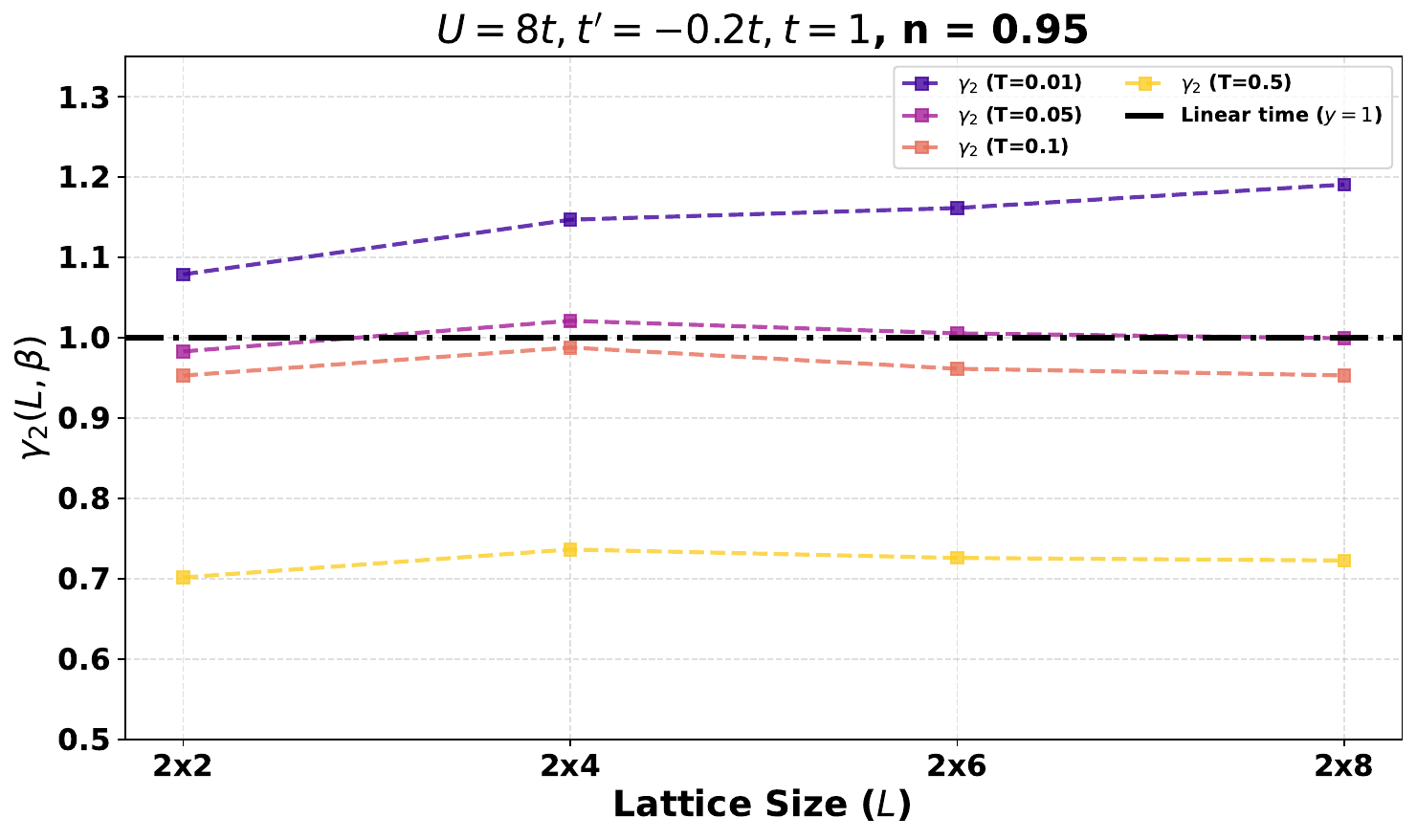}
    \end{subfigure}

    \begin{subfigure}[b]{0.45\textwidth}
    \centering
    \includegraphics[width=\textwidth]{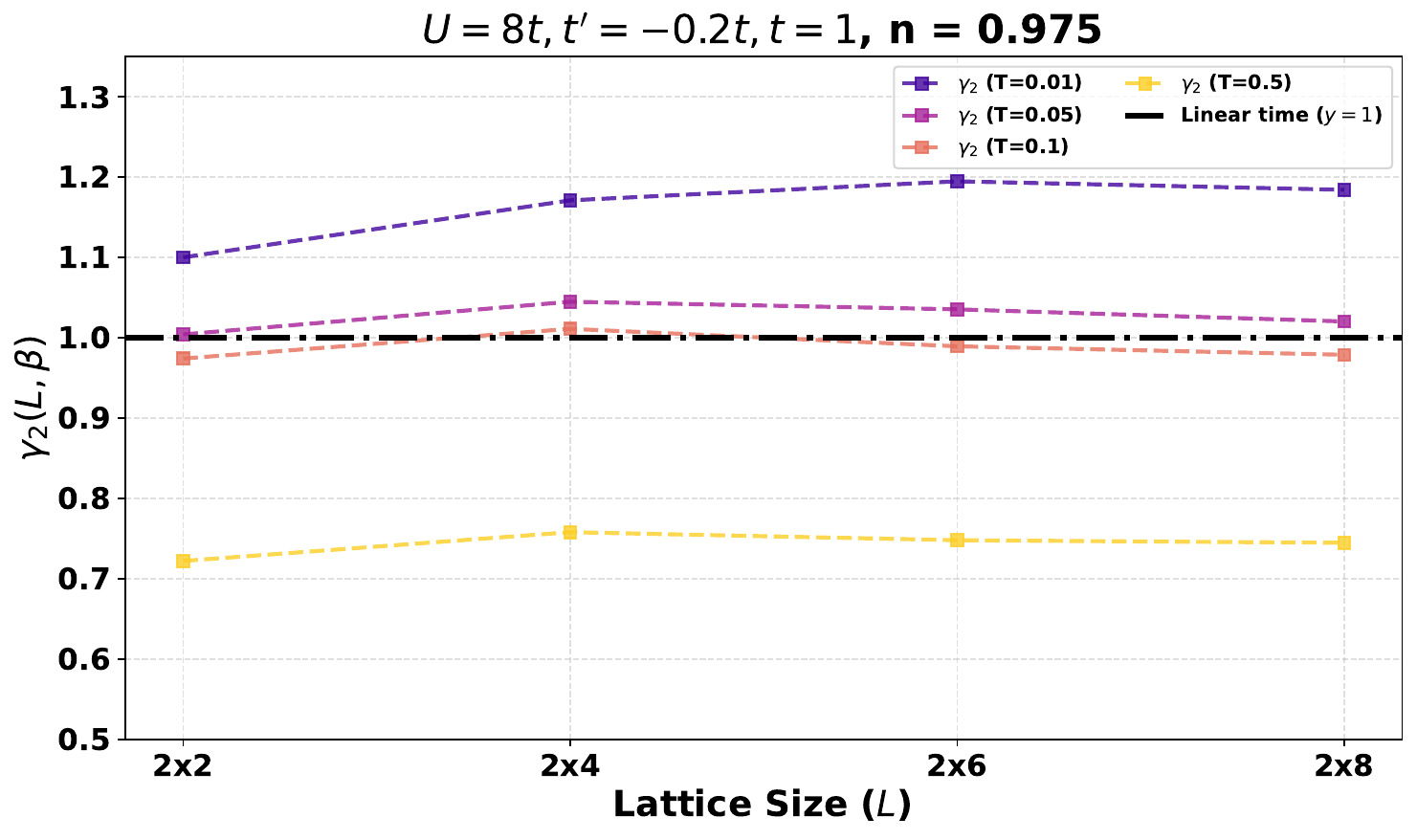}
    \end{subfigure}
    \begin{subfigure}[b]{0.45\textwidth}
    \centering
    \includegraphics[width=\textwidth]{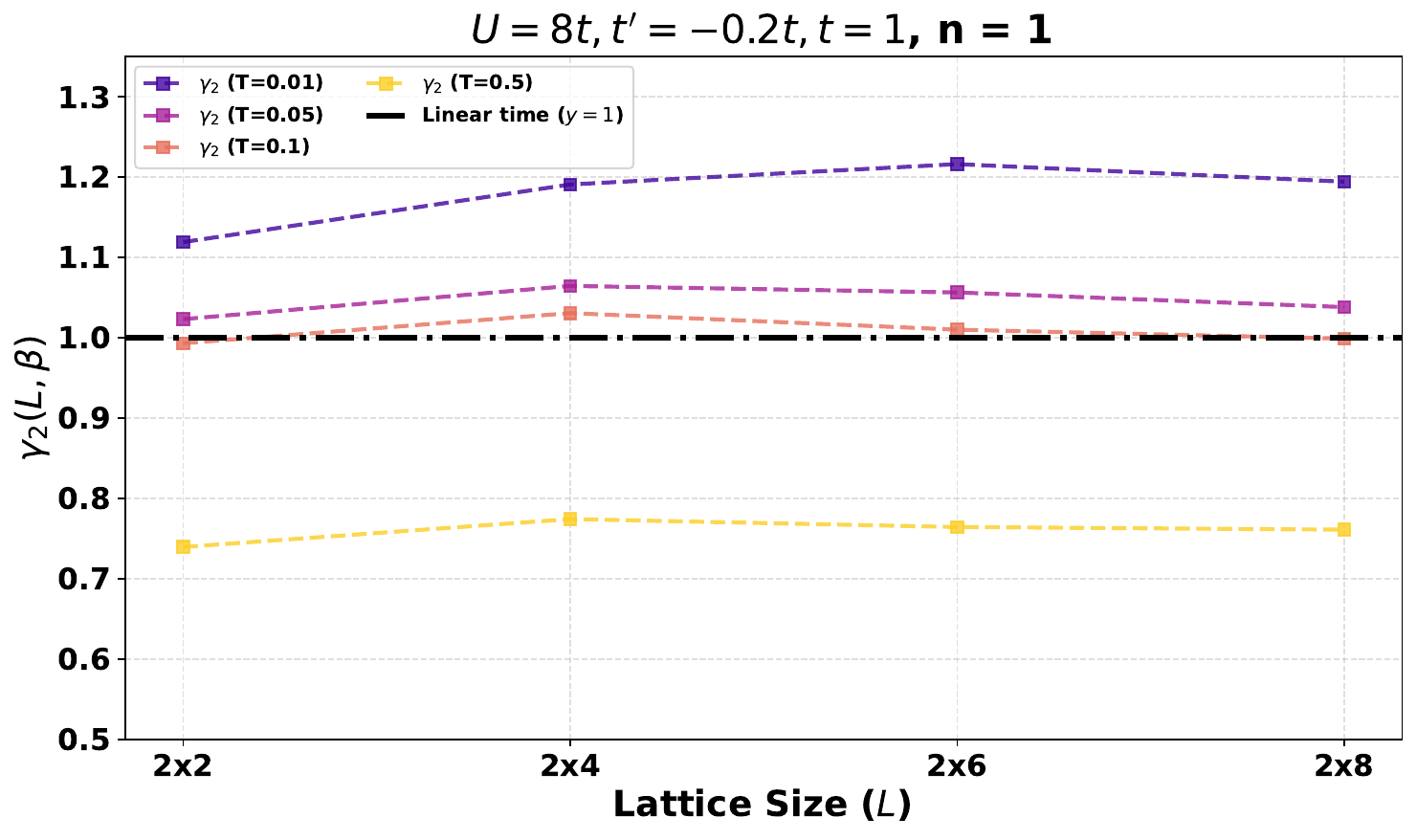}
    \end{subfigure}

\end{center}

    \caption{Simulation results evaluating the figure of merit $\gamma_2(L,\beta)$ for the conventional grand-canonical approach (\cref{alg: grand cano prev}) in the intermediate-coupling regime ($U=8$). The figure of merit $\gamma_2$ is plotted as a function of system size ($L$) across different temperatures ($T \in \{0.01, 0.05, 0.1, 0.5\}$). Results are obtained for electron densities $n=0.875, 0.9, 0.925, 0.95, 0.975$, and $1.0$, respectively.}
    \label{fig: g1U8}
\end{figure}

\bibliographystyle{unsrt}
\bibliography{ref}

@article{watson2025exponentially,
  title={Exponentially reduced circuit depths using trotter error mitigation},
  author={Watson, James D and Watkins, Jacob},
  journal={PRX quantum},
  volume={6},
  number={3},
  pages={030325},
  year={2025},
  publisher={APS}
}

@article{mazziotti2011large,
  title={Large-scale semidefinite programming for many-electron quantum mechanics},
  author={Mazziotti, David A},
  journal={Physical review letters},
  volume={106},
  number={8},
  pages={083001},
  year={2011},
  publisher={APS}
}

@article{baumgratz2012lower,
  title={Lower bounds for ground states of condensed matter systems},
  author={Baumgratz, Tillmann and Plenio, Martin B},
  journal={New Journal of Physics},
  volume={14},
  number={2},
  pages={023027},
  year={2012},
  publisher={IOP Publishing}
}

@article{goh2026direct,
  title={Direct estimation of the density of states for fermionic systems},
  author={Goh, Matthew L and Koczor, B{\'a}lint},
  journal={Physical Review Research},
  volume={8},
  number={1},
  pages={013250},
  year={2026},
  publisher={APS}
}

@inproceedings{ko2024ground,
  title={Ground-state energy and related properties estimation in quantum chemistry with linear dependence on the number of atoms},
  author={Ko, Taehee and Li, Xiantao and Wang, Chunhao},
  booktitle={2024 IEEE International Conference on Quantum Computing and Engineering (QCE)},
  volume={1},
  pages={669--680},
  year={2024},
  organization={IEEE}
}

@article{PhysRevB.49.5065,
  title = {Lanczos method for the calculation of finite-temperature quantities in correlated systems},
  author = {Jakli\ifmmode \check{c}\else \v{c}\fi{}, J. and Prelov\ifmmode \check{s}\else \v{s}\fi{}ek, P.},
  journal = {Phys. Rev. B},
  volume = {49},
  issue = {7},
  pages = {5065(R)--5068(R)},
  numpages = {0},
  year = {1994},
  month = {Feb},
  publisher = {American Physical Society},
  doi = {10.1103/PhysRevB.49.5065},
  url = {https://link.aps.org/doi/10.1103/PhysRevB.49.5065}
}

@article{qprk-k7wn,
  title = {High-Temperature Fermionic Gibbs States Are Mixtures of Gaussian States},
  author = {Ramkumar, Akshar and Cai, Yiyi and Tong, Yu and Jiang, Jiaqing},
  journal = {Phys. Rev. Lett.},
  volume = {137},
  issue = {2},
  pages = {020601},
  numpages = {7},
  year = {2026},
  month = {Jul},
  publisher = {American Physical Society},
  doi = {10.1103/qprk-k7wn},
  url = {https://link.aps.org/doi/10.1103/qprk-k7wn}
}

@article{epperly2024xtrace,
  title={Xtrace: Making the most of every sample in stochastic trace estimation},
  author={Epperly, Ethan N and Tropp, Joel A and Webber, Robert J},
  journal={SIAM Journal on Matrix Analysis and Applications},
  volume={45},
  number={1},
  pages={1--23},
  year={2024},
  publisher={SIAM}
}

@article{chakraborty2025quantum,
  title={Quantum singular value transformation without block encodings: Near-optimal complexity with minimal ancilla},
  author={Chakraborty, Shantanav and Hazra, Soumyabrata and Li, Tongyang and Shao, Changpeng and Wang, Xinzhao and Zhang, Yuxin},
  journal={arXiv preprint arXiv:2504.02385},
  year={2025}
}

@article{yoneta2025optimal,
  title={Optimal statistical ensembles for quantum thermal state preparation within the quantum singular value transformation framework},
  author={Yoneta, Yasushi},
  journal={arXiv preprint arXiv:2505.06216},
  year={2025}
}

@article{kasper2017implementing,
  title={Implementing quantum electrodynamics with ultracold atomic systems},
  author={Kasper, Valentin and Hebenstreit, Florian and Jendrzejewski, Fred and Oberthaler, Markus K and Berges, J{\"u}rgen},
  journal={New journal of physics},
  volume={19},
  number={2},
  pages={023030},
  year={2017},
  publisher={IOP Publishing}
}

@article{cao2017unified,
  title={A unified ab initio approach to the correlated quantum dynamics of ultracold fermionic and bosonic mixtures},
  author={Cao, Lushuai and Bolsinger, V and Mistakidis, Simeon I and Koutentakis, GM and Kr{\"o}nke, Sven and Schurer, JM and Schmelcher, Peter},
  journal={The Journal of chemical physics},
  volume={147},
  number={4},
  year={2017},
  publisher={AIP Publishing}
}

@article{chen2023quantum,
  title={Quantum thermal state preparation},
  author={Chen, Chi-Fang and Kastoryano, Michael J and Brand{\~a}o, Fernando GSL and Gily{\'e}n, Andr{\'a}s},
  journal={arXiv preprint arXiv:2303.18224},
  year={2023}
}

@article{tong2025fast,
  title={Fast mixing of weakly interacting fermionic systems at any temperature},
  author={Tong, Yu and Zhan, Yongtao},
  journal={PRX Quantum},
  volume={6},
  number={3},
  pages={030301},
  year={2025},
  publisher={APS}
}

@article{nocera2016symmetry,
  title={Symmetry-conserving purification of quantum states within the density matrix renormalization group},
  author={Nocera, Alberto and Alvarez, Gonzalo},
  journal={Physical Review B},
  volume={93},
  number={4},
  pages={045137},
  year={2016},
  publisher={APS}
}

@article{barthel2016matrix,
  title={Matrix product purifications for canonical ensembles and quantum number distributions},
  author={Barthel, Thomas},
  journal={Physical Review B},
  volume={94},
  number={11},
  pages={115157},
  year={2016},
  publisher={APS}
}

@article{prelovvsek2013ground,
  title={Ground state and finite temperature Lanczos methods},
  author={Prelov{\v{s}}ek, P and Bon{\v{c}}a, J},
  journal={Strongly correlated systems: Numerical methods},
  pages={1--30},
  year={2013},
  publisher={Springer}
}

@article{petras2020using,
  title={Using density matrix quantum Monte Carlo for calculating exact-on-average energies for ab initio Hamiltonians in a finite basis set},
  author={Petras, Hayley R and Ramadugu, Sai Kumar and Malone, Fionn D and Shepherd, James J},
  journal={Journal of chemical theory and computation},
  volume={16},
  number={2},
  pages={1029--1038},
  year={2020},
  publisher={ACS Publications}
}

@article{xie2021twisted,
  title={Twisted bilayer graphene. VI. An exact diagonalization study at nonzero integer filling},
  author={Xie, Fang and Cowsik, Aditya and Song, Zhi-Da and Lian, Biao and Bernevig, B Andrei and Regnault, Nicolas},
  journal={Physical Review B},
  volume={103},
  number={20},
  pages={205416},
  year={2021},
  publisher={APS}
}

@book{nielsen2010quantum,
  title={Quantum computation and quantum information},
  author={Nielsen, Michael A and Chuang, Isaac L},
  year={2010},
  publisher={Cambridge university press}
}

@article{dong2022ground,
  title={Ground-state preparation and energy estimation on early fault-tolerant quantum computers via quantum eigenvalue transformation of unitary matrices},
  author={Dong, Yulong and Lin, Lin and Tong, Yu},
  journal={PRX quantum},
  volume={3},
  number={4},
  pages={040305},
  year={2022},
  publisher={APS}
}

@article{shen2023stable,
  title={Stable recursive auxiliary field quantum Monte Carlo algorithm in the canonical ensemble: Applications to thermometry and the Hubbard model},
  author={Shen, Tong and Barghathi, Hatem and Yu, Jiangyong and Del Maestro, Adrian and Rubenstein, Brenda M},
  journal={Physical Review E},
  volume={107},
  number={5},
  pages={055302},
  year={2023},
  publisher={APS}
}

@article{sokolov2021microcanonical,
  title={Microcanonical and finite-temperature ab initio molecular dynamics simulations on quantum computers},
  author={Sokolov, Igor O and Barkoutsos, Panagiotis Kl and Moeller, Lukas and Suchsland, Philippe and Mazzola, Guglielmo and Tavernelli, Ivano},
  journal={Physical Review Research},
  volume={3},
  number={1},
  pages={013125},
  year={2021},
  publisher={APS}
}

@article{harsha2020wave,
  title={Wave function methods for canonical ensemble thermal averages in correlated many-fermion systems},
  author={Harsha, Gaurav and Henderson, Thomas M and Scuseria, Gustavo E},
  journal={The Journal of Chemical Physics},
  volume={153},
  number={12},
  year={2020},
  publisher={AIP Publishing}
}

@article{li2025dissipative,
  title={Dissipative ground state preparation in ab initio electronic structure theory},
  author={Li, Hao-En and Zhan, Yongtao and Lin, Lin},
  journal={npj Quantum Information},
  volume={11},
  number={1},
  pages={183},
  year={2025},
  publisher={Nature Publishing Group UK London}
}

@article{holmes2022quantum,
  title={Quantum algorithms from fluctuation theorems: Thermal-state preparation},
  author={Holmes, Zoe and Muraleedharan, Gopikrishnan and Somma, Rolando D and Subasi, Yigit and {\c{S}}ahino{\u{g}}lu, Burak},
  journal={Quantum},
  volume={6},
  pages={825},
  year={2022},
  publisher={Verein zur F{\"o}rderung des Open Access Publizierens in den Quantenwissenschaften}
}

@article{coopmans2023predicting,
  title={Predicting Gibbs-state expectation values with pure thermal shadows},
  author={Coopmans, Luuk and Kikuchi, Yuta and Benedetti, Marcello},
  journal={PRX Quantum},
  volume={4},
  number={1},
  pages={010305},
  year={2023},
  publisher={APS}
}

@article{shen2020finite,
  title={Finite temperature auxiliary field quantum Monte Carlo in the canonical ensemble},
  author={Shen, Tong and Liu, Yuan and Yu, Yang and Rubenstein, Brenda M},
  journal={The Journal of Chemical Physics},
  volume={153},
  number={20},
  year={2020},
  publisher={AIP Publishing}
}

@inproceedings{gilyen2019quantum,
  title={Quantum singular value transformation and beyond: exponential improvements for quantum matrix arithmetics},
  author={Gily{\'e}n, Andr{\'a}s and Su, Yuan and Low, Guang Hao and Wiebe, Nathan},
  booktitle={Proceedings of the 51st annual ACM SIGACT symposium on theory of computing},
  pages={193--204},
  year={2019}
}

@article{liu2018ab,
  title={Ab initio finite temperature auxiliary field quantum Monte Carlo},
  author={Liu, Yuan and Cho, Minsik and Rubenstein, Brenda},
  journal={Journal of Chemical Theory and Computation},
  volume={14},
  number={9},
  pages={4722--4732},
  year={2018},
  publisher={ACS Publications}
}

@article{wang2024qubit,
  title={Qubit-efficient randomized quantum algorithms for linear algebra},
  author={Wang, Samson and McArdle, Sam and Berta, Mario},
  journal={PRX quantum},
  volume={5},
  number={2},
  pages={020324},
  year={2024},
  publisher={APS}
}

@article{liu2025block,
  title={Block encoding with low gate count for second-quantized hamiltonians},
  author={Liu, Diyi and Zhu, Shuchen and Low, Guang Hao and Lin, Lin and Yang, Chao},
  journal={arXiv preprint arXiv:2510.08644},
  year={2025}
}

@inproceedings{sandvik2010computational,
  title={Computational studies of quantum spin systems},
  author={Sandvik, Anders W},
  booktitle={AIP Conference Proceedings},
  volume={1297},
  number={1},
  pages={135--338},
  year={2010},
  organization={American Institute of Physics}
}

@article{gull2011continuous,
  title={Continuous-time Monte Carlo methods for quantum impurity models},
  author={Gull, Emanuel and Millis, Andrew J and Lichtenstein, Alexander I and Rubtsov, Alexey N and Troyer, Matthias and Werner, Philipp},
  journal={Reviews of modern physics},
  volume={83},
  number={2},
  pages={349--404},
  year={2011},
  publisher={APS}
}

@article{chen2025convergence,
  title={Convergence of the Cumulant Expansion and Polynomial-Time Algorithm for Weakly Interacting Fermions},
  author={Chen, Hongrui and Rouz{\'e}, Cambyse and Chen, Jielun and Jiang, Jiaqing and Scalet, Samuel O and Zhan, Yongtao and Chan, Garnet Kin and Ying, Lexing and Tong, Yu},
  journal={arXiv preprint arXiv:2512.12010},
  year={2025}
}

@article{de2024partition,
  title={Partition function estimation with a quantum coin toss},
  author={de Lima Silva, Thais and Borges, Lucas and Aolita, Leandro},
  journal={arXiv e-prints},
  pages={arXiv--2411},
  year={2024}
}

@article{aichhorn2003low,
  title={Low-temperature Lanczos method for strongly correlated systems},
  author={Aichhorn, Markus and Daghofer, Maria and Evertz, Hans Gerd and Von der Linden, Wolfgang},
  journal={Physical Review B},
  volume={67},
  number={16},
  pages={161103},
  year={2003},
  publisher={APS}
}

@article{leblanc2015solutions,
  title={Solutions of the two-dimensional Hubbard model: Benchmarks and results from a wide range of numerical algorithms},
  author={LeBlanc, James PF and Antipov, Andrey E and Becca, Federico and Bulik, Ireneusz W and Chan, Garnet Kin-Lic and Chung, Chia-Min and Deng, Youjin and Ferrero, Michel and Henderson, Thomas M and Jim{\'e}nez-Hoyos, Carlos A and others},
  journal={Physical Review X},
  volume={5},
  number={4},
  pages={041041},
  year={2015},
  publisher={APS}
}

@article{chen2023krylov,
  title={Krylov-aware stochastic trace estimation},
  author={Chen, Tyler and Hallman, Eric},
  journal={SIAM Journal on Matrix Analysis and Applications},
  volume={44},
  number={3},
  pages={1218--1244},
  year={2023},
  publisher={SIAM}
}

@article{avron2011randomized,
  title={Randomized algorithms for estimating the trace of an implicit symmetric positive semi-definite matrix},
  author={Avron, Haim and Toledo, Sivan},
  journal={Journal of the ACM (JACM)},
  volume={58},
  number={2},
  pages={1--34},
  year={2011},
  publisher={ACM New York, NY, USA}
}

@article{romero2019strategies,
  title={Strategies for quantum computing molecular energies using the unitary coupled cluster ansatz},
  author={Romero, Jonathan and Babbush, Ryan and McClean, Jarrod R and Hempel, Cornelius and Love, Peter J and Aspuru-Guzik, Al{\'a}n},
  journal={Quantum Science and Technology},
  volume={4},
  number={1},
  pages={014008},
  year={2019},
  publisher={IOP Publishing}
}

@article{rouze2026optimal,
  title={Optimal quantum algorithm for Gibbs state preparation},
  author={Rouz{\'e}, Cambyse and Stilck Fran{\c{c}}a, Daniel and Alhambra, {\'A}lvaro M},
  journal={Physical Review Letters},
  volume={136},
  number={6},
  pages={060601},
  year={2026},
  publisher={APS}
}

@article{vsmid2025polynomial,
  title={Polynomial-time quantum Gibbs sampling for the weak and strong coupling regime of the Fermi-Hubbard model at any temperature},
  author={{\v{S}}m{\'\i}d, {\v{S}}t{\v{e}}p{\'a}n and Meister, Richard and Berta, Mario and Bondesan, Roberto},
  journal={Nature Communications},
  volume={16},
  number={1},
  pages={10736},
  year={2025},
  publisher={Nature Publishing Group UK London}
}

@article{chowdhury2021computing,
  title={Computing partition functions in the one-clean-qubit model},
  author={Chowdhury, Anirban N and Somma, Rolando D and Suba{\c{s}}{\i}, Yi{\u{g}}it},
  journal={Physical Review A},
  volume={103},
  number={3},
  pages={032422},
  year={2021},
  publisher={APS}
}

@article{bravyi2022quantum,
  title={Quantum Hamiltonian complexity in thermal equilibrium},
  author={Bravyi, Sergey and Chowdhury, Anirban and Gosset, David and Wocjan, Pawel},
  journal={Nature Physics},
  volume={18},
  number={11},
  pages={1367--1370},
  year={2022},
  publisher={Nature Publishing Group UK London}
}

@inproceedings{bartschi2019deterministic,
  title={Deterministic preparation of Dicke states},
  author={B{\"a}rtschi, Andreas and Eidenbenz, Stephan},
  booktitle={International Symposium on Fundamentals of Computation Theory},
  pages={126--139},
  year={2019},
  organization={Springer}
}

@article{motlagh2024generalized,
  title={Generalized quantum signal processing},
  author={Motlagh, Danial and Wiebe, Nathan},
  journal={PRX Quantum},
  volume={5},
  number={2},
  pages={020368},
  year={2024},
  publisher={APS}
}

@article{tosta2024randomized,
  title={Randomized semi-quantum matrix processing},
  author={Tosta, Allan and de Lima Silva, Thais and Camilo, Giancarlo and Aolita, Leandro},
  journal={npj Quantum Information},
  volume={10},
  number={1},
  pages={93},
  year={2024},
  publisher={Nature Publishing Group UK London}
}

@inproceedings{van2017quantum,
  title={Quantum SDP-solvers: Better upper and lower bounds},
  author={Van Apeldoorn, Joran and Gily{\'e}n, Andr{\'a}s and Gribling, Sander and de Wolf, Ronald},
  booktitle={2017 IEEE 58th Annual Symposium on Foundations of Computer Science (FOCS)},
  pages={403--414},
  year={2017},
  organization={IEEE}
}

\end{document}